\newtheorem{definition}{\textbf{Definition}}
\newtheorem{lemma}{\textbf{Lemma}}
\newtheorem{theorem}{\textbf{Theorem}}
\newtheorem{proposition}{\textbf{Proposition}}
\newtheorem{remark}{\textbf{Remark}}
\newcommand{\nn}{\nonumber}
\newcommand{\mE}{\mathbb{E}}
\begin{document}

\title{Estimation of KL Divergence: \\Optimal Minimax Rate}
\author{Yuheng Bu, {\em Student Member, IEEE},
Shaofeng Zou, {\em Member, IEEE}, Yingbin Liang, {\em Senior Member, IEEE} \\
and Venugopal V. Veeravalli, {\em Fellow, IEEE}\\
\thanks{This work was presented in part at the International Symposium on Information Theory (ISIT), Barcelona, Spain, 2016 \cite{bu2016kl}.}
\thanks{The first two authors have contributed equally to this work.}
\thanks{The work of Y. Bu and V. V. Veeravalli was supported by the National Science Foundation under grants NSF 11-11342 and 1617789. The work of S. Zou and Y. Liang was supported by the Air Force Office of Scientific Research under grant FA 9550-16-1-0077. The work of Y. Liang was also supported in part by the National Science Foundation under Grant CCF-1801855 and by DARPA FunLoL program.}
\thanks{Y. Bu, S. Zou and V. V. Veeravalli are with the ECE Department and the Coordinated Science Laboratory,
University of Illinois at Urbana-Champaign, Urbana, IL 61801 USA (email: bu3/szou3/vvv@illinois.edu).}
\thanks{Y. Liang was with the Department of Electrical Engineering and Computer
Science, Syracuse University, Syracuse, NY 13244 USA. She is now with the
Department of Electrical and Computer Engineering, The Ohio State University, Columbus, OH 43210 USA (e-mail: liang.889@osu.edu).}
}

\maketitle

%\title{Kullback-Leibler Divergence Estimation on Large Alphabets}
%
%\begin{center}
%  \baselineskip 1.3ex {\Large \bf Estimation of KL Divergence: Optimal Minimax Rate
%%\let\thefootnote\relax\footnote{The work of S. Zou and Y. Liang was supported by a National Science Foundation
%%CAREER Award under Grant CCF-10-26565. The work of H. V. Poor was supported by National Science Foundation under Grants DMS-11-18605 and ECCS-13-43210.}
%}
%\\
%\vspace{0.15in} Yuheng Bu$^{\star }$ \qquad Shaofeng Zou $^{\star}$  \qquad Yingbin Liang $^{\dagger}$ \qquad Venugopal V. Veeravalli $^{\star}$\\
%\vspace{0.15in}
%$^{\star}$ University of Illinois at Urbana-Champaign\\  $^{\dagger}$ Syracuse University\\
%\vspace{0.15in} Email: \small{ \texttt{bu3@illinois.edu, szou3@illinois.edu, yliang06@syr.edu, vvv@illinois.edu}}

%{Shaofeng Zou and Yingbin Liang are with the Department of Electrical
%Engineering and Computer Science, Syracuse University, Syracuse, NY 13244 USA (email: \{szou02,yliang06\}@syr.edu).
%H. Vincent Poor is with the Department of Electrical Engineering, Princeton University, Princeton, NJ 08544 USA (email: poor@princeton.edu).
%}
%\end{center}

\begin{abstract}
The problem of estimating the Kullback-Leibler divergence $D(P\|Q)$ between two unknown distributions $P$ and $Q$ is studied, under the assumption that the alphabet size $k$ of the distributions can scale to infinity. The estimation is based on $m$ independent samples drawn from $P$ and $n$ independent samples drawn from $Q$. It is first shown that there does not exist any consistent estimator that guarantees asymptotically small worst-case quadratic risk over the set of all pairs of distributions. A restricted set that contains pairs of distributions, with density ratio bounded by a function $f(k)$ is further considered.
{An augmented plug-in estimator is proposed, and its worst-case quadratic risk is shown to be within a constant factor of $(\frac{k}{m}+\frac{kf(k)}{n})^2+\frac{\log ^2 f(k)}{m}+\frac{f(k)}{n}$, if $m$ and $n$ exceed a constant factor of $k$ and $kf(k)$, respectively.} Moreover, the minimax quadratic risk is characterized to be within a constant factor of $(\frac{k}{m\log  k}+\frac{kf(k)}{n\log k})^2+\frac{\log ^2 f(k)}{m}+\frac{f(k)}{n}$, if $m$ and $n$ exceed a constant factor of $k/\log(k)$ and $kf(k)/\log k$, respectively. The lower bound on the minimax quadratic risk is characterized by employing a generalized Le Cam's method.
A minimax optimal estimator is then constructed by employing both the polynomial approximation and the plug-in approaches.
\end{abstract}

%$f(k)\geq \log^2k$, $\log m\leq C\log k$ and $\log^2 n$ is less than a constant factor of $k$.
\section{Introduction}

As an important quantity in information theory, the Kullback-Leibler (KL) divergence between two distributions has a wide range of applications in various domains. For example, KL divergence can be used as a similarity measure in nonparametric outlier detection \cite{bu2016}, multimedia classification \cite{moreno2003kullback,li2017inhomogeneous}, text classification \cite{dhillon2003divisive}, and the two-sample problem \cite{anderson1994two}. In these contexts, it is often desired to estimate KL divergence efficiently based on available data samples. This paper studies such a problem.

Consider the estimation of KL divergence between two probability distributions $P$ and $Q$ defined as
\begin{equation}
  D(P\|Q) = \sum_{i=1}^k P_i \log \frac{P_i}{Q_i},
\end{equation}
where $P$ and $Q$ are supported on a common alphabet set $[k] \triangleq \{1, \dots, k\}$, and
$P$ is absolutely continuous with respect to $Q$, i.e.,
if $Q_i=0$, $P_i=0$, for $i \in [k]$. We use $\mathcal{M}_k$ to denote the collection of all such pairs of distributions.

Suppose $P$ and $Q$ are unknown, and that $m$ independent and identically distributed (i.i.d.) samples $X_1,\ \dots\ ,X_m$ drawn from $P$ and $n$ i.i.d.\ samples $Y_1,\ \dots\ ,Y_n$ drawn from $Q$ are available for estimation. The sufficient statistics for estimating $D(P\|Q)$ are the histograms of the samples $M \triangleq (M_1, \dots ,M_k)$ and $N \triangleq (N_1, \dots ,N_k)$, where
\begin{equation}
  M_j = \sum_{i=1}^m \mathds{1}_{\{X_i=j\}}\quad \mbox{ and }\quad N_j = \sum_{i=1}^n \mathds{1}_{\{Y_i=j\}}
\end{equation}
record the numbers of occurrences of $j \in [k]$ in samples drawn from $P$ and $Q$, respectively. Then $M\sim  \mathrm{Multinomial}\,(m, P)$ and $N\sim  \mathrm{Multinomial}\,(n, Q)$.
An estimator $\hat{D}$ of $D(P\|Q)$ is then a function of the histograms $M$ and $N$, denoted by
$\hat{D}(M,N)$.

%\subsection{Minimax Quadratic Risk}
%To investigate the fundamental limit of the KL divergence estimation problem,

We adopt the following worst-case quadratic risk to measure the performance of estimators of the KL divergence:
\begin{equation}\label{eq:suprisk}
  R(\hat{D},k,m,n) \triangleq  \sup_{(P,Q) \in \mathcal{M}_k} \mathbb{E}\big[\big(\hat{D}(M,N)-D(P\|Q)\big)^2\big].
\end{equation}
We further define the  minimax quadratic risk as:
\begin{equation}
  R^*(k,m,n) \triangleq \inf_{\hat{D}} R(\hat{D},k,m,n).
\end{equation}

In this paper, we are interested in the large-alphabet regime with $k \rightarrow \infty$.  Furthermore, the number $m$ and $n$ of samples are functions of $k$, which are allowed to scale with $k$ to infinity.
\begin{definition}
A sequence of estimators $\hat{D}$, indexed by $k$, is said to be consistent under sample complexity $m(k)$ and $n(k)$ if
\begin{equation}
  \lim_{k\to \infty}  R(\hat{D},k,m,n) = 0.
\end{equation}
\end{definition}

We are also interested in the following set:
\begin{align}\label{eq:boundedratioset}
  \mathcal{M}&_ {k, f(k)}  \nn \\
  &= \Big\{(P,Q): |P|=|Q|=k,\frac{P_i}{Q_i} \le f(k),\ \forall i \in [k]\Big\},
\end{align}
which contains distributions $(P,Q)$ with density ratio bounded by $f(k)$.

We define the worst-case quadratic risk over $\mathcal{M}_{k,f(k)}$ as
\begin{align}
  R(\hat{D}, k, & m,n,f(k))    \nn \\
  \triangleq & \sup_{(P,Q) \in \mathcal{M}_{k,f(k)}} \mathbb{E}\big[\big(\hat{D}(M,N)-D(P\|Q)\big)^2\big],
\end{align}
and define the corresponding minimax quadratic risk as
\begin{equation}\label{eq:minimaxriskdensity}
  R^*(k,m,n,f(k)) \triangleq \inf_{\hat{D}} R(\hat{D},k,m,n,f(k)).
\end{equation}

\subsection{Notations}
We adopt the following notation to express asymptotic scaling of quantities with $n$: $f(n)\lesssim g(n)$ represents that there exists a constant $c$ s.t. $f(n)\leq cg(n)$; $f(n)\gtrsim g(n)$ represents that there exists a constant $c$ s.t.\ $f(n)\geq cg(n)$; $f(n)\asymp g(n)$ when $f(n)\gtrsim g(n)$ and $f(n)\lesssim g(n)$ hold simultaneously;   $f(n)\gg g(n)$ represents that for all $c>0$, there exists $n_0>0$ s.t.\ for all $ n>n_0$, $|f(n)|\geq c|g(n)|$; and $f(n)\ll g(n)$ represents that for all $c>0$, there exists $n_0>0$ s.t.\ for all $ n>n_0$, $|f(n)|\leq cg(n)$.
\subsection{Comparison to Related Problems}

Several estimators of KL divergence when $P$ and $Q$ are {\em continuous} have been proposed and shown to be consistent. The estimator proposed in \cite{wang2005divergence} is based on data-dependent partition on the densities, the estimator proposed in \cite{wang2009divergence} is based on a K-nearest neighbor approach, and the estimator developed in \cite{nguyen2010estimating} utilizes a kernel-based approach for estimating the density ratio. A more general problem of estimating the $f$-divergence was studied in \cite{moon2014ensemble}, where an estimator based on a weighted ensemble of plug-in estimators was proposed to trade bias with variance. All of these approaches exploit the smoothness of continuous densities or density ratios, which guarantees that samples falling into a certain neighborhood can be used to estimate the local density or density ratio accurately. However, such a smoothness property does not hold for discrete distributions, whose probabilities over adjacent point masses can vary significantly. In fact, an example is provided in \cite{wang2005divergence} to show that the estimation of KL divergence can be difficult even for continuous distributions if the density has sharp dips.

%A more general problem of estimating the $f$-divergence for $d$-dimensional distributions is studied in \cite{moon2014ensemble}. A weighted ensemble divergence estimator is proposed, which is based on a weighted ensemble of plug-in estimators to trade bias with variance. Such an estimator is shown to converge faster than the simple plug-in estimator, perform well for high dimensions and is easy to implement.

Estimation of KL divergence when the distributions $P$ and $Q$ are discrete has been studied in \cite{zhang2014nonparametric,cai2002,cai2006universal} for the regime with {\em fixed} alphabet size $k$ and large sample sizes $m$ and $n$. Such a regime is very different from the large-alphabet regime in which we are interested, with $k$ scaling to infinity. Clearly, as $k$ increases, the scaling of the sample sizes $m$ and $n$ must be fast enough with respect to $k$ in order to guarantee consistent estimation.

%Suppose the distribution pairs $(P,Q)$ of interest are collected in a set $\mathcal{M}_k$. We adopt the maximum of quadratic risk over all distributions in $\mathcal{M}_k$ as the performance metric for estimators.

In the large-alphabet regime, KL divergence estimation is closely related to entropy estimation with a large alphabet recently studied in \cite{paninski2003estimation,antos2001convergence,valiant2011estimating,wu2014minimax,Jiao2014plugin,jiao2015minimax,jiao2017nearest}. Compared to entropy estimation, KL divergence estimation has one more dimension of uncertainty, that is regarding the distribution $Q$. Some distributions $Q$ can contain very small point masses that contribute significantly to the value of divergence, but are difficult to estimate because samples of these point masses occur rarely. In particular, such distributions dominate the risk
in \eqref{eq:suprisk}, and make the construction of consistent estimators challenging.

\subsection{Summary of Main Results}

We summarize our main results in the following three theorems, more details are given respectively in Sections \ref{sec:lowerbound1}, \ref{sec:plugin} and \ref{sec:minimax}.

Our first result, based on Le Cam's two-point method \cite{Tsybakov2008}, is that there is no consistent estimator of KL divergence over the distribution set $\mathcal{M}_k$.
\begin{theorem}\label{thm:MLB}
For any  $m,n \in \mathbb{N}$, and $k\ge2, $ $R^*(k,m,n)$ is infinite.
Therefore, there does not exist any consistent estimator of KL divergence over the set $\mathcal{M}_k$.
\end{theorem}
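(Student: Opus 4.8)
The plan is to use Le Cam's two-point method to exhibit, for each fixed $k \ge 2$ and each fixed $m, n$, a pair of distribution-pairs in $\mathcal{M}_k$ that are statistically indistinguishable from $m$ samples from $P$ and $n$ samples from $Q$, yet whose KL divergences differ by an arbitrarily large amount. Since the two-point bound gives $R^*(k,m,n) \gtrsim (D(P_1\|Q_1) - D(P_2\|Q_2))^2 \cdot (1 - \mathrm{TV})$ for a suitable total-variation-type quantity between the two joint sample laws, driving the divergence gap to infinity while keeping the sample laws close forces $R^*(k,m,n) = \infty$.

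First I would reduce to $k = 2$: take $P = Q = \mathrm{Uniform}$ on the coordinates $3, \dots, k$ (contributing nothing to the divergence and nothing to distinguishability), so the whole construction lives on a two-letter alphabet. Second, I would parametrize the two-letter pairs so that $P$ is held fixed (say $P = (1/2, 1/2)$, or even $P$ bounded away from the small mass) while $Q$ is allowed to have a vanishing mass on the second symbol: $Q^{(a)} = (1 - \delta_a, \delta_a)$ with $\delta_a \to 0$. The key point is that $D(P\|Q^{(a)})$ contains the term $P_2 \log(P_2/\delta_a) \to \infty$ as $\delta_a \to 0$, so the divergence blows up. Third, I would pick two such values $\delta_1, \delta_2$ (both tiny, e.g.\ $\delta_1 = \epsilon$ and $\delta_2 = \epsilon^2$) so that the divergences differ by $\Theta(\log(1/\epsilon)) \to \infty$, while the chance that $n$ i.i.d.\ samples from $Q^{(a)}$ ever hit the second symbol is at most $n\delta_a \to 0$; conditioned on not hitting it, the samples from $Q^{(1)}$ and $Q^{(2)}$ have identical law, and the samples from $P$ are identical in the two hypotheses. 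Hence the two joint sample distributions have total variation $\le n\epsilon \to 0$, and Le Cam gives $R^*(k,m,n) \ge \frac{1}{4}(1 - n\epsilon)\,(D(P\|Q^{(1)}) - D(P\|Q^{(2)}))^2 \to \infty$.

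The only subtlety — and the main thing to check carefully — is the interplay between the two requirements: the divergence gap must grow while the TV distance of the sample laws must shrink, simultaneously as $\epsilon \to 0$ with $m, n, k$ held fixed. Because the divergence gap grows only logarithmically in $1/\epsilon$ whereas the ``collision'' probability shrinks polynomially (like $n\epsilon$), there is plenty of room: choosing $\epsilon \to 0$ makes the TV distance go to $0$ while $(\log(1/\epsilon))^2 \to \infty$, so the product diverges. One should also verify the pairs lie in $\mathcal{M}_k$, i.e.\ absolute continuity $Q_i = 0 \Rightarrow P_i = 0$ holds — it does, since $\delta_a > 0$ strictly, so $Q$ has full support. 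I do not anticipate a genuine obstacle here; the result is essentially the observation that an unbounded density ratio $P_i/Q_i$ is unavoidable in $\mathcal{M}_k$ and makes the divergence unbounded even while the data are uninformative about the tiny mass $Q_i$. This is precisely why the restricted class $\mathcal{M}_{k,f(k)}$ is introduced in the sequel.
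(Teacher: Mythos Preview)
Your proposal is correct and follows essentially the same approach as the paper: Le Cam's two-point method with $P^{(1)}=P^{(2)}=(1/2,1/2)$ fixed and two choices of $Q$ each having a vanishing mass on one symbol, so the divergences diverge while the sample laws remain nearly indistinguishable. The paper's specific two-point pair is $Q^{(1)}=(e^{-s},1-e^{-s})$, $Q^{(2)}=(1/(2s),1-1/(2s))$ with closeness controlled via $D(Q^{(1)}\|Q^{(2)})<1/n$, whereas you use $(\epsilon,\epsilon^2)$ and a direct TV/coupling bound; both choices work, and the extension to $k>2$ by padding with equal masses is the same in both.
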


The intuition behind this result is that the set $\mathcal{M}_k$ contains distributions $Q$, that have arbitrarily small components that contribute significantly to KL divergence but require arbitrarily large number of samples to estimate accurately. However, in practical applications, it is reasonable to assume that the ratio of $P$ to $Q$ is bounded, so that the KL divergence $D(P\|Q)$ is bounded. Thus, we further focus on the set $\mathcal{M}_{k,f(k)}$ given in \eqref{eq:boundedratioset} that contains distribution pairs $(P,Q)$ with their density ratio bounded by $f(k)$.

\begin{remark}
Consider  the R\'enyi divergence of order $\alpha$ between distributions $P$ and $Q$, which is defined as
\begin{equation}
  D_{\alpha}(P\|Q) \triangleq \frac{1}{\alpha-1} \log\Big(\sum_{i=1}^k \frac{p^{\alpha}_i}{q^{\alpha-1}_i}\Big).
\end{equation}
Then the KL divergence is equivalent to the R\'enyi divergence of order one. Moreover, the bounded density ratio condition is equivalent to the following upper bound on the R\'enyi divergence of order infinity,
\begin{equation}
D_{\infty}(P\|Q) = \log \sup_{i\in [k]}\frac{P_i}{Q_i} \le \log f(k).
\end{equation}
It is shown in \cite{van2014renyi} that $D_{\alpha}$ is non-decreasing for $\alpha >1$, i.e., $D(P\|Q)\le D_\alpha(P\|Q)\le D_\infty(P\|Q)$. This implies that the conditions on any order $\alpha$ of the R\'enyi divergence can be used to bound the KL divergence as long as $\alpha>1$. For the sake of simplicity and convenience, we adopt the density ratio bound in this paper.
\end{remark}

%However, the set of distributions with bounded density ratio is of more interest in practice. Thus, we further focus on the set $\mathcal{M}_{k,f(k)}$ given in \eqref{eq:boundedratioset} that contains distributions $(P,Q)$ with their ratio bounded by $f(k)$.

%This condition  to well-behaved $Q$ that are not too hard to estimate.
%
{We construct an augmented plug-in estimator $\hat{D}_{\mathrm{A-plug-in}}$, defined in \eqref{eq:pluginest}, and characterize its worst-case quadratic risk over the set $\mathcal{M}_{k,f(k)}$ in the following theorem.
%show that such an estimator is consistent over $\mathcal{M}_{k,f(k)}$ if and only if  $m = \omega(k\vee\log^2(f(k))$ and $n= \omega(kf(k))$.
\begin{theorem}\label{thm:plugin}
For any  $k\in \mathbb{N}$, $m \ge k$ and $n \ge 10kf(k)$, the worst-case quadratic risk of the augmented plug-in estimator defined in \eqref{eq:pluginest} over the set $\mathcal{M}_{k,f(k)}$ satisfies
\begin{align}\label{eq:plugin_risk}
  R(\hat{D}_{\mathrm{A-plug-in}}&,\ k,m,n,f(k))  \nn \\
  \asymp &\left(\frac{kf(k)}{n} + \frac{k}{m}\right)^2 +\frac{\log^2f(k)}{m} + \frac{f(k)}{n}.
\end{align}
\end{theorem}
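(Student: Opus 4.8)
The plan is to prove the two directions of the $\asymp$ separately: an upper bound $R(\hat{D}_{\mathrm{A-plug-in}},k,m,n,f(k)) \lesssim (\cdot)$ valid for every $(P,Q)\in\mathcal{M}_{k,f(k)}$, and a lower bound $R(\hat{D}_{\mathrm{A-plug-in}},k,m,n,f(k)) \gtrsim (\cdot)$ obtained by exhibiting bad instances. For the upper bound I would start from the bias--variance decomposition $\mathbb{E}[(\hat{D}-D)^2] = (\mathbb{E}\hat{D} - D)^2 + \mathrm{Var}(\hat{D})$ and treat the two pieces with different tools. Writing $\hat{D}_{\mathrm{A-plug-in}} = \sum_i \hat{P}_i\log\hat{P}_i - \sum_i \hat{P}_i\log\tilde{Q}_i$ with $\hat{P}_i$ the empirical mass of $P$ and $\tilde{Q}_i$ the regularized (augmented) empirical mass of $Q$, and using $M\perp N$, the bias is additive over the two halves: the $P$-half is the classical plug-in bias of a $\sum x\log x$ functional, which a second-order Taylor expansion of $x\log x$ bounds by $O(k/m)$; the $Q$-half equals $\sum_i P_i\big(\mathbb{E}[\log\tilde{Q}_i]-\log Q_i\big)$, and a second-order expansion of $\log$ around $Q_i$ — using $\mathbb{E}[(N_i-nQ_i)^2]=nQ_i(1-Q_i)$ together with the ratio bound $P_i/Q_i\le f(k)$ — gives $O\big(\sum_i P_i/(nQ_i)\big) = O(kf(k)/n)$ on the symbols with $nQ_i$ not too small. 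Squaring and adding produces the $(k/m + kf(k)/n)^2$ term.

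For the variance I would use the law of total variance with respect to the split $M\perp N$, or equivalently an Efron--Stein / bounded-differences estimate over the $m+n$ samples. The fluctuation due to the $P$-samples, after recentering $\log\tilde{Q}_i$ by $\log Q_i$, is essentially $\frac{1}{m}\big(\sum_i P_i\log^2(P_i/Q_i) - D^2\big)$, which is $O(\log^2 f(k)/m)$ once one checks that $x\log^2 x$ stays bounded as $x\downarrow 0$ so that only the $\le f(k)$ side of the ratio matters. The fluctuation due to the $Q$-samples, after the linearization $\log\tilde{Q}_i\approx \log Q_i + (\tilde{Q}_i-Q_i)/Q_i$, is of order $\frac{1}{n}\sum_i P_i^2/Q_i \le \frac{f(k)}{n}\sum_i P_i = f(k)/n$. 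Adding the four contributions yields the claimed upper bound; the hypotheses $m\gtrsim k$ and $n\gtrsim kf(k)$ are exactly what makes the Taylor remainders and the large-deviation corrections for the rare symbols lower-order.

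For the lower bound it suffices — since we are bounding the risk of a fixed estimator, not a minimax risk — to construct, for each of the four summands, a pair $(P,Q)\in\mathcal{M}_{k,f(k)}$ on which that summand is matched by the bias or the variance of $\hat{D}_{\mathrm{A-plug-in}}$. A near-uniform $P$ paired with a near-uniform $Q$ (rescaled so the ratio sits near $1$, and a variant with the ratio near $f(k)$ on a constant fraction of the alphabet) makes the $P$-half and $Q$-half plug-in biases add without cancellation, giving $\gtrsim (k/m+kf(k)/n)^2$; a pair concentrating a constant fraction of the $P$-mass on symbols with $\log(P_i/Q_i)\asymp\log f(k)$ forces $\mathrm{Var}\gtrsim \log^2 f(k)/m$; and a pair with a single $Q$-symbol of mass $\asymp f(k)/n$ carrying $P$-mass $\asymp f(k)\cdot(f(k)/n)$ makes the fluctuation of that $N_i$ contribute $\gtrsim f(k)/n$. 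Taking the worst of these instances gives the matching lower bound.

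The step I expect to be the main obstacle is the bias (and to a lesser extent the variance) analysis of the $Q$-half on the rare symbols, i.e.\ those with $nQ_i = O(1)$, where the naive Taylor expansion of $\log$ has an uncontrolled remainder and the augmentation floor on $\tilde{Q}_i$ is what saves the day. One must show that the augmentation constant is chosen so that $P_i\,\big|\mathbb{E}\log\tilde{Q}_i - \log Q_i\big| \lesssim f(k)/n$ per symbol — which hinges on the elementary fact that $Q_i\log\!\big(c/(nQ_i)\big)$ is maximized at $Q_i\asymp 1/n$ with value $O(1/n)$, again using $P_i\le f(k)Q_i$ — while simultaneously keeping the variance inflation from these symbols at $O(f(k)/n)$; and for the lower bound one must verify the chosen instances actually activate this worst-case behavior rather than being smoothed out by the augmentation.
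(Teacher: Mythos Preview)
Your upper-bound sketch is broadly correct and follows the paper's structure (bias--variance decomposition, convexity/Taylor for each half). The paper handles the rare-$Q_i$ symbols not by a naive Taylor expansion but via the $\chi^2$ upper bound on KL divergence plus a mean-value-theorem split, but your last paragraph correctly flags this as the main technical step, so I regard the upper bound as essentially fine.

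The genuine gap is in your bias lower bound. You assert that a single construction makes the $P$-half and $Q$-half plug-in biases ``add without cancellation,'' but this is exactly the crux: the $\sum_i\hat P_i\log\hat P_i$ bias is always \emph{positive} (convexity of $x\log x$), whereas the bias of $-\sum_i\hat P_i\log\tilde Q_i$ can have either sign and in particular can be \emph{negative} of order $kf(k)/n$. When $k/m$ and $kf(k)/n$ are comparable the two can nearly cancel, and no single instance activates both simultaneously. The paper therefore splits into two regimes. If $k/m\gtrsim kf(k)/n$, take $P$ uniform and $Q$ with $k-1$ bins of mass $\asymp 1/(kf(k))$; the $Q$-half bias is then shown to be $\ge -\tfrac{ckf(k)}{10n}$, so the total bias stays $\gtrsim k/m$. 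If $kf(k)/n\gtrsim k/m$, take instead $Q_i\asymp 1/n$ on $k-1$ bins (saturating the augmentation's pathology) and $P_i=f(k)Q_i$; here one must prove the total bias is $\le -\mathrm{const}\cdot kf(k)/n$, which requires a delicate Bernstein-polynomial computation and is where the restriction $c\in[\tfrac23,\tfrac54]$ on the augmentation constant enters. Without this case split you cannot rule out cancellation.

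A smaller point: for the two variance terms $\log^2 f(k)/m$ and $f(k)/n$, the paper does not compute $\mathrm{Var}(\hat D_{\mathrm{A\text{-}plug\text{-}in}})$ directly but instead invokes Le~Cam's two-point method to lower-bound the \emph{minimax} risk, which a fortiori lower-bounds the risk of any fixed estimator. Your direct-variance route is legitimate in principle (risk $\ge$ variance), but your proposed single-symbol instance for $f(k)/n$ does not give the right order; one needs $\sim k$ symbols with $Q_i\asymp 1/(kf(k))$ and $P_i\asymp 1/k$ so that $\sum_i P_i^2/(nQ_i)\asymp f(k)/n$.
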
}

The upper bound is derived by evaluating the bias and variance of the estimator separately. In order to prove the term $\big(\frac{kf(k)}{n} + \frac{k}{m}\big)^2$ in the lower bound, we analyze the bias of the estimator and construct different pairs of ``worst-case" distributions, respectively, for the cases where  the bias caused by insufficient samples from $P$ or the bias caused by insufficient samples from $Q$ dominates. The terms $\frac{\log^2f(k)}{m}$ and $\frac{f(k)}{n}$ in the lower bound are due to the variance and follow from the minimax lower bound given by Le Cam's two-point method with a judiciously chosen pair of distributions.

{Theorem \ref{thm:plugin} implies that the augmented plug-in estimator is consistent over $\mathcal{M}_{k,f(k)}$ if and only if
\begin{equation}\label{eq:aplugin}
m \gg k\vee\log^2f(k) \text{ and } n \gg kf(k).
\end{equation}}
Thus, the number of samples $m$ and $n$ should be larger than the alphabet size $k$ for the plug-in estimator to be consistent. This naturally inspires the question of whether the plug-in estimator achieves the minimax risk, and if not, what estimator is minimax optimal and what is the corresponding minimax risk.

We  show that the augmented plug-in estimator is not minimax optimal, and that the minimax optimality can be achieved by an estimator that employs both the polynomial approximation and plug-in approaches, and the following theorem characterizes the minimax risk.
\begin{theorem}\label{thm:minimax}
  If $m\gtrsim \frac{k}{\log k}$, $n\gtrsim \frac{kf(k)}{\log k}$, $ f(k) \ge \log^2 k$, $\log m\lesssim\log k$ and $\log ^2 n\lesssim k^{1-\epsilon}$, where $\epsilon$ is any positive constant, then the minimax risk satisfies
  \begin{flalign}\label{eq:minimax}
     R^*(k,m,&n,f(k))\nn \\
      \asymp  &\left(\frac{k}{m\log k}+\frac{kf(k)}{n\log k}\right)^2 +\frac{\log ^2 f(k)}{m}+\frac{f(k)}{n}.
  \end{flalign}
\end{theorem}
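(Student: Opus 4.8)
The proof has two halves, a matching upper bound and lower bound, and it is guided by the observation that the four terms in \eqref{eq:minimax} are of two kinds: $\big(\frac{k}{m\log k}\big)^2$ and $\big(\frac{kf(k)}{n\log k}\big)^2$ are (squared) \emph{bias} terms --- precisely the plug-in bias terms of Theorem~\ref{thm:plugin} improved by a factor $\log k$ --- whereas $\frac{\log^2 f(k)}{m}$ and $\frac{f(k)}{n}$ are \emph{variance} terms that already appear in Theorem~\ref{thm:plugin}. The plan on the achievability side is thus to retain a (bias-corrected) plug-in treatment for the symbols whose probabilities are easy to estimate and to replace it by a best-polynomial-approximation estimator for the symbols with small probability, where the plug-in bias is largest; on the converse side the plan is to turn the polynomial-approximation error into a genuine lower bound through a moment-matching (generalized Le~Cam) construction. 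As a preliminary reduction we Poissonize the sample sizes, so that $M_i\sim\mathrm{Poisson}(mP_i)$ and $N_i\sim\mathrm{Poisson}(nQ_i)$ become mutually independent, and we thin each Poisson count into two independent halves, one of which is used to classify each symbol as ``small'' or ``large'' and the other to form the estimate; both steps change the minimax risk by at most a constant factor.

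\textbf{Upper bound.}
Write $D(P\|Q)=\sum_{i=1}^k P_i\log P_i-\sum_{i=1}^k P_i\log Q_i$. For the entropy-like sum $\sum_i P_i\log P_i$ we proceed as in large-alphabet entropy estimation: fix a degree $L\asymp\log k$ and a threshold of order $\frac{\log k}{m}$; for the symbols with $M_i$ below the threshold use the unique unbiased estimator --- obtained from the factorial moments of $M_i$ --- of the best degree-$L$ polynomial approximant of $x\log x$ on $[0,\frac{c\log k}{m}]$, and for the symbols above the threshold use the bias-corrected plug-in $\frac{M_i}{m}\log\frac{M_i}{m}$. Summed over the $k$ symbols the bias is $k$ times the degree-$L$ approximation error, which is $\asymp\frac{k}{m\log k}$. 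The cross sum $\sum_i P_i\log Q_i$ is treated by the same small/large dichotomy, now using the $N_i$-samples to estimate $\log Q_i$ (with the threshold and degree chosen so that the bias comes out $\asymp\frac{kf(k)}{n\log k}$, which is where the density-ratio bound $P_i/Q_i\le f(k)$ is used) and the independent $M_i$-samples to supply the factor $P_i$. For the variance one does not analyse the two sums in isolation but exploits the cancellation in the summand $P_i\log\frac{P_i}{Q_i}$, whose magnitude is at most $P_i\log f(k)$: a $\delta$-method (or direct moment) computation then bounds the variance coming from the $M$-samples by $\frac{\log^2 f(k)}{m}$ and that coming from the $N$-samples by $\frac{f(k)}{n}$, using $\sum_i P_i=1$ and $\sum_i P_i^2/Q_i\le f(k)$. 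Adding the two pieces and using independence of the two thinned halves yields the upper bound in \eqref{eq:minimax}; the hypotheses $f(k)\ge\log^2 k$, $\log m\lesssim\log k$ and $\log^2 n\lesssim k^{1-\epsilon}$ are precisely what make the polynomial-approximation estimates and the Poisson tail bounds dominate the leftover cross terms.

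\textbf{Lower bound.}
The converse uses a generalized Le~Cam argument (the method of two fuzzy hypotheses / moment matching). For the $\big(\frac{k}{m\log k}\big)^2$ term, keep $Q$ essentially uniform, so it contributes negligibly, and let the $k$ coordinates of $P$ be i.i.d.\ copies of one of two priors $\nu_0,\nu_1$ supported on a window of width $\asymp\frac{\log k}{m}$ whose first $L\asymp\log k$ moments agree. By the classical duality between moment matching and best polynomial approximation, $\nu_0,\nu_1$ can be chosen with $|\mathbb{E}_{\nu_0}[X\log X]-\mathbb{E}_{\nu_1}[X\log X]|\asymp\frac{1}{m\log k}$, so the two resulting mixture priors on $(P,Q)$ produce expected KL divergences differing by $\asymp\frac{k}{m\log k}$, while the laws of $(M,N)$ they induce are statistically indistinguishable because matching the first $L$ moments matches the first $L$ Poisson moments of the counts. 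A symmetric construction --- now placing the moment-matched window on the coordinates of $Q$ and setting $P_i$ proportional to $f(k)Q_i$ so that the pair remains in $\mathcal{M}_{k,f(k)}$ --- gives the $\big(\frac{kf(k)}{n\log k}\big)^2$ term. Finally, the variance terms $\frac{\log^2 f(k)}{m}$ and $\frac{f(k)}{n}$ follow from ordinary two-point (Le~Cam) bounds that perturb a single coordinate of $P$, respectively $Q$, while holding the density ratio at the boundary $f(k)$; these are the same bounds that furnish the corresponding terms in the lower bound of Theorem~\ref{thm:plugin}. Taking the maximum of the four constructions gives \eqref{eq:minimax}.

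\textbf{Principal difficulty.}
The crux is the cross term $\sum_i P_i\log Q_i$. Unlike the entropy term it couples the two independent samples and involves $\log Q_i$, which blows up as $Q_i\to 0$; converting it into a clean bias/variance tradeoff requires using $P_i/Q_i\le f(k)$ to show that only $Q_i$ above a suitable threshold contributes, choosing the truncation level and the degree so that the bias is exactly $\asymp\frac{kf(k)}{n\log k}$ rather than larger, and multiplying by the independently estimated $P_i$ without inflating the variance past $\frac{f(k)}{n}$ --- which is why the estimator must be organized around $P_i\log\frac{P_i}{Q_i}$ and not around the two sums separately. On the converse side the matching difficulty is to design the moment-matched prior on $Q$ together with a companion $P$ that simultaneously keeps $(P,Q)\in\mathcal{M}_{k,f(k)}$ and leaves the $P$-sample uninformative about the hypothesis, and it is this balancing act that dictates the side conditions $f(k)\ge\log^2 k$ and $\log^2 n\lesssim k^{1-\epsilon}$.
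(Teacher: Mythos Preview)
Your outline tracks the paper's approach closely---Poissonization with sample splitting, a polynomial/plug-in hybrid for the upper bound, generalized Le~Cam via moment matching for the squared-bias terms in the lower bound, and two-point Le~Cam for the variance terms. But there is one genuine gap and one inaccuracy.

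The gap is in the cross term. You say that for small $Q_i$ one ``uses the $N_i$-samples to estimate $\log Q_i$'' by polynomial approximation, but $\log x$ is unbounded on $[0,\frac{c\log k}{n}]$, so its best degree-$L$ approximation error does not decay and you cannot obtain per-symbol bias $\asymp\frac{f(k)}{n\log k}$ directly. The paper's device is to approximate the \emph{bounded} function $x\log x$ by a degree-$L$ polynomial, subtract its constant term to obtain $\mu_L(x)$ with $\mu_L(0)=0$, and then use $\mu_L(x)/x$---which is again a polynomial, hence admits an unbiased Poisson estimator---as the proxy for $\log x$. The pointwise error $|\mu_L(Q_i)/Q_i-\log Q_i|$ is still unbounded as $Q_i\to 0$, but after multiplying by $P_i$ one gets
\[
\Big|P_i\frac{\mu_L(Q_i)}{Q_i}-P_i\log Q_i\Big|=\frac{P_i}{Q_i}\,\big|\mu_L(Q_i)-Q_i\log Q_i\big|\lesssim f(k)\cdot\frac{1}{n\log k},
\]
which is exactly the required bias. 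This step is the technical heart of the upper bound and is not visible in your outline; your remark that ``only $Q_i$ above a suitable threshold contributes'' is not how the difficulty is resolved.

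The inaccuracy is in the lower bound for $\big(\frac{kf(k)}{n\log k}\big)^2$. Setting ``$P_i$ proportional to $f(k)Q_i$'' with $Q_i$ random does not give a probability vector, and coupling $P$ to $Q$ would also make the $M$-sample informative about the hypothesis. The paper instead fixes $P$ \emph{deterministically} (with $P_i=\frac{f(k)}{n\log k}$ on the first $k-1$ coordinates and the last coordinate absorbing the remainder) and treats it as known, so the problem reduces to estimating $\sum_i P_i\log Q_i$ from $N$ alone; the moment-matched prior on $Q$ is supported on a window $[\frac{1}{n\log k},\frac{c\log k}{n}]$, which automatically enforces $P_i/Q_i\le f(k)$. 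Your variance sketch and the two-point constructions for $\frac{\log^2 f(k)}{m}$ and $\frac{f(k)}{n}$ are essentially as in the paper.
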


%We further show the minimax quadratic risk is within constant factor of $\Big(\frac{k}{m\log  k}+\frac{kf(k)}{n\log k}\Big)^2+\frac{\log ^2 k+\log ^2 f(k)}{m}+\frac{f(k)}{n}$ if $f(k)\geq \log^2k$ and $\log^2 n\lesssim k$.  We construct an estimator by jointly employing the polynomial approximation and plug-in approach.
The key idea in the construction of the minimax optimal estimator is the application of a polynomial approximation to reduce the bias in the regime where the bias of the plug-in estimator is large. Compared to entropy estimation \cite{wu2014minimax,jiao2015minimax}, the challenge here is that the KL divergence is a function of two variables, for which a joint polynomial approximation is difficult to derive. We solve this problem by employing separate polynomial approximations for functions involving $P$ and $Q$ as well as judiciously using the density ratio constraint to bound the estimation error. The proof of the lower bound on the minimax risk is based on a generalized Le Cam's method involving two composite hypotheses,
as in the case of entropy estimation \cite{wu2014minimax}. But the challenge here that requires special technical treatment is the construction of prior distributions for $(P,Q)$ that satisfy the bounded density ratio constraint.

We note that the first term $\big(\frac{k}{m\log k}+\frac{kf(k)}{n\log k}\big)^2 $ in \eqref{eq:minimax} captures the squared bias, and the remaining terms correspond to the variance. If we compare the worst-case quadratic risk for the augmented plug-in estimator in \eqref{eq:plugin_risk} with the minimax risk in \eqref{eq:minimax}, there is a $\log k$ factor rate improvement in the bias.

%In the regime with sufficient data, i.e., $m\gtrsim \frac{k^2}{\log^2k\vee\log ^2 f(k)}$ and $n\gtrsim k^2f(k)$, the variance terms in \eqref{eq:minimax} dominate the rate of the risk. In this case, the risk of the augmented plug-in estimator and the minimax risk have the same form $\frac{\log^2k }{m}+\frac{\log ^2 f(k)}{m}+\frac{f(k)}{n}$, which implies that the augmented plug-in estimator is rate optimal. However, in the regime without sufficient data, i.e., $m\lesssim \frac{k^2}{\log^2k\vee\log ^2 f(k)}$ or $n\lesssim k^2f(k)$, the augmented plug-in estimator is strictly rate sub-optimal.

Theorem \ref{thm:minimax} directly implies that in order to estimate the KL divergence over the set $\mathcal{M}_{k,f(k)}$ with vanishing mean squared error, the sufficient and necessary conditions on the sample complexity are given by
\begin{flalign}\label{eq:samplecomplexity}
  m\gg \Big(\log^2 f(k)\vee \frac{k}{\log k}\Big), \text { and }n\gg \frac{kf(k)}{\log k}.
\end{flalign}
The comparison of \eqref{eq:aplugin} with \eqref{eq:samplecomplexity} shows that the augmented plug-in estimator is strictly sub-optimal.

%\begin{remark}
{While our results are proved under a ratio upper bound constraint $f(k)$, under certain upper bounds on $m$ and $n$, and by taking a worst case over $(P,Q)$, we make the following (non-rigorous) observation that is implied by our results, by disregarding the aforementioned assumptions to some extent. If $Q$ is known to be the uniform distribution (and hence $f(k)\le k$), then $D(P\|Q) = \sum P_i\log P_i-\log k$, and the KL divergence estimation problem (without taking the worst case over $Q$) reduces to the problem of estimating the entropy of distribution $P$. More specifically, letting $n = \infty$ and $f(k)\le k$ in \eqref{eq:minimax} directly yields $\big(\frac{k}{m\log k}\big)^2 + \frac{\log ^2 k}{m}$, which is the same as the minimax risk for entropy estimation \cite{wu2014minimax,jiao2015minimax}.} %In fact, the reason that the minimax risk of KL estimation in such a situation is the same as that of entropy estimation precisely shows that $Q$ being uniform is the worst case.}

We note that after our initial conference submission was accepted by ISIT 2016 and during our preparation of this full version of the work, an independent study of the same problem of KL divergence estimation was posted on arXiv \cite{han2016}.

A comparison between our results and the results in \cite{han2016} shows that the constraints for the minimax rate to hold in Theorem \ref{thm:minimax} are weaker than those in \cite{han2016}.
On the other hand, our estimator requires the knowledge of $k$ to determine whether to use polynomial approximation or the plug-in approach, whereas the estimator proposed in \cite{han2016} based on the same idea does not require such knowledge. Our estimator also requires the knowledge of $f(k)$ to keep the estimate between $0$ and $\log f(k)$ in order to simplify the theoretical analysis, but our experiments demonstrate that desirable performance can be achieved even without such a step (and correspondingly without exploiting $f(k)$). However, in practice, the knowledge of $k$ is typically useful to determine the number of samples that should be taken in order to achieve a certain level of estimation accuracy. In situations without the knowledge of $k$, the estimator in [18] has the advantage of being independent from $k$ and performs well adaptively.

\section{No Consistent Estimator over $\mathcal{M}_k$}\label{sec:lowerbound1}

Theorem \ref{thm:MLB} states that the minimax risk over the set $\mathcal{M}_k$ is unbounded for arbitrary alphabet size $k$ and $m$ and $n$ samples, which suggests that there is no consistent estimator for the minimax risk over $\mathcal{M}_k$.
%\begin{theorem}\label{thm:MLB}
%For any  $m,n \in \mathbb{N}$, and $k\ge2, $ $R^*(k,m,n)$ is infinite.
%Therefore, there does not exist any consistent estimator of KL divergence over the set $\mathcal{M}_k$.
%\end{theorem}
%Theorem \ref{thm:MLB}

%The proof of Theorem \ref{thm:MLB} is provided in Appendix \ref{app:thm1}.
This result follows from Le Cam's two-point method \cite{Tsybakov2008}: If two pairs of distributions $(P^{(1)}, Q^{(1)})$ and $(P^{(2)}, Q^{(2)})$ are sufficiently close such that it is impossible to reliably distinguish between them using $m$ samples from $P$ and $n$ samples from $Q$ with error probability less than some constant, then any estimator suffers a quadratic risk proportional to the squared difference between the divergence values, $(D(P^{(1)}\|Q^{(1)})-D(P^{(2)}\|Q^{(2)}))^2$.

We next give examples to illustrate how the distributions used in Le Cam's two-point method can be constructed.

For the case $k=2$, we let $P^{(1)}=P^{(2)} = (\frac{1}{2},\frac{1}{2})$, $Q^{(1)}=(e^{-s},1-e^{-s})$ and $Q^{(2)} = (\frac{1}{2s},1-\frac{1}{2s})$, where $s>0$. For any $n \in \mathbb{N}$, we choose $s$ sufficiently large such that $D(Q^{(1)}\|Q^{(2)})<\frac{1}{n}$. Thus, the error probability for distinguishing $Q^{(1)}$ and $Q^{(2)}$ with $n$ samples is greater than a constant. However, $D(P^{(1)}\|Q^{(1)}) \asymp s$ and $D(P^{(2)}\|Q^{(2)}) \asymp\log s$. Hence, the minimax risk, which is lower bounded by the difference of the above divergences, can be made arbitrarily large by letting $s\rightarrow\infty$. This example demonstrates that two pairs of distributions $(P^{(1)},Q^{(1)})$ and $(P^{(2)},Q^{(2)})$ can be very close so that the data samples are almost indistinguishable, but the KL divergences $D(P^{(1)}\|Q^{(1)})$ and $D(P^{(2)}\|Q^{(2)})$ can still be far away.

For the case with $k > 2$, the distributions can be constructed based on the same idea by distributing one of the mass in the above $(P^{(1)}, Q^{(1)})$ and $(P^{(2)}, Q^{(2)})$ uniformly on the remaining $k-1$ bins. Thus, it is impossible to estimate the KL divergence accurately over the set $\mathcal{M}_k$ under the minimax setting.

\section{Augmented Plug-in Estimator over $\mathcal{M}_{k,f(k)}$}\label{sec:plugin}
Since there does not exist any consistent estimator of KL divergence over the set $\mathcal{M}_k$, we study KL divergence estimators over the set $\mathcal{M}_{k,f(k)}$.

The ``plug-in'' approach is a natural way to estimate the KL divergence, namely, first estimate the distributions and then substitute these estimates into the divergence function. This leads to the following plug-in estimator, i.e., the empirical divergence
\begin{equation}\label{eq:directplugin}
  \hat{D}_{\mathrm{plug-in}} (M,N) =D(\hat{P}\|\hat{Q}),
\end{equation}
where $\hat{P} = (\hat{P}_1,\dots,\hat{P}_k)$ and $\hat{Q} = (\hat{Q}_1,\dots,\hat{Q}_k)$ denote the empirical distributions with $\hat{P}_i=\frac{M_i}{m}$ and $\hat{Q}_i=\frac{N_i}{n}$, respectively, for $i=1,\cdots,k$.

%As frequently observed in functional estimation problems, the plug-in estimator is simple but may cause a lot of problems.
Unlike the entropy estimation problem, where the plug-in estimator $\hat{H}_{\mathrm{plug-in}}$ is asymptotically efficient in the ``fixed $P$, large $n$'' regime, the direct plug-in estimator $\hat{D}_{\mathrm{plug-in}}$ in \eqref{eq:directplugin} of KL divergence has an infinite bias. This is because of the non-zero probability of $N_j=0$ and $M_j\ne 0$, for some $j\in[k]$, which leads to infinite $\hat{D}_{\mathrm{plug-in}}$.

%To avoid such event that $N_j=0$ and $M_j\ne 0$ for some $j\in[k]$, which causes infinite bias,

{We can get around the above issue associated with the direct plug-in estimator as follows. We first add a fraction of a sample to each mass point of $Q$, and then normalize and take $\hat{Q'_i}=\frac{N_i+c}{n+kc}$ as an estimate of $Q_i$, where $c$ is a constant. We refer to $\hat{Q'_i}$ as the add-constant estimator \cite{orlitsky2015competitive} of $Q$. Clearly, $\hat{Q'_i}$ is non-zero for all $i$. We therefore propose the following ``augmented plug-in" estimator based on $\hat{Q'_i}$
\begin{equation}\label{eq:pluginest}
  \hat {D}_{\mathrm{A-plug-in}}(M,N) = \sum_{i=1}^k \frac{M_i}{m} \log\frac{M_i/m}{(N_i+c)/(n+kc)}.
\end{equation}
%We note that in the augmented plug-in estimator $\hat {D}_{\mathrm{A-plug-in}}$, normalizing $\hat Q_i'$ to be a distribution won't provide any order level performance improvement.
%\begin{remark}
%For technical convenience, $\hat Q_i'$ is not normalized after adding samples. It can be shown that normalization does not provide order-level smaller risk for the plug-in estimator.
%\end{remark}
%\begin{remark}
%A more general estimator of $Q$, referred to as the add-constant estimator \cite{orlitsky2015competitive}, adds a fraction of a sample to each mass point of $Q$, and can also be used to estimate the KL divergence. The main body of this paper focuses on the add-one estimator for simplicity. The proofs provided in Appendices \ref{app:prop1} and \ref{app:prop2} show that the add-constant estimator  does not provide order-level improvement on the risk. %Thus, we will keep using the add-one estimator in the main body for the convenience.
%\end{remark}
}

%with $c\in [\frac{2}{3},\frac{5}{4}]$

%In fact, there are various ways to modify the estimator of $Q$ to avoid zero estimate of $Q_i$, for example, the so-called add-constant estimator (adding a fraction sample to all mass points and then estimate $Q$ by the empirical distribution of all samples). By plugging in such an estimator for divergence estimation does not provide order-level smaller risk than the above augmented plug-in estimator.
%As shown in \cite{Yanjun's paper}, this estimator is also called the \emph{Dirichlet prior smoothed plug-in estimator}.

Theorem \ref{thm:plugin} characterizes the worst-case quadratic risk of the augmented plug-in estimator over $\mathcal{M}_{k,f(k)}$. The proof of Theorem \ref{thm:plugin} involves the following two propositions, which provide upper and lower bounds on $R(\hat{D}_{\mathrm{A-plug-in}},k,m,n,f(k))$, respectively. %in the following two propositions.
%the augmented plug-in estimator \emph{is consistent} under $R(\hat{D}_{\mathrm{A-plug-in}},k,m,n,f)$ with sample complexity $m(k)\asymp k$ and $n(k) \asymp kf(k)$.
\begin{proposition}\label{prop:upper-plug}
For all $k \in \mathbb{N}$, %$m\gtrsim k $ and $n\gtrsim kf(k)$,
\begin{align}\label{eq:pluginupper}
  R(\hat{D}_{\mathrm{A-plug-in}}&,k,m,n,f(k)) \nn \\
  \lesssim   & \left(\frac{kf(k)}{n} + \frac{k}{m}\right)^2 +\frac{\log^2f(k)}{m} + \frac{f(k)}{n}.
\end{align}
%Therefore, if $m  \gg  (k\vee\log^2f(k))$ and $n \gg  kf(k)$, $R(\hat{D}_{\mathrm{A-plug-in}},k,m,n,f(k)) \rightarrow 0$ as $k\to\infty$.
\end{proposition}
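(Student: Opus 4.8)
The plan is to control the worst-case quadratic risk via the standard bias--variance decomposition
$\mE[(\hat D_{\mathrm{A-plug-in}} - D(P\|Q))^2] = \big(\mE[\hat D_{\mathrm{A-plug-in}}] - D(P\|Q)\big)^2 + \Var(\hat D_{\mathrm{A-plug-in}})$,
and to bound the bias term and the variance term separately, matching them against the four pieces on the right-hand side of \eqref{eq:pluginupper}. Write $\hat D_{\mathrm{A-plug-in}} = \sum_i \frac{M_i}{m}\log\frac{M_i/m}{(N_i+c)/(n+kc)} = \sum_i \hat P_i \log \hat P_i - \sum_i \hat P_i \log \hat Q'_i$, so the estimator splits into an ``entropy-like'' part depending only on $M$ and a ``cross'' part; I would analyze the two sums' contributions to bias and variance coordinatewise and then sum over $i\in[k]$.

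First, for the bias: conditionally on the $M$-samples, $\mE[\log(N_i+c)] $ is what must be compared to $\log(nQ_i)$; using a Taylor expansion of $\log(\cdot)$ around the mean $nQ_i$ of $N_i$, the leading correction is of order $\frac{1}{nQ_i}$ (from the second-order term $-\tfrac12 \Var(N_i)/(nQ_i)^2 \approx -\tfrac{1}{2nQ_i}$), plus the deterministic shift from the additive constant $c$ and the renormalization $n+kc$. Summing $P_i \cdot \frac{1}{nQ_i}$ over $i$ and invoking the density-ratio bound $P_i/Q_i \le f(k)$ gives a contribution $\lesssim \frac{kf(k)}{n}$; similarly the bias from estimating $\sum_i \hat P_i \log \hat P_i$ contributes $\lesssim \frac{k}{m}$ by the same argument applied to $M_i$ (this is exactly the plug-in entropy bias bound). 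One must be careful on the event $\{M_i = 0\}$, where the term vanishes, and on events where $N_i$ is small; I would handle small-$N_i$ events by a separate crude bound, exploiting that $c>0$ keeps $\log(N_i+c)$ from blowing up and that $nQ_i$ small means such $i$ contribute little to $D(P\|Q)$ as well (again via $P_i \le f(k) Q_i$). Squaring the total bias yields the $\big(\frac{kf(k)}{n}+\frac{k}{m}\big)^2$ term.

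Second, for the variance: I would use the Efron--Stein (bounded-differences / martingale) inequality, changing one sample $X_j$ or one sample $Y_j$ at a time. Changing one $X_j$ perturbs two coordinates of $M$ by $\pm 1$, and the resulting change in $\hat D_{\mathrm{A-plug-in}}$ is $O\big(\frac{1}{m}(\log f(k) + \log\text{stuff})\big)$ per affected term; summing the squared differences over the $m$ coordinates gives a variance contribution $\lesssim \frac{\log^2 f(k)}{m}$ (here the clipping/boundedness built into the analysis, or the density-ratio constraint bounding $\log(\hat P_i/\hat Q'_i)$ by roughly $\log f(k)$ up to lower-order terms, is what produces the $\log^2 f(k)$). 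Changing one $Y_j$ perturbs $\hat Q'$ and the change in the estimator is $O\big(\hat P_i \cdot \frac{1}{N_i+c}\big)$-type; squaring, summing over $n$, and taking expectations with the density-ratio bound gives $\lesssim \frac{f(k)}{n}$. The main obstacle I anticipate is precisely this last variance step: the terms $\hat P_i / (N_i + c)$ are badly behaved when $N_i$ is small, so the naive bounded-difference bound is not uniformly small, and one needs to either (i) condition on a favorable event $\{N_i \gtrsim nQ_i\}$ (which holds with high probability when $n \ge 10kf(k)$, via a multiplicative Chernoff bound, since then $nQ_i$ is not too small relative to what matters), and bound the complementary event's contribution crudely, or (ii) directly compute $\mE[\hat P_i^2 / (N_i+c)^2]$ using independence of $M_i$ and $N_i$ and known moment bounds for inverse-Poisson/Binomial-type quantities. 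Assembling the four bounds and using $(a+b)^2 \asymp a^2 + b^2$ completes the proof.
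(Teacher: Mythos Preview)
Your bias analysis is essentially in the same spirit as the paper's: decompose into the plug-in entropy bias (giving $k/m$) plus the cross-term bias (giving $kf(k)/n$ via the density-ratio bound). The paper carries this out not via a raw Taylor expansion but via the inequality $\log(1+x) \le x$ for the lower side and a decomposition $A_1 + A_2$ for the upper side, where $A_1$ is controlled using the $\chi^2$-divergence upper bound on KL divergence; but the structure and outcome match yours.

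Where you genuinely diverge from the paper is in the variance. The paper does \emph{not} use Efron--Stein or bounded differences. Instead it uses the crude bound $\Var[\hat D] \le \sum_i \mE\big[(\hat D_i - d_i)^2\big]$ with a deterministic centering $d_i$, and splits each summand into three pieces: $V^{(1)}_i$ capturing fluctuations of $\frac{M_i}{m}\log\frac{M_i}{m}$ around $P_i\log P_i$; $V^{(2)}_i$ capturing fluctuations of $\log(N_i+c)$ around $\log(nQ_i+c)$, weighted by $\frac{M_i}{m}$; and $V^{(3)}_i = \frac{P_i}{m}\log^2\frac{P_i}{(nQ_i+c)/(n+kc)}$. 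It is $\sum_i V^{(3)}_i$ that yields the $\frac{\log^2 f(k)}{m}$ term (via the density-ratio bound applied to the \emph{true} ratio $P_i/Q_i$), and $\sum_i V^{(2)}_i$ that yields $\frac{f(k)}{n}$.

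Your Efron--Stein route has a real obstacle at exactly the point where you wave your hands: changing one $X$-sample alters the estimator by roughly $\frac{1}{m}\log\frac{\hat P_i}{\hat Q'_i}$ at the affected bins, and the \emph{empirical} log-ratio is \emph{not} bounded by $\log f(k)$ --- it can be of order $\log(n/c)$ when $M_i$ is moderate and $N_i=0$. So the ``clipping/boundedness built into the analysis'' you appeal to is not actually present for $\hat D_{\mathrm{A-plug-in}}$ as defined. To salvage the approach you would have to take the expectation of the squared difference and show that the large-deviation events contribute negligibly, which is doable but is no longer a one-line bounded-difference argument; in effect you end up re-deriving something close to the paper's $V^{(2)}$ and $V^{(3)}$ bounds inside the Efron--Stein sum. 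The paper's direct decomposition is cleaner precisely because it centers at the \emph{true} log-ratio $\log\frac{P_i}{Q_i}$, for which the $\log f(k)$ bound holds by assumption.
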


\begin{proof}[Outline of Proof]
The proof consists of separately bounding  the bias and variance of the augmented plug-in estimator. The details are provided in Appendix \ref{app:prop1}.
\end{proof}
It can be seen that in the risk bound \eqref{eq:pluginupper}, the first term captures the squared bias, and the remaining terms correspond to the variance.
%\begin{proposition}\label{prop:lower-plug}
%If $m  \lesssim (k \vee\log^2f(k))$, or $n \lesssim kf(k)$, then for sufficiently large $k$
%\begin{align}
%  R(\hat{D}_{\mathrm{A-plug-in}},k,m,n,f(k)) \gtrsim 1.
%\end{align}
%\end{proposition}
{
\begin{proposition}\label{prop:lower-plug}
For all $k \in \mathbb{N}$, $m\ge k$, $n \ge 10kf(k)$, $f(k)\ge 10$, and $c \in [\frac{2}{3},\frac{5}{4}]$,
\begin{align}\label{eq:pluginlower}
  R(\hat{D}_{\mathrm{A-plug-in}}&,k,m,n,f(k)) \nn \\
   \gtrsim &\left(\frac{k}{m} + \frac{kf(k)}{n} \right)^2 +\frac{\log^2f(k)}{m} + \frac{f(k)}{n}.
\end{align}
%Therefore, if $m  \gg  (k\vee\log^2f(k))$ and $n \gg  kf(k)$, $R(\hat{D}_{\mathrm{A-plug-in}},k,m,n,f(k)) \rightarrow 0$ as $k\to\infty$.
\end{proposition}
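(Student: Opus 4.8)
The plan is to lower bound $R(\hat{D}_{\mathrm{A-plug-in}},k,m,n,f(k))$ separately by a constant multiple of each of the four quantities $\frac{k^2}{m^2}$, $\frac{k^2f(k)^2}{n^2}$, $\frac{\log^2 f(k)}{m}$ and $\frac{f(k)}{n}$, using a possibly different pair $(P,Q)\in\mathcal{M}_{k,f(k)}$ for each, and then combine them via $\big(\frac{k}{m}+\frac{kf(k)}{n}\big)^2\asymp\frac{k^2}{m^2}+\frac{k^2f(k)^2}{n^2}$. For the two ``bias'' terms I would exhibit $(P,Q)$ for which the \emph{bias} of the estimator is large and use $\mathbb{E}[(\hat{D}-D)^2]\ge(\mathbb{E}\hat{D}-D)^2$; for the two ``variance'' terms I would invoke Le Cam's two-point method, which lower bounds the minimax risk $R^*$ and therefore, a fortiori, the risk of $\hat{D}_{\mathrm{A-plug-in}}$.

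For the bias I would write $\mathbb{E}[\hat{D}_{\mathrm{A-plug-in}}]-D(P\|Q)=B_1(P)+B_2(P,Q)$, where $B_1(P)=\sum_i\big(\mathbb{E}[\tfrac{M_i}{m}\log\tfrac{M_i}{m}]-P_i\log P_i\big)\ge 0$ is exactly minus the bias of the plug-in entropy estimator and $B_2(P,Q)=-\sum_i P_i\big(\mathbb{E}[\log\tfrac{N_i+c}{n+kc}]-\log Q_i\big)$ collects the error of the add-constant estimate of $\log Q_i$. For $\frac{k^2}{m^2}$ take $P=Q$ uniform on $[k]$: then $B_1\asymp\frac{k}{m}$ by the known two-sided bounds on the plug-in entropy bias (which give $\asymp K/m$ for the uniform distribution on $K$ symbols whenever $m\ge K$), while $kcQ_i=c$ makes $\log Q_i+\log(n+kc)=\log(nQ_i+c)\ge\mathbb{E}[\log(N_i+c)]$ by Jensen, so $B_2\ge 0$ and the bias is $\gtrsim\frac{k}{m}$. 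For $\frac{k^2f(k)^2}{n^2}$ take $P$ uniform on $\{1,\dots,k-1\}$, $Q_i=\frac{1}{(k-1)f(k)}$ for $i\le k-1$, and $Q_k=1-\frac{1}{f(k)}$, so the density ratio is exactly $f(k)$ on $\mathrm{supp}(P)$ and $D(P\|Q)=\log f(k)$. Here $nQ_i=\frac{n}{(k-1)f(k)}\ge 10$ by $n\ge 10kf(k)$ and $kcQ_i\le\frac{2c}{f(k)}$ is small; a careful expansion of $\mathbb{E}[\log(N_i+c)]$ for $N_i\sim\mathrm{Bin}(n,Q_i)$ shows $\mathbb{E}[\log(N_i+c)]-\log(nQ_i+kcQ_i)\asymp\frac{c-1/2}{nQ_i}>0$ (this is precisely where $c\ge\frac{2}{3}>\frac{1}{2}$ is needed), whence $B_2\asymp-\frac{1}{nQ_i}\asymp-\frac{kf(k)}{n}$ while still $B_1\asymp\frac{k}{m}>0$. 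So the bias equals $\frac{k}{m}-\Theta\big(\frac{kf(k)}{n}\big)$; if $\frac{kf(k)}{n}$ exceeds a sufficiently large constant times $\frac{k}{m}$ then $B_2$ dominates and $|\mathrm{bias}|\gtrsim\frac{kf(k)}{n}$, or else $\frac{kf(k)}{n}$ is at most that constant times $\frac{k}{m}$, in which case $\frac{k^2f(k)^2}{n^2}\lesssim\frac{k^2}{m^2}$ is already delivered by the first construction; either way $R\gtrsim\frac{k^2f(k)^2}{n^2}$.

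For the two variance terms I would use Le Cam with two pairs that $m$ (resp.\ $n$) samples cannot distinguish with constant probability but whose KL divergences differ appreciably. For $\frac{\log^2 f(k)}{m}$: fix $Q$ and let $P^{(1)},P^{(2)}$ differ by moving $P$-mass $\delta\asymp\frac{1}{\sqrt m}$ off a symbol having $P$-mass $\Theta(1)$ and ratio $P_i/Q_i\asymp f(k)$, spreading the moved mass over the other symbols so that $H^2(P^{(1)},P^{(2)})\asymp\delta^2$ and all ratios remain $\le f(k)$; then $m$ samples cannot tell the two apart, while $|D(P^{(1)}\|Q)-D(P^{(2)}\|Q)|\asymp\delta\log f(k)\asymp\frac{\log f(k)}{\sqrt m}$, so $R\ge R^*\gtrsim\frac{\log^2 f(k)}{m}$. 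For $\frac{f(k)}{n}$: fix $P$ and let $Q^{(1)},Q^{(2)}$ differ by raising by $\delta\asymp\frac{1}{\sqrt{nf(k)}}$ the $Q$-value of a symbol having $Q$-mass $\Theta\big(\frac{1}{f(k)}\big)$ and ratio $P_i/Q_i\asymp f(k)$ (raising $Q_i$ only lowers that ratio, so the constraint is preserved), spreading the compensation over the other symbols so that $H^2(Q^{(1)},Q^{(2)})\asymp\delta^2 f(k)$; then $n$ samples cannot tell the two apart, while $|D(P\|Q^{(1)})-D(P\|Q^{(2)})|\asymp\delta f(k)\asymp\sqrt{\tfrac{f(k)}{n}}$, so $R\gtrsim\frac{f(k)}{n}$. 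Adding the four bounds gives the claimed lower bound.

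I expect the crux to be the second bias construction, i.e.\ controlling $\mathbb{E}[\log(N_i+c)]$ for $N_i\sim\mathrm{Bin}(n,Q_i)$ when $nQ_i$ is only guaranteed to be bounded below by a constant: one must establish $\mathbb{E}[\log(N_i+c)]=\log(nQ_i)+\frac{c-1/2}{nQ_i}+O\big(\frac{1}{(nQ_i)^2}\big)$ with an explicit and genuinely small $O$-constant, by Taylor expanding $\log(\cdot+c)$ on $\{|N_i-nQ_i|\le\tfrac12 nQ_i\}$, bounding the complementary event by a Chernoff tail, and handling $\{N_i=0\}$ separately. This is what fixes the sign of $B_2$ (hence the necessity of $c>\tfrac12$, furnished by $c\ge\tfrac23$) and its order $\Theta\big(\frac{1}{nQ_i}\big)$, and it feeds the elementary but slightly fussy bookkeeping that rules out cancellation between $B_1$ and $B_2$; the remaining pieces — the uniform-distribution entropy-bias bound and the two Le Cam computations — are routine.
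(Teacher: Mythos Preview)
Your overall architecture matches the paper's exactly: two bias lower bounds via explicit $(P,Q)$ pairs and a case split (one case where $B_1$ dominates, one where $B_2$ dominates), plus two Le Cam two-point arguments for the variance terms, all four combined additively. The Le Cam constructions you sketch are essentially the paper's.

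The bias constructions, however, differ. For the $\tfrac{k^2}{m^2}$ term you take $P=Q$ uniform, which makes $B_2\ge 0$ by Jensen in one line; the paper instead keeps $P$ uniform but chooses a non-uniform $Q$ with $Q_i=\tfrac{10}{kf(k)}$ on $k-1$ bins and lower-bounds $B_2\ge -\tfrac{ckf(k)}{10n}$, invoking the case assumption to absorb that loss into $B_1$. Your choice is cleaner here. For the $\tfrac{k^2f(k)^2}{n^2}$ term the routes diverge more substantially: you put $P$ uniform on $k-1$ bins with $Q_i=\tfrac{1}{(k-1)f(k)}$, so $nQ_i\ge 10$, and rely on the expansion $\mathbb{E}[\log(N_i+c)]=\log(nQ_i)+\tfrac{c-1/2}{nQ_i}+O((nQ_i)^{-2})$; the role of $c>\tfrac12$ is transparent as the sign of the leading correction. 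The paper instead takes $P_i=\tfrac{f(k)}{4n}$, $Q_i=\tfrac{1}{4n}$ (so $nQ_i=\tfrac14$, far from the asymptotic regime), and controls $B_2$ via Bernstein's approximation lemma; the restriction $c\in[\tfrac23,\tfrac54]$ then emerges from a concrete numerical inequality rather than from the $c-\tfrac12$ coefficient. Your argument is conceptually simpler but, as you note, hinges on making the $O((nQ_i)^{-2})$ remainder explicit and small at $nQ_i\approx 10$; the paper's avoids that asymptotic bookkeeping at the cost of a more opaque derivation. Both are valid, and your case-split logic (either $B_2$ dominates or $\tfrac{kf(k)}{n}\lesssim\tfrac{k}{m}$ so the first construction already suffices) is the same as the paper's.
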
}

\begin{proof}[Outline of Proof]
We provide the basic idea of the proof here with the details provided in Appendix \ref{app:prop2}.

{
We first derive the terms corresponding to the squared bias in the lower bound
by choosing two different pairs of worst-case distributions. Note that the bias of the augmented plug-in estimator can be decomposed into: (1) the bias due to estimating $\sum_{i=1}^k P_i\log P_i$; and (2) the bias due to estimating $\sum_{i=1}^k -P_i\log Q_i$. Since $x\log x$ is a convex function, we can show that the first bias term is always positive. But the second bias term can be negative, so that the two bias terms may cancel out partially or even fully. Thus, to derive the minimax lower bound, we first determine which bias term dominates, and then construct a pair of distributions such that the dominant bias term is either lower bounded by some positive terms or upper bounded by negative terms.}

{\textbf{Case I:}
If $\frac{k}{m}\ge (1+\epsilon) \frac{c k f(k)}{5 n}$, for some $\epsilon>0$, which implies that the number of samples drawn from $P$ is relatively smaller than the number of samples drawn from $Q$, the first bias term dominates. Setting $P$ to be uniform and $Q  =\big(\frac{10}{kf(k)},\ \cdots,\  \frac{10}{kf(k)},\ 1-\frac{10(k-1)}{kf(k)}\big)$, it can be shown that the bias is lower bounded by $\frac{k}{m} + \frac{kf(k)}{n}$ in the order sense.}

{\textbf{Case II:}
If $\frac{k}{m}< (1+\epsilon) \frac{c k f(k)}{5n}$, which implies that the number of samples drawn from $P$ is relatively larger than the number of samples drawn from $Q$, the second bias term dominates. Setting $P=\big(\frac{f(k)}{4n},\cdots,\frac{f(k)}{4n},1-\frac{(k-1)f(k)}{4n}\big)$, and $Q=\big(\frac{1}{4n},\cdots,\frac{1}{4n}, 1-\frac{k-1}{4n}\big)$, it can be shown that the bias is upper bounded by $-\big(\frac{k}{m} + \frac{kf(k)}{n}\big)$ in the order sense.}

%Since our assumption $n\ge 10kf(k)$,

{The proof of the terms corresponding to the variance in the lower bound can be done using the minimax lower bound given by Le Cam's two-point method.} %More details will be presented in the proof of the minimax lower bound in Section \ref{sec:minimaxlow}.
\end{proof}

\section{Minimax Quadratic Risk over $\mathcal{M}_{k,f(k)}$}\label{sec:minimax}

Our third main result Theorem \ref{thm:minimax} characterizes the minimax quadratic risk (within a constant factor) of estimating KL divergence over $\mathcal{M}_{k,f(k)}$. In this section, we describe ideas and central arguments to show this theorem with detailed proofs relegated to the appendix.

\subsection{Poisson Sampling}\label{sec:poisson}
The sufficient statistics for estimating $D(P\|Q)$ are the histograms of the samples $M=(M_1,\ldots,M_k)$ and $N=(N_1,\ldots,N_k)$, and $M$ and $N$ are multinomial distributed.   However, the histograms are not independent across different bins, which is hard to analyze.
In this subsection, we introduce the \emph{Poisson sampling} technique to handle the dependency of the multinomial distribution across different bins, as in \cite{wu2014minimax} for entropy estimation. Such a technique is used in our proofs to develop the lower and upper bounds on the minimax risk in Sections \ref{sec:minimaxlow} and \ref{sec:minimaxup}.

In Poisson sampling, we replace the deterministic sample sizes $m$ and $n$ with Poisson random variables $m' \sim \mathrm{Poi}(m)$ with mean $m$ and $n' \sim \mathrm{Poi}(n)$ with mean $n$, respectively. Under this model,
we draw $m'$ and $n'$ i.i.d. samples from $P$ and $Q$, respectively. The sufficient statistics $M_i \sim \mathrm{Poi}(nP_i)$ and $N_i \sim \mathrm{Poi}(nQ_i)$ are then independent across different bins, which significantly simplifies the analysis.

%samples from the distribution $P$ and $Q$, respectively. The main benefit is that now the sufficient statistics $M_i \sim \mathrm{Poi}(nP_i)$ and $N_i \sim \mathrm{Poi}(nQ_i)$ are independent, which significantly simplifies the analysis.

Analogous to the minimax risk \eqref{eq:minimaxriskdensity}, we define its counterpart under the Poisson sampling model as
\begin{flalign}\label{eq:poissonsampling}
   \widetilde{R}^*(&k,m,n,f(k))\nn \\
   &\triangleq \inf_{\hat{D}} \sup_{(P,Q) \in \mathcal{M}_{k,f(k)}} \mathbb{E}\big[\big(\hat{D}(M,N)-D(P\|Q)\big)^2\big],
\end{flalign}
where the expectation is taken over $M_i\sim \mathrm{Poi}(nP_i)$ and $N_i\sim \mathrm{Poi}(nQ_i)$ for $i\in [k]$. Since the Poisson sample sizes are concentrated near their means $m$ and $n$ with high probability, the minimax risk under Poisson sampling is close to that with fixed sample sizes as stated in the following lemma.
\begin{lemma}\label{lemma:poisson}
There exists a constant $b>\frac{1}{4}$ such that
\begin{align}\label{eq:poissonineq}
  \widetilde{R}^*&(k,2m,2n,f(k))-e^{-bm}\log^2 f(k) -e^{-bn}\log^2 f(k) \nn \\
  &\le R^*(k,m,n,f(k))\le 4\widetilde{R}^*(k,m/2,n/2,f(k)).
\end{align}
\end{lemma}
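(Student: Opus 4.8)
\textbf{Proof plan for Lemma \ref{lemma:poisson}.}
The plan is to exploit the standard coupling between Poisson sampling and multinomial sampling, namely that if we draw $m' \sim \mathrm{Poi}(m)$ samples and then condition on $m'$, the resulting histogram is exactly $\mathrm{Multinomial}(m', P)$. The two inequalities are proved by separate arguments, each conditioning on the realized Poisson sample sizes and then invoking the monotonicity of the minimax risk in the sample sizes (more data cannot hurt: $R^*(k,m,n,f(k))$ is non-increasing in $m$ and in $n$, since any estimator based on fewer samples can be simulated with more).

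For the \emph{right-hand inequality} $R^*(k,m,n,f(k))\le 4\widetilde{R}^*(k,m/2,n/2,f(k))$: first I would fix an estimator $\hat D$ that is near-optimal for the Poisson model with means $m/2$ and $n/2$. Given $m$ actual samples from $P$ and $n$ from $Q$, construct a randomized estimator that draws $m'\sim\mathrm{Poi}(m/2)$ and $n'\sim\mathrm{Poi}(n/2)$, and if $m'\le m$ and $n'\le n$ feeds the first $m'$ and $n'$ samples into $\hat D$; on the failure event $\{m'>m\}\cup\{n'>n\}$ it outputs an arbitrary bounded value (or $0$). On the success event the output is distributed exactly as under Poisson sampling. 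Since $\mathrm{Poi}(m/2)$ has median at most its mean $m/2 < m$, we have $\Pr[m'>m]\le 1/2$ and similarly $\Pr[n'>n]\le 1/2$, so the success probability is at least... actually I would instead use a union bound with Chernoff-type concentration to get that the failure probability is bounded away from $1$ by a constant, and absorb the constant into the factor $4$; the risk on the success event is at most $\widetilde R^*(k,m/2,n/2,f(k))$ by conditioning, and the risk contribution from the failure event can be controlled because the KL divergence is bounded by $\log f(k)$ on $\mathcal M_{k,f(k)}$, making the additive loss negligible after the constant-factor inflation. This yields the clean multiplicative bound with constant $4$.

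For the \emph{left-hand inequality}: I would start from an optimal estimator $\hat D^*$ for the fixed-sample-size problem with $m$ and $n$ samples, and use it to build an estimator for the Poisson model with means $2m$ and $2n$. Given $m'\sim\mathrm{Poi}(2m)$ samples from $P$ and $n'\sim\mathrm{Poi}(2n)$ from $Q$, if $m'\ge m$ and $n'\ge n$ we retain only $m$ and $n$ of them and apply $\hat D^*$; otherwise we output $0$ (or any value in $[0,\log f(k)]$). On the event $\{m'\ge m,\ n'\ge n\}$ the conditional risk is at most $R^*(k,m,n,f(k))$ by monotonicity in the sample sizes; on the complementary event the squared error is at most $\log^2 f(k)$, and $\Pr[m'<m]\le e^{-bm}$, $\Pr[n'<n]\le e^{-bn}$ for some constant $b>\tfrac14$ by the Chernoff bound for the lower tail of a Poisson random variable with mean $2m$ (resp.\ $2n$). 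Combining these and taking the infimum over estimators on the left gives
\[
  \widetilde R^*(k,2m,2n,f(k)) \le R^*(k,m,n,f(k)) + e^{-bm}\log^2 f(k) + e^{-bn}\log^2 f(k),
\]
which rearranges to the stated bound.

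The main obstacle is a bookkeeping issue rather than a conceptual one: making the ``retain only $m$ of the $m'$ samples'' step rigorous requires noting that conditioned on $m'$, the multinomial histogram of any fixed-size sub-sample is again multinomial, and that discarding samples is a valid (possibly suboptimal) operation, so one really is comparing against $R^*$ at the smaller sample size. One must also be careful that on $\mathcal M_{k,f(k)}$ the divergence and any reasonable estimator output can be taken bounded by $\log f(k)$, so that the error incurred on the low-probability ``bad'' events is at most $\log^2 f(k)$; this is where the $\log^2 f(k)$ prefactors in the lemma come from. The value of $b$ comes from the Poisson lower-tail bound $\Pr[\mathrm{Poi}(\lambda)\le \lambda/2]\le e^{-\lambda(1/2-\ln 2\,)/1}$-type estimate evaluated at $\lambda=2m$, which gives a constant strictly larger than $1/4$; I would just cite the standard Chernoff bound for Poisson tails rather than re-deriving it.
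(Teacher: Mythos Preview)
Your argument for the left-hand inequality is essentially the paper's: build a Poisson-model estimator by subsampling down to $m,n$ on the high-probability event $\{m'\ge m,\ n'\ge n\}$, bound the bad event by the Poisson lower-tail Chernoff bound (the paper gets $b=1-\log 2>1/4$), and use that any output in $[0,\log f(k)]$ incurs squared error at most $\log^2 f(k)$ there. This part is fine.

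The right-hand inequality is where your plan slips. Your subsampling construction (draw $m'\sim\mathrm{Poi}(m/2)$, $n'\sim\mathrm{Poi}(n/2)$, feed the first $m',n'$ samples to a near-optimal Poisson estimator when $m'\le m,\ n'\le n$, else output something bounded) yields
\[
R^*(k,m,n,f(k))\ \le\ \widetilde R^*(k,m/2,n/2,f(k))\ +\ \bigl(\Pr[m'>m]+\Pr[n'>n]\bigr)\log^2 f(k),
\]
i.e.\ an \emph{additive} exponentially small correction. Your remark that this ``yields the clean multiplicative bound with constant $4$'' is not justified: there is no mechanism in the argument by which the additive $e^{-cm}\log^2 f(k)$ term gets absorbed into $3\,\widetilde R^*$, since you have no a priori lower bound on $\widetilde R^*$. (For the paper's eventual use of the lemma your additive bound would in fact suffice, because $\widetilde R^*$ is only polynomially small in the regimes considered; but it does not prove the inequality as stated.)

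The paper obtains the clean factor $4$ by a different route: it invokes the minimax theorem $R^*=\sup_\pi\inf_{\hat D}\mathbb E_\pi[(\hat D-D)^2]$, fixes a prior $\pi$, and lower-bounds the \emph{Poisson} Bayes risk by conditioning on the realized sample sizes $(i,j)$, which gives a mixture of fixed-sample Bayes risks. Using monotonicity of the Bayes risk in $(i,j)$ together with Markov's inequality $\Pr[\mathrm{Poi}(m/2)\le m]\ge 1/2$, one gets $\widetilde R^*(k,m/2,n/2,f(k))\ge \tfrac14\,\beta_{m,n}(\pi)$ for every prior $\pi$, and then takes the supremum over $\pi$. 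The factor $4$ comes from $\tfrac12\cdot\tfrac12$, not from any tail bound, which is why no additive $\log^2 f(k)$ term appears on this side.
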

\begin{proof}
  See Appendix \ref{app:poisson}.
\end{proof}

Thus, in order to show Theorem \ref{thm:minimax}, it suffices to bound the Poisson risk $\widetilde{R}^*(k,m,n,f(k))$. In Section \ref{sec:minimaxlow}, a lower bound on the minimax risk with deterministic sample size is derived, and in Section \ref{sec:minimaxup}, an upper bound on the minimax risk with Poisson sampling is derived, which further yields an upper bound on the minimax risk with deterministic sample size. It can be shown that the upper and lower bounds match each other (up to a constant factor).

%The next two sections respectively develop lower and upper bounds on the Poisson risk, which match with each other (up to a constant factor).

\subsection{Minimax Lower Bound}\label{sec:minimaxlow}

In this subsection, we develop the following lower bound on the minimax risk for the estimation of KL divergence over the set $\mathcal{M}_{k,f(k)}$.
\begin{proposition}\label{prop:GMLBentropy}
If $m\gtrsim \frac{k}{\log k}$, $n\gtrsim \frac{kf(k)}{\log k}$, $f(k) \geq \log^2 k$ and $\log ^2 n\lesssim k$,
\begin{align}\label{eq:lowerbound}
  {R}^*&(k,m,n,f(k)) \nn \\
  &\gtrsim \left(\frac{k}{m\log k}+\frac{kf(k)}{n\log k}\right)^2 +\frac{\log ^2 f(k)}{m}+\frac{f(k)}{n}.
\end{align}
\end{proposition}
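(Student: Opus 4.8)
The plan is to establish the lower bound \eqref{eq:lowerbound} as a sum of three separate contributions, each proved by a different technique, and then to combine them. The two ``variance'' terms $\frac{\log^2 f(k)}{m}$ and $\frac{f(k)}{n}$ are the easier ones: I would obtain each by Le Cam's two-point method (cf.\ Theorem~\ref{thm:MLB} and the discussion in Section~\ref{sec:lowerbound1}), choosing the two pairs of distributions so that the density ratio constraint $P_i/Q_i \le f(k)$ is respected everywhere. For $\frac{\log^2 f(k)}{m}$, I would perturb a single large coordinate of $P$ (keeping $Q$ fixed with one component of size $\asymp 1/f(k)$, so that changing $P_i$ by $\asymp \sqrt{1/m}$ moves $D(P\|Q)$ by $\asymp \frac{\log f(k)}{\sqrt m}$) and verify that the $m$ samples from $P$ cannot distinguish the two within constant total variation. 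For $\frac{f(k)}{n}$, I would symmetrically perturb a small component of $Q$ of size $\asymp f(k)/n$, so the KL value moves by $\asymp \sqrt{f(k)/n}$ while the $n$ samples from $Q$ remain nearly indistinguishable; the bounded-ratio constraint dictates the companion $P_i \asymp f(k)^2/n$.

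The main work is the squared-bias term $\big(\frac{k}{m\log k} + \frac{kf(k)}{n\log k}\big)^2$, and I would get this via the generalized (fuzzy) Le Cam / two-composite-hypothesis method of \cite{wu2014minimax}, adapted to the two-distribution setting. The idea is to place priors on $(P,Q)$: one prior under which the ``small-mass'' coordinates of $P$ are drawn from a distribution $U$, another under which they are drawn from $U'$, with $U,U'$ chosen so that their first $L \asymp \log k$ moments match (this is the standard moment-matching construction, solved via the dual of a Chebyshev-type polynomial approximation problem for $x\log x$ on $[0, c/m]$), while $\mE_U[x\log x]$ and $\mE_{U'}[x\log x]$ differ by $\asymp \frac{1}{m\log k}$ per coordinate. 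Summed over $\asymp k$ coordinates this yields an $\asymp \frac{k}{m\log k}$ separation in the expected KL value. Because the first $L$ moments agree, the induced distributions of the Poisson statistics $M_i \sim \mathrm{Poi}(m P_i)$ are statistically close (this is where $m \gtrsim k/\log k$ and $\log^2 n \lesssim k$ enter, to control the error and the number of effective coordinates), so no estimator can resolve the two priors. I would run the analogous construction on $Q$ (moment-matching on $[0, c f(k)/n]$ for the function $-\log q$, weighted appropriately) to obtain the $\frac{kf(k)}{n\log k}$ contribution, taking whichever of the two is larger so that $\frac{k}{m\log k} + \frac{kf(k)}{n\log k}$ and its maximum agree up to a factor of $2$.

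The delicate point — and the step I expect to be the real obstacle — is maintaining the density-ratio constraint $P_i/Q_i \le f(k)$ under these priors. In the $P$-side construction I must pair each random $P_i$ with a $Q_i$ large enough that the ratio never exceeds $f(k)$, yet small enough that the $\asymp k$ such coordinates still sum to at most $1$ on the $Q$ side; this forces $Q_i \asymp 1/k$ on those coordinates, which is consistent only because $P_i \lesssim 1/m \lesssim f(k)/k$ under the stated hypotheses. Symmetrically, in the $Q$-side construction the random $Q_i \asymp f(k)/n$ must be paired with $P_i \le f(k) Q_i \asymp f(k)^2/n$, and I need $\sum_i P_i \le 1$, i.e.\ roughly $k f(k)^2/n \lesssim 1$; handling this requires using only a sub-collection of coordinates or renormalizing, and checking that the renormalization does not destroy the moment-matching. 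I would then invoke Lemma~\ref{lemma:poisson} to transfer everything to the Poissonized model where independence across bins makes the $\chi^2$/total-variation computations between the two mixtures tractable, assemble the three bounds, and conclude \eqref{eq:lowerbound} by noting $a^2 + b^2 + c \gtrsim \max\{a,b,c\}^2$-type combining (here keeping the sum form as stated).
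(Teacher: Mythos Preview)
Your overall architecture matches the paper: two-point Le Cam for the variance terms $\frac{\log^2 f(k)}{m}$ and $\frac{f(k)}{n}$, and generalized Le Cam with moment matching for the squared-bias term, split into a $P$-side and a $Q$-side contribution. The $P$-side reduction (pair random $P_i$ with $Q_i\asymp 1/k$, i.e.\ $Q$ essentially uniform, and invoke the entropy lower bound of \cite{wu2014minimax}; the hypothesis $f(k)\ge\log^2 k$ is exactly what makes $P_i/Q_i\le f(k)$ hold) is the paper's argument. For the variance terms your scales are off (for $\frac{f(k)}{n}$ the paper takes $Q_i\asymp 1/(kf(k))$ and $P_i\asymp 1/k$, not $Q_i\asymp f(k)/n$), but these are tunable details.

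The genuine gap is in your $Q$-side bias construction. You propose random $Q_i$ on $[0,\,cf(k)/n]$ paired with $P_i\asymp f(k)^2/n$, and you already see that $\sum_i P_i\lesssim 1$ then forces the unwanted condition $kf(k)^2/n\lesssim 1$. Neither of your proposed fixes works: restricting to a sub-collection of $k'\lesssim n/f(k)^2$ coordinates kills the $k$ factor in the separation, and more fundamentally, once the ratio constraint $P_i/Q_i\le f(k)$ forces $Q_i\ge P_i/f(k)\asymp f(k)/n$, the support of $Q_i$ becomes an interval $[f(k)/n,\,cf(k)/n]$ with \emph{bounded} endpoint ratio. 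On such an interval $\log x$ is uniformly smooth, so the best degree-$L$ polynomial approximation error decays rapidly in $L$ and the moment-matching separation $|\mE[\log V]-\mE[\log V']|$ collapses to $o(1)$ rather than a constant. The paper resolves this by taking $P_i$ \emph{smaller}, namely $P_i=\frac{f(k)}{n\log k}$ (fixed and known, which only lowers the minimax risk), so that the ratio constraint requires only $Q_i\ge \frac{1}{n\log k}$. The moment-matching interval for $Q_i$ is then $[\frac{1}{n\log k},\,\frac{c_4\log k}{n}]$, whose endpoint ratio is $\asymp\log^2 k$; by the approximation lemma $E_{\lfloor cL\rfloor}(\log,[L^{-2},1])\gtrsim 1$ (Lemma~\ref{lemma:approx}), the separation $d=|\mE[\log V]-\mE[\log V']|$ stays bounded below by a constant. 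Summing over $k-1$ coordinates gives total separation $\asymp (k-1)\cdot\frac{f(k)}{n\log k}\cdot d\asymp\frac{kf(k)}{n\log k}$, and $\sum_i P_i\asymp\frac{kf(k)}{n\log k}\lesssim 1$ now follows directly from the hypothesis $n\gtrsim kf(k)/\log k$. The point you were missing is that saturating the density-ratio bound is counterproductive here; shrinking $P_i$ widens the admissible range of $Q_i$ enough to make the $\log$ approximation problem hard.
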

\begin{proof}[Outline of Proof]
We describe the main idea in the development of the lower bound, with the detailed proof provided in Appendix \ref{app:prop3}.

To show Proposition \ref{prop:GMLBentropy}, it suffices to show that the minimax risk is lower bounded separately by each individual terms in \eqref{eq:lowerbound} in the order sense. The proof for the last two terms requires the Le Cam's two-point method, and the proof for the first term requires more general method, as we outline in the following.

\textbf{Le Cam's two-point method:} The last two terms in the lower bound correspond to the variance of the estimator.

The bound ${R}^*(k,m,n,f(k))\gtrsim \frac{\log^2 f(k)}{m}$ can be shown by setting
\begin{small}
\begin{align}
      P^{(1)} & =\Big(\frac{1}{3(k-1)},\ \dots,\ \frac{1}{3(k-1)},\ \frac{2}{3} \Big), \\
      P^{(2)} & = \Big(\frac{1-\epsilon}{3(k-1)},\ \dots,\ \frac{1-\epsilon}{3(k-1)},\ \frac{2+\epsilon}{3} \Big),\\
      Q^{(1)} & =Q^{(2)} \nn \\
      & =\Big(\frac{1}{3(k-1)f(k)},\ \dots,\ \frac{1}{3(k-1)f(k)},\ 1-\frac{1}{3f(k)} \Big) ,
\end{align}
\end{small}
where $\epsilon = \frac{1}{\sqrt{m}}$.
% we can show the minimax risk is lower bounded by $\frac{\log^2 f(k)}{m}$.

%The bound $\widetilde{R}^*(k,m,n,f(k))\gtrsim \frac{\log^2 k}{m}$ can be shown by choosing $Q_1=Q_2$ to be uniform distribution, and $P_1$ and $P_2$ as follows:
%\begin{flalign}
%  P_1&=\left(\frac{1}{3(k-1)},\ldots,\frac{1}{3(k-1)},\frac{2}{3}\right),\\
%  P_2&=\left(\frac{1+\epsilon}{3(k-1)},\ldots,\frac{1+\epsilon}{3(k-1)},1-\frac{1+\epsilon}{3}\right),
% % Q_1=Q_2&=[\frac{1}{k},\frac{1}{k},\ldots,\frac{1}{k}].
%\end{flalign}
%where $\epsilon= \frac{1}{\sqrt{n}}$.
%we can show the minimax risk is lower bounded by $\frac{\log k^2}{m}$.

The bound ${R}^*(k,m,n,f(k))\gtrsim \frac{f(k)}{n}$ can be shown by choosing
\begin{small}
\begin{align}
      P^{(1)} & =P^{(2)}  =\Big(\frac{1}{3(k-1)},\ 0,\ \dots,\ \frac{1}{3(k-1)},\ 0,\ \frac{5}{6} \Big), \\
      Q^{(1)} & = \Big(\frac{1}{2(k-1)f(k)},\ \dots,\ \frac{1}{2(k-1)f(k)},\ 1-\frac{1}{2f(k)} \Big),\\
      Q^{(2)}& =\Big(\frac{1-\epsilon}{2(k-1)f(k)},\ \frac{1+\epsilon}{2(k-1)f(k)},\ \dots,\ \nn \\
      & \quad \quad \frac{1-\epsilon}{2(k-1)f(k)},\ \frac{1+\epsilon}{2(k-1)f(k)},\ 1-\frac{1}{2f(k)} \Big),
\end{align}
\end{small}
where $\epsilon= \sqrt{\frac{f(k)}{n}}$.
% we can show the minimax risk is lower bounded by $\frac{f(k)}{n}$.

\textbf{Generalized Le Cam's method:} In order to show ${R}^*(k,m,n,f(k))\gtrsim \Big(\frac{k}{m\log k}+\frac{kf(k)}{n\log k}\Big)^2$, it suffices to show that ${R}^*(k,m,n,f(k))\gtrsim \Big(\frac{k}{m\log k}\Big)^2$ and ${R}^*(k,m,n,f(k))\gtrsim\Big(\frac{kf(k)}{n\log k}\Big)^2$.
These two lower bounds can be shown by applying a generalized Le Cam's method, which involves the following two composite hypotheses \cite{Tsybakov2008}:
\begin{flalign*}
  H_0: D(P\|Q)\leq t \quad \text{ versus } \quad H_1: D(P\|Q)\geq t+d.
\end{flalign*}
Le Cam's two-point approach is a special case of this generalized method. If no test can distinguish $H_0$ and $H_1$ reliably, then we obtain a lower bound on the quadratic risk with order $d^2$. Furthermore, the optimal probability of error for composite hypothesis testing is equivalent to the Bayesian risk under the least favorable priors. Our goal here is to construct two prior distributions on $(P,Q)$ (respectively for two hypotheses), such that the two corresponding divergence values are separated (by $d$), but the error probability of distinguishing between the two hypotheses is large. However, it is difficult to design joint prior distributions on $(P,Q)$ that satisfy all the above desired property. In order to simplify this procedure, we set one of the distributions $P$ and $Q$ to be known. Then the minimax risk when both $P$ and $Q$ are unknown is lower bounded by the minimax risk with only either $P$ or $Q$ being unknown. In this way, we only need to design priors on one distribution, which can be shown to be sufficient for the proof of the lower bound.

{As shown in \cite{wu2014minimax}, the strategy which chooses two random variables with moments matching up to a certain degree ensures the impossibility to test in the minimax entropy estimation problem. The minimax lower bound is then obtained by maximizing the expected separation $d$ subject to the moment matching condition. For our KL divergence estimation problem, this approach also yields the optimal minimax lower bound, but the challenge here that requires special technical treatment is the construction of prior distributions for $(P,Q)$ that satisfy the bounded density ratio constraint.}

In order to show ${R}^*(k,m,n,f(k))\gtrsim \Big(\frac{k}{m\log k}\Big)^2$, we set $Q$ to be the uniform distribution and assume it is known. Therefore, the estimation of $D(P\|Q)$ reduces to the estimation of $\sum_{i=1}^k P_i\log P_i$, which is the minus entropy of $P$. Following steps similar to those in \cite{wu2014minimax}, we can obtain the desired result.

In order to show ${R}^*(k,m,n,f(k))\gtrsim\big(\frac{kf(k)}{n\log k}\big)^2$, we set
\begin{equation}
P=\Big(\frac{f(k)}{n\log k},\ldots,\frac{f(k)}{n\log k},1-\frac{(k-1)f(k)}{n\log k}\Big),
\end{equation}
and assume $P$ is known.
Therefore, the estimation of $D(P\|Q)$ reduces to the estimation of $\sum_{i=1}^kP_i\log Q_i$. We then properly design priors on $Q$ and apply the generalized Le Cam's method to obtain the desired result.
\end{proof}

We note that the proof of Proposition \ref{prop:GMLBentropy} may be strengthened by designing jointly distributed priors on $(P,Q)$, instead of treating them separately. This may help to relax or remove the conditions $f(k)\ge \log^2 k$  and $\log^2 n\lesssim k$ in Proposition \ref{prop:GMLBentropy}.

\subsection{Minimax Upper Bound via Optimal Estimator}\label{sec:minimaxup}

Comparing the lower bound in Proposition \ref{prop:GMLBentropy} with the upper bound in Proposition \ref{prop:upper-plug} that characterizes an upper bound on the risk for the augmented plug-in estimator, it is clear that there is a difference of a $\log k$ factor in the bias terms, which implies that the augmented plug-in estimator is not minimax optimal. A promising approach to fill in this gap is to design an improved estimator. Entropy estimation \cite{wu2014minimax,jiao2015minimax} suggests the idea to incorporate a polynomial approximation into the estimator in order to reduce the bias with price of the variance. In this subsection, we construct an estimator using this approach, and characterize an upper bound on the minimax risk in Proposition \ref{prop:upperbound}.

%We first construct the optimal minimax estimator with the detailed proof provided in Appendix \ref{app:prop3}.

The KL divergence $D(P\|Q)$ can be written as
\begin{equation}
D(P\|Q)=\sum_{i=1}^k P_i\log P_i-\sum_{i=1}^k P_i\log Q_i.
\end{equation}
%Namely, we note that $\gamma_L(x)$ is the best polynomial approximation of the function $x\log x$. And an unbiased estimator of $\gamma_L(x)$ is as follows:
%\begin{flalign}
%  g_L'(M_i)=\frac{1}{m}\sum_{i=1}^k \frac{a_j }{(c_1\log k)^{j-1}}(M_i)_j +\left(\log \frac{n}{c_1\log k}\right)M_i.
%\end{flalign}
%Based on $g_L'(M_i)$, the estimator $\hat D_1$ for $\sum_{i=1}^k P_i\log P_i $ is constructed as follows:
%\begin{flalign}
%  \hat D_1=\sum_{i=1}^k \Bigg(g_L'(M_i)\mathds{1}_{M_i'\leq c_2\log k}+ (\frac{M_i}{m}\log \frac{M_i}{m} + \frac{1}{2m} )\mathds{1}_{M_i'> c_2\log k}\Bigg).
%\end{flalign}
The first term equals the minus entropy of $P$, and the minimax optimal entropy estimator (denoted by $\hat D_1$) in \cite{wu2014minimax} can be applied to estimate it.
The major challenge in estimating $D(P\|Q)$ arises due to the second term.
We overcome the challenge by using a polynomial approximation to reduce the bias when $Q_i$ is small. Under Poisson sampling model, unbiased estimators can be constructed for any polynomials of $P_i$ and $Q_i$. Thus, if we approximate $P_i\log Q_i$ by polynomials, and then construct unbiased estimator for the polynomials, the bias of estimating $P_i\log Q_i$ is reduced to the error in the approximation of $P_i\log Q_i$ using polynomials.

A natural idea is to construct polynomial approximation for $|P_i\log Q_i|$ in two dimensions, exploiting the fact that $|P_i\log Q_i|$ is bounded by $f(k)$ in the order sense. The authors of \cite{han2016} also discuss the idea of a two-dimensional polynomial approximation. However, it is challenging to find the explicit form of the two-dimensional polynomial approximation for estimating KL divergence as shown in \cite{totik2014polynomial}. However, for some problems it is still worth exploring the two-dimensional approximation directly, as shown in \cite{jiao2016minimax}, where no one-dimensional approximation can achieve the minimax rate.

On the other hand, a one-dimensional polynomial approximation of $\log Q_i$ also appears challenging to develop. First of all, the function $\log x$ on interval $(0,1]$ is not bounded due to the singularity point at $x=0$. Hence, the approximation of $\log x$ when $x$ is near the point $x=0$ is inaccurate. Secondly, such an approach implicitly ignores the fact that $\frac{P_i}{Q_i}\leq f(k)$, which implies that when $Q_i$ is small, the value of $P_i$ should also be small.

Another approach is to rewrite the function $P_i\log Q_i$ as $(\frac{P_i}{Q_i})Q_i\log Q_i$, and then estimate $\frac{P_i}{Q_i}$ and $Q_i\log Q_i$ separately. Although the function $Q_i\log Q_i$ can be approximated using polynomial approximation and then estimated accurately (see \cite[Section 7.5.4]{tim63} and \cite{wu2014minimax}), it is difficult to find a good estimator for $\frac{P_i}{Q_i}$.

Motivated by those unsuccessful approaches, we design our estimator as follows. We rewrite $P_i\log Q_i$ as $P_i\frac{Q_i\log Q_i}{Q_i}$. When $Q_i$ is small, we construct a polynomial approximation $\mu_L(Q_i)$ for $Q_i\log Q_i$, which does not contain a zero-degree term. Then, $\frac{\mu_L(Q_i)}{Q_i}$ is also a polynomial, which can be used to approximate $\log Q_i$. Thus, an unbiased estimator for $\frac{\mu_L(Q_i)}{Q_i}$ is constructed. Note that the error in the approximation of $\log Q_i$ using $\frac{\mu_L(Q_i)}{Q_i}$ is not bounded, which implies that the bias of using unbiased estimator of $\frac{\mu_L(Q_i)}{Q_i}$ to estimate $\log Q_i$ is not bounded.  However, we can show that the bias of estimating $P_i\log Q_i$ is bounded, which is due to the density ratio constraint $f(k)$. The fact that when $Q_i$ is small, $P_i$ is also small helps to reduce the bias. In the following, we will introduce how we construct our estimator in detail.

%However, such an approach motivates us to think whether we can approximate $\frac{1}{Q_i}Q_i\log Q_i$ jointly using one-dimensional polynomial function of $Q_i$ and then derive the approximation error for the function $P_i\frac{1}{Q_i}Q_i\log Q_i$, rather than deriving the approximation error of $\frac{1}{Q_i}Q_i\log Q_i$ on its own. Due to the density ratio constraint $f(k)$, when $Q_i$ is very small, $P_i$ is upper bounded by $f(k)Q_i$, which is also very small. By multiplying $P_i$ to $\frac{1}{Q_i}Q_i\log Q_i$, we are able to control the approximation error of $P_i\frac{1}{Q_i}Q_i\log Q_i$. In the following, we will introduce how we construct our estimator in detail.

By Lemma \ref{lemma:poisson}, we apply Poisson sampling to simplify the analysis. We first draw $m_1'\sim$Poi$(m)$, and $m_2'\sim$Poi$(m)$, and then draw $m_1'$ and $m_2'$ i.i.d. samples from distribution $P$, where we use $M=(M_1,\ldots,M_k)$ and $M'=(M'_1,\ldots,M'_k)$ to denote the histograms of $m'_1$ samples and $m'_2$ samples, respectively. We then use these  samples to estimate $\sum_{i=1}^k P_i\log P_i$ following the entropy estimator proposed in \cite{wu2014minimax}.
Next, we draw $n'_1\sim$Poi$(n)$ and $n'_2\sim$Poi$(n)$ independently. We then draw $n'_1$ and $n'_2$ i.i.d. samples from distribution $Q$, where we use $N=(N_1,\ldots,N_k)$ and $N'=(N'_1,\ldots,N'_k)$ to denote the histograms of $n'_1$ samples and $n'_2$ samples, respectively. We note that $N_i{\sim}$Poi$(nQ_i)$, and $N_i'{\sim}$Poi$(nQ_i)$.

We then focus on the estimation of $\sum_{i=1}^k P_i\log Q_i$. If $Q_i\in[0,\frac{c_1\log k}{n}]$, we construct a polynomial approximation for the function $P_i\log Q_i$ and further estimate the polynomial function. And if $Q_i\in [\frac{c_1\log k}{n},1]$, we use the bias-corrected augmented plug-in estimator.
We use $N'$ to determine whether to use polynomial estimator or plug-in estimator, and use $N$ to estimate $\sum_{i=1}^k P_i\log Q_i$. Intuitively, if $N'_i$ is large, then $Q_i$ is more likely to be large, and vice versa. Based on the generation scheme, $N$ and $N'$ are independent. Such independence significantly simplifies the analysis.

We let $L=\lfloor c_0\log k\rfloor$, where $c_0$ is a constant to be determined later, and denote the degree-$L$ best polynomial approximation of the function $x\log x$ over the interval $[0,1]$ as $\sum_{j=0}^L a_jx^j$. We further scale the interval $[0,1]$ to $[0,\frac{c_1\log k}{n}]$. Then we have the best polynomial approximation of the function $x\log x$ over the interval $[0,\frac{c_1\log k}{n}] $ as follows:
\begin{flalign}\label{eq:32}
  \gamma_L(x)=\sum_{j=0}^L \frac{a_j n^{j-1}}{(c_1\log k)^{j-1}}x^j -\Big(\log \frac{n}{c_1\log k}\Big)x.
\end{flalign}
%By scaling the interval from $[0,1]$ to $[0,\frac{c_1\log k}{n}]$ and
Following the result in \cite[Section 7.5.4]{tim63}, the error in approximating $x\log x$ by $\gamma_L(x)$ over the interval $[0,\frac{c_1\log k}{n}]$ can be upper bounded as follows:
\begin{flalign}
  \sup_{x\in [0,\frac{c_1\log k}{n}]}|\gamma_L(x)-x\log x|\lesssim \frac{1}{n\log k}.
\end{flalign}
Therefore, we have $|\gamma_L(0)-0\log 0|\lesssim \frac{1}{n\log k}$, which implies that the zero-degree term in $\gamma_L(x)$ satisfies:
\begin{flalign}
  \frac{a_0c_1\log k}{n}\lesssim \frac{1}{n\log k}.
\end{flalign}
Now, subtracting the zero-degree term from $\gamma_L(x)$ in \eqref{eq:32} yields the following polynomial:
\begin{flalign}
  \mu_L(x)  &\triangleq \gamma_L(x)-\frac{a_0c_1\log k}{n} \nn \\
  &=\sum_{j=1}^L \frac{a_j n^{j-1}}{(c_1\log k)^{j-1}}x^j - \Big(\log \frac{n}{c_1\log k}\Big)x.
\end{flalign}
The error in approximating $x\log x$ by $\mu_L(x)$ over the interval $[0,\frac{c_1\log k}{n}]$ can also be upper bounded by $\frac{1}{n\log k}$, because
\begin{flalign}\label{eq:34}
  &\sup_{x\in [0,\frac{c_1\log k}{n}]}|\mu_L(x)-x\log x|\nn \\
  &=\sup_{x\in [0,\frac{c_1\log k}{n}]}\left|\gamma_L(x)-x\log x-\frac{a_0c_1\log k}{n}\right|\nn\\
  &\leq \sup_{x\in [0,\frac{c_1\log k}{n}]}\left|\gamma_L(x)-x\log x\right|+\left|\frac{a_0c_1\log k}{n}\right|\nn\\
  &\lesssim \frac{1}{n\log k}.
\end{flalign}
The bound in \eqref{eq:34} implies that although $\mu_L(x)$ is not the best polynomial approximation of $x\log x$, the error in the approximation by $\mu_L(x)$ has the same order as that by $\gamma_L(x)$. Compared to $\gamma_L(x)$, there is no zero-degree term in $\mu_L(x)$, and hence $\frac{\mu_L(x)}{x}$ is a valid polynomial approximation of $\log x$. Although the approximation error of $\log x$ using $\frac{\mu_L(x)}{x}$ is unbounded, the error in the approximation of $P_i\log Q_i$ using $P_i\frac{\mu_L(Q_i)}{Q_i}$ can be bounded. More importantly, by the way in which we constructed $\mu_L(x)$, $P_i\frac{\mu_L(Q_i)}{Q_i}$ is a polynomial function of $P_i$ and $Q_i$, for which an unbiased estimator can be constructed.
More specifically, the error in using $P_i\frac{\mu_L(Q_i)}{Q_i}$ to approximate $P_i\log Q_i$ can be bounded as follows:
\begin{flalign}
  \left|P_i\frac{\mu_L(Q_i)}{Q_i}-P_i\log Q_i\right|
  &=\frac{P_i}{Q_i}|\mu_L(Q_i)-Q_i\log Q_i| \nn \\
  &\lesssim \frac{f(k)}{n\log k},
\end{flalign}
for $Q_i\in[0,\frac{c_1\log k}{n}]$.
%By such a construction, the approximation error of function $P_i\log Q_i$ can be controlled.
We further define the factorial moment of $x$ by $(x)_j\triangleq \frac{x!}{(x-j)!}$. If $X\sim$Poi$(\lambda)$, $\mE[(X)_j]=\lambda^j$. Based on this fact, we construct an unbiased estimator for $\frac{\mu_L(Q_i)}{Q_i}$ as follows:
\begin{flalign}\label{eq:g_L}
g_L(N_i)= \sum_{j=1}^L \frac{a_j }{(c_1\log k)^{j-1}} (N_i)_{j-1} -\log \frac{n}{c_1\log k}.
\end{flalign}

We then construct our estimator for $\sum_{i=1}^k P_i\log Q_i $ as follows:
\begin{flalign}\label{eq:D2}
  \hat D_2=\sum_{i=1}^k &\Big(\frac{M_i}{m} g_L( {N_i} ) \mathds{1}_{\{N_i'\leq c_2\log k\}} \nn \\
  &+ \frac{M_i}{m} \big(\log \frac{N_i+1}{n} -\frac{1}{2(N_i+1)}\big)\mathds{1}_{\{N_i'> c_2\log k\}}\Big).
\end{flalign}
{Note that we set $c=1$ for the bias-corrected augmented plug-in estimator used here, and we do not normalize the estimate of $Q_i$. When we normalize with $n+k$ instead of $n$, the estimate differs by at most $\log(1 + k/n) < k/n$, which requires a different bias correction term. For the purpose of convenience, we do not normalize $Q_i$.}

For the term $\sum_{i=1}^k P_i\log P_i $ in $D(P\|Q)$, we use the minimax optimal entropy estimator proposed in \cite{wu2014minimax}. We note that $\gamma_L(x)$ is the best polynomial approximation of the function $x\log x$. And an unbiased estimator of $\gamma_L(x)$ is as follows:
\begin{flalign}
  g_L'(M_i)=\frac{1}{m}\sum_{j=1}^{L'} \frac{a_j }{(c_1'\log k)^{j-1}}(M_i)_j -\Big(\log \frac{m}{c_1'\log k}\Big)M_i.
\end{flalign}
Based on $g_L'(M_i)$, the estimator $\hat D_1$ for $\sum_{i=1}^k P_i\log P_i $ is constructed as follows:
\begin{flalign}\label{eq:D1}
  \hat D_1=\sum_{i=1}^k \Big(& g_L'(M_i)\mathds{1}_{\{M_i'\leq c'_2\log k\}} \nn \\
    &+ \big(\frac{M_i}{m}\log \frac{M_i}{m} - \frac{1}{2m} \big)\mathds{1}_{\{M_i'> c'_2\log k\}}\Big).
\end{flalign}

Combining the estimator $\hat D_1$ in \eqref{eq:D1} for $\sum_{i=1}^k P_i\log P_i $ and the estimator $\hat D_2$ in \eqref{eq:D2} for $\sum_{i=1}^k P_i\log Q_i $, we obtain the estimator $\widetilde {D}_{\mathrm{opt}}$ for KL divergence $D(P\|Q)$ as
\begin{flalign}\label{eq:optimal}
  \widetilde {D}_{\mathrm{opt}}=\hat D_1-\hat D_2.
\end{flalign}
Due to the density ratio constraint, we can show that $0\leq D(P\|Q)\leq \log f(k)$. We therefore construct an estimator $\hat D_{\mathrm{opt}}$ as follows:
\begin{flalign}\label{eq:optimalfk}
  \hat D_{\mathrm{opt}}=\widetilde D_{\mathrm{opt}} \vee 0\wedge \log f(k).
\end{flalign}

The following proposition characterizes an upper bound on the worse-case quadratic risk of $\hat D_{\mathrm{opt}}$.
\begin{proposition}\label{prop:upperbound}
  If $\log ^2n\lesssim k^{1-\epsilon}$, where $\epsilon$ is any positive constant, and $\log m\leq C \log k$ for some constant $C$, then there exists $c_0$, $c_1$ and $c_2$ depending on $C$ only, such that
  \begin{flalign}
    \widetilde{R}(\hat D_{\mathrm{opt}},&k,m,n,f(k)) \nn \\
    \lesssim & \left(\frac{k}{m\log k}+\frac{kf(k)}{n\log k}\right)^2 +\frac{\log^2 f(k)}{m}+ \frac{f(k)}{n}.
  \end{flalign}
\end{proposition}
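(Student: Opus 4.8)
I would prove the bound by a bin‑by‑bin bias/variance analysis of the untruncated estimator $\widetilde D_{\mathrm{opt}}=\hat D_1-\hat D_2$ of \eqref{eq:optimal} under the Poisson model, then transfer it to $\hat D_{\mathrm{opt}}$. Under Poisson sampling $M_i,M_i',N_i,N_i'$ are mutually independent over $i$, so writing $\hat D_1=\sum_i A_i$ and $\hat D_2=\sum_i\hat P_iC_i$ (with $A_i$ the $i$th summand of \eqref{eq:D1}, $\hat P_i=M_i/m$, and $C_i$ the bin‑$i$ estimate of $\log Q_i$ in \eqref{eq:D2}, a function of $N_i,N_i'$ only), we get $\widetilde D_{\mathrm{opt}}=\sum_iZ_i$ with $Z_i=A_i-\hat P_iC_i$ independent over $i$, so $\mE[(\widetilde D_{\mathrm{opt}}-D)^2]=(\mE\widetilde D_{\mathrm{opt}}-D)^2+\sum_i\Var(Z_i)$. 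To pass to $\hat D_{\mathrm{opt}}$: since $D(P\|Q)\in[0,\log f(k)]$ and $\hat D_{\mathrm{opt}}$ of \eqref{eq:optimalfk} is the metric projection of $\widetilde D_{\mathrm{opt}}$ onto that interval, $|\hat D_{\mathrm{opt}}-D|\le|\widetilde D_{\mathrm{opt}}-D|\wedge\log f(k)$ pointwise; I would then split the expectation over a ``typical'' event $\mathcal E$ on which every $M_i,M_i'$ (resp.\ $N_i,N_i'$) lies within a constant factor of $\max\{mP_i,\log k\}$ (resp.\ $\max\{nQ_i,\log k\}$), so that the regime indicators are consistent with the true sizes of $P_i,Q_i$, using $|\widetilde D_{\mathrm{opt}}-D|$ on $\mathcal E$ and $\log f(k)$ on $\mathcal E^c$. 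Bennett/Chernoff bounds give $\Pr(\mathcal E^c)\le k\,e^{-\Omega(\log k)}$, and under the hypotheses $\log m\lesssim\log k$ and $\log^2 n\lesssim k^{1-\epsilon}$ this is super‑polynomially small --- small enough that $\log^2 f(k)\,\Pr(\mathcal E^c)$, plus any leftover $\mathrm{poly}(n,k)\,\Pr(\mathcal E^c)$ from the factorial‑moment estimator $g_L$ of \eqref{eq:g_L}, is $o(1/m)$; this is also where the constants $c_0,c_1,c_2$ (chosen depending on $C$) get pinned down, exactly as in \cite{wu2014minimax}.

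\noindent\textbf{Bias.} Write $\mE\widetilde D_{\mathrm{opt}}-D=\big(\mE\hat D_1-\sum_iP_i\log P_i\big)-\big(\mE\hat D_2-\sum_iP_i\log Q_i\big)$. The first bracket is the bias of the Wu--Yang entropy estimator, $O(k/(m\log k))$ by \cite{wu2014minimax}. For the second, $\hat P_i$ is unbiased for $P_i$ and independent of $(N,N')$, so the bias is $\sum_iP_i(\mE[C_i]-\log Q_i)$. On bins where $Q_i$ lies below the cutoff, $C_i=g_L(N_i)$ is \emph{exactly} unbiased for $\mu_L(Q_i)/Q_i$, so bin $i$ contributes $(P_i/Q_i)|\mu_L(Q_i)-Q_i\log Q_i|\lesssim f(k)/(n\log k)$ by \eqref{eq:34} --- hence $O(kf(k)/(n\log k))$ after summing --- where the density‑ratio bound $P_i/Q_i\le f(k)$ is exactly what tames the otherwise‑unbounded error of approximating $\log Q_i$ by $\mu_L(Q_i)/Q_i$. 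On bins where $Q_i$ is above the cutoff, $C_i$ is the bias‑corrected plug‑in $\log\frac{N_i+1}{n}-\frac1{2(N_i+1)}$; Taylor‑expanding $\log(N_i+1)$ about $nQ_i$ with the Poisson moment identities, the correction kills the $O(1/nQ_i)$ term and leaves $O(1/(nQ_i)^2)$ per bin, so the contribution is $\sum_iP_i\,O(1/(nQ_i)^2)\le\sum_i\frac{f(k)Q_i}{(nQ_i)^2}=\frac{f(k)}{n}\sum_i\frac1{nQ_i}\lesssim\frac{kf(k)}{n\log k}$ using $nQ_i\gtrsim\log k$. Regime mismatches (true size of $Q_i$ versus the indicator built from $N_i'$) do not occur on $\mathcal E$. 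Summing, $|\mathrm{Bias}|\lesssim\frac{k}{m\log k}+\frac{kf(k)}{n\log k}$, and its square is the first term of the claim.

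\noindent\textbf{Variance, and the main obstacle.} Bounding $\Var(\hat D_1)$ and $\Var(\hat D_2)$ separately would yield a $\frac{\log^2 k}{m}$ term, coming from the $\hat P_i$‑fluctuations weighted by $\log^2 P_i$ (and, in $\hat D_2$, by $\log^2 Q_i$), both of which can be of order $\log^2 k$ --- larger than the target. This is exactly why \eqref{eq:D2} is built from $\hat P_i=M_i/m$, the \emph{same} histogram $M$ used by $\hat D_1$: in $Z_i=A_i-\hat P_iC_i$ the $M_i$‑dependence recombines (in the plug‑in regime $Z_i\approx\hat P_i(\log\hat P_i-\log\tfrac{N_i+1}{n})-\tfrac1{2m}$), so the effective weight on the fluctuation $\hat P_i-P_i$ is $\log(P_i/Q_i)+O(1)$, not $\log P_i$; since $C_i$ uses samples independent of $M_i$, the cross‑covariance between $A_i$ and $\hat P_iC_i$ cancels the would‑be $\tfrac{P_i}{m}\log^2 Q_i$ term, leaving the $M$‑contribution controlled by $\tfrac1m\sum_iP_i(\log(P_i/Q_i)+1)^2$, which the density‑ratio constraint bounds by $O\big(\tfrac{\log^2 f(k)}{m}\big)$ via the elementary inequality $\sum_iP_i\log^2(P_i/Q_i)\lesssim\log^2 f(k)$ (split at $P_i\ge Q_i$, where $\log(P_i/Q_i)\le\log f(k)$, and at $P_i<Q_i$, where $P_i\log^2(P_i/Q_i)=Q_i\,t\log^2 t\le(4/e^2)Q_i$, $t=P_i/Q_i$). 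The $N$‑contribution to $\Var(Z_i)$ is $P_i^2\Var(C_i)$: in the plug‑in regime $\Var(C_i)=O(1/nQ_i)$, so $\sum_iP_i^2/(nQ_i)=\tfrac1n\sum_iP_i(P_i/Q_i)\le f(k)/n$; in the polynomial regime one invokes the Wu--Yang‑type second‑moment bound $\mE[g_L(N_i)^2]\lesssim k^{\delta}$ (valid for $nQ_i\lesssim\log k$, $L=\lfloor c_0\log k\rfloor$, $\delta$ arbitrarily small given $c_0,c_1,c_2$), so $\sum_iP_i^2k^{\delta}\lesssim k^{1+\delta}f(k)^2(\log k/n)^2\ll(kf(k)/(n\log k))^2$ using $P_i\lesssim f(k)\log k/n$ on those bins; the small‑$P_i$ bins, where $A_i=g_L'(M_i)$, are handled by the same coefficient bounds from \cite{wu2014minimax,tim63} and shown to contribute below the target. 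Collecting the squared bias, the $\frac{\log^2 f(k)}{m}$ and $\frac{f(k)}{n}$ pieces, the negligible polynomial‑regime remainders, and the $o(1/m)$ error off $\mathcal E$ gives the bound. I expect this variance step --- getting $\log^2 f(k)$ rather than $\log^2 k$ out of the $M$‑fluctuations via the shared‑sample cancellation and the density‑ratio inequality, and controlling $\mE[g_L(N_i)^2]$ and $\mE[g_L'(M_i)^2]$ on the polynomial bins with a small power of $k$ --- to be the genuinely hard part; the bias estimates and tail bookkeeping are close adaptations of \cite{wu2014minimax}.
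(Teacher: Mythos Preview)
Your proposal is correct and follows essentially the paper's approach: the good-event splitting, the bias bounds via the polynomial-approximation error \eqref{eq:34} combined with the density-ratio constraint and a Taylor expansion for the bias-corrected plug-in part, and --- crucially --- the variance analysis of $\hat D_{1,i}-\hat D_{2,i}$ \emph{jointly} (not separately) in the large-$P_i$ plug-in regime so that the $M$-fluctuation is weighted by $\log(P_i/Q_i)$ and contributes $\frac{P_i}{m}\log^2(P_i/Q_i)\lesssim\frac{\log^2 f(k)}{m}$ rather than $\frac{\log^2 k}{m}$; the paper carries this out by partitioning into the four regimes $I_1\cap I_1'$, $I_1\cap I_2'$, $I_2\cap I_1'$, $I_2\cap I_2'$, using the joint bound only on $I_2\cap I_2'$ and the smallness of $P_i$ (resp.\ $P_i\le f(k)Q_i\lesssim f(k)\log k/n$) on the other three. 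One minor point: define your typical event $\mathcal E$ in terms of $(M',N')$ only, as the paper does via its index sets --- if $\mathcal E$ also constrains $(M,N)$ then you are conditioning on the estimation samples, and the Poisson moment identities you need for $g_L(N_i)$, $g_L'(M_i)$ and the plug-in Taylor expansions no longer apply cleanly.
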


\begin{proof}
See Appendix \ref{app:prop4}.
\end{proof}
It is clear that the upper bound in Proposition \ref{prop:upperbound} matches the lower bound in Proposition \ref{prop:GMLBentropy} (up to a constant factor), and thus the constructed estimator is minimax optimal, and the minimax risk in Theorem \ref{thm:minimax} is established.

\section{Numerical Experiments}
In this section, we provide numerical results to demonstrate the performance of our estimators, and compare our augmented plug-in estimator, minimax optimal estimator with a number of other KL divergence estimators. \footnote{The implementation of our estimator is available at {https://github.com/buyuheng/Minimax-KL-divergence-estimator}.}

\begin{figure}[!htp]
\centering
\subfigure[$P =\left(\frac{1}{k},\ \frac{1}{k},\ \cdots,\ \frac{1}{k}\right), \quad  Q =\left(\frac{1}{kf(k)},\ \cdots,\  \frac{1}{kf(k)},\ 1-\frac{k-1}{kf(k)}\right)$.]{
\includegraphics[width=9cm]{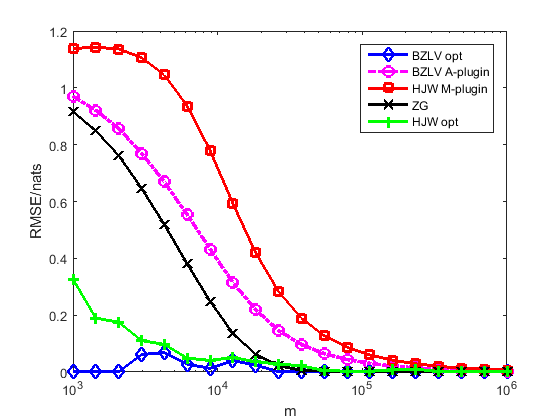}
}
\subfigure[$P=\mathrm{Zipf}(1), Q=\mathrm{Zipf}(0.8)$.]{
\includegraphics[width=9cm]{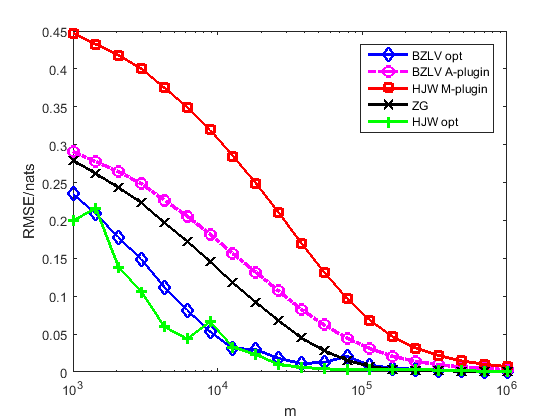}}
\subfigure[$P=\mathrm{Zipf}(1), Q=\mathrm{Zipf}(0.6)$.]{
\includegraphics[width=9cm]{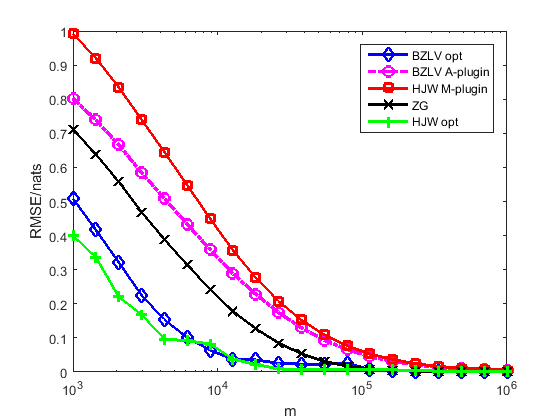}}
\caption{Comparison of five estimators under traditional setting with $k=10^4$, $m$ ranging from $10^3$ to $10^6$ and $n\asymp f(k)m$.}
\label{Fig1}
\end{figure}

\begin{figure}[!htp]
\centering
\subfigure[$P =\left(\frac{1}{k},\ \frac{1}{k},\ \cdots,\ \frac{1}{k}\right), \quad  Q =\left(\frac{1}{kf(k)},\ \cdots,\  \frac{1}{kf(k)},\ 1-\frac{k-1}{kf(k)}\right)$.]{
\includegraphics[width=9cm]{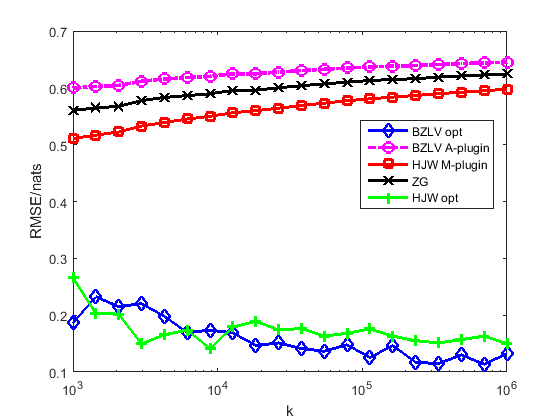}}
\subfigure[$P=\mathrm{Zipf}(1), Q=\mathrm{Zipf}(0.8)$.]{
\includegraphics[width=9cm]{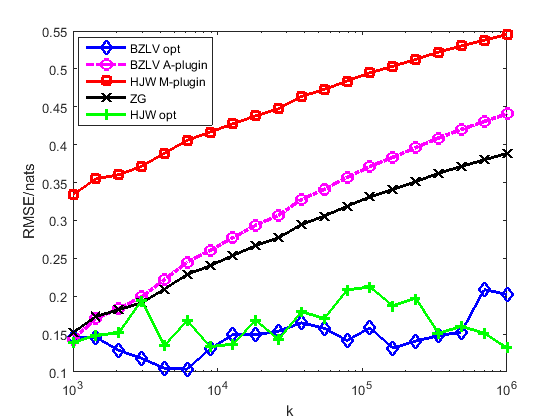}}
\subfigure[$P=\mathrm{Zipf}(1), Q=\mathrm{Zipf}(0.6)$.]{
\includegraphics[width=9cm]{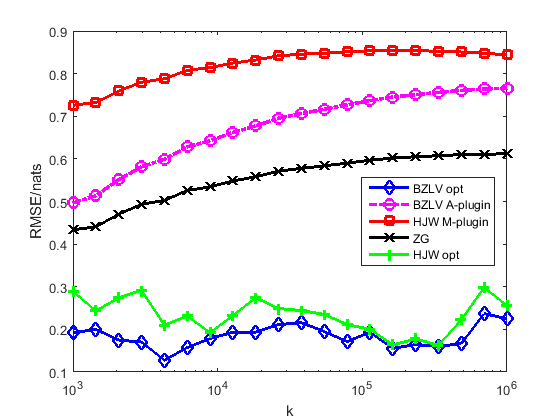}}
\caption{Comparison of five estimators under large-alphabet setting with $k$ ranging from $10^3$ to $10^6$, $m=\frac{2k}{\log k}$ and $n=\frac{kf(k)}{\log k}$. }
\label{Fig2}
\end{figure}

To implement the minimax optimal estimator, we first compute the coefficients of the best polynomial approximation by applying the Remez algorithm \cite{petrushev2011rational}. In our experiments, we replace the $N_i'$ and $M_i'$ in \eqref{eq:D2} and \eqref{eq:D1} with $N_i$ and $M_i$, which means we use all the samples for both selecting estimators (polynomial or plug-in) and estimation. We choose the constants $c_0$, $c_1$ and $c_2$ following ideas in \cite{jiao2015minimax}. More specifically, we set $c_0=1.2$, $c_2 \in [0.05,0.2]$ and $c_1=2c_2$.

We compare the performance of the following five estimators: 1) our augmented plug-in estimator (BZLV A-plugin) in \eqref{eq:pluginest}, with $c=1$; 2) our minimax optimal estimator (BZLV opt) in \eqref{eq:optimal}; 3) Han, Jiao and Weissman's modified plug-in estimator (HJW M-plugin) in \cite{han2016} ; 4) Han, Jiao and Weissman's minimax optimal estimator (HJW opt) \cite{han2016}; 5) Zhang and Grabchak's estimator (ZG) in \cite{zhang2014nonparametric} which is constructed for fix-alphabet size setting.

{Note that the term $f(k)$ in the minimax optimal estimator \eqref{eq:optimalfk} serves to keep the estimate between $0$ and $\log f(k)$, which benefits the theoretical analysis. The BZLV opt estimator in the simulation refers to the one defined in \eqref{eq:optimal}, which does not require the knowledge of $f(k)$, but only the knowledge of $k$.}

We first compare the performance of the five estimators under the traditional setting in which we set $k=10^4$ and let $m$ and $n$ change. We choose two types of distributions $(P,Q)$. The first type is given by $ P =\left(\frac{1}{k},\ \frac{1}{k},\ \cdots,\ \frac{1}{k}\right), \quad  Q =\big(\frac{1}{kf(k)},\ \cdots,\  \frac{1}{kf(k)},\ 1-\frac{k-1}{kf(k)}\big)$, where $f(k)=5$. For this pair of $(P,Q)$, the density ratio is $f(k)$ for all but one of the bins, which is in a sense the worst case for the KL divergence estimation problem. We let $m$ range from $10^3$ to $10^6$ and set $n=3f(k)m$. The second type is given by  $(P,Q)=(\mathrm{Zipf}(1),\mathrm{Zipf}(0.8))$ and $(P,Q)=(\mathrm{Zipf}(1),\mathrm{Zipf}(0.6))$. The Zipf distribution is a discrete distribution that is commonly used in linguistics, insurance, and the modeling of rare events. If $P= \mathrm{Zipf}(\alpha)$, then $P_i = \frac{i^{-\alpha}}{\sum_{j=1}^kj^{-\alpha}}$, for $i \in [k]$. We let $m$ range from $10^3$ to $10^6$ and set $n=0.5f(k)m$, where $f(k)$ is computed for these two pairs of Zipf distributions, respectively.

In Fig.~\ref{Fig1}, we plot the root mean square errors (RMSE) of the five estimators as a function of the sample size $m$ for these three pairs of distributions. It is clear from the figure that our minimax optimal estimator (BZLV opt) and the HJW minimax optimal estimator (HJW opt) outperform the other three approaches. Such a performance improvement is significant especially when the sample size is small. Furthermore, our augmented plug-in estimator (BZLV A-plugin) has a much better performance than the HJW modified plug-in estimator (HJW M-plugin), because the bias of estimating $\sum_{i=1}^kP_i\log P_i$ and the bias of estimating $\sum_{i=1}^kP_i\log Q_i$ may cancel each other out by the design of our augmented plug-in estimator. Furthermore, the RMSEs of all the five estimators converge to zero when the number of samples are sufficiently large.

%The first distribution pair is the worst case distribution we used in proving Proposition \ref{prop:lower-plug}:

%Next we estimate the KL divergence between $(\mathrm{Zipf}(1),\mathrm{Zipf}(0.8))$ and $(\mathrm{Zipf}(1),\mathrm{Zipf}(0.6))$. The probability mass of Zipf distributions is computed by $P_i \propto i^{-\alpha}$ and then normalized to one, with $\alpha$ being $1,0.8,0.6$.

%We note that the sample complexity required for $m$ (generated from $P$) and $n$ (generated from $Q$) are not the same, there is a $f(k)$ factor difference between $m$ and $n$ for both the plug-in based estimator and the minimax optimal estimator. Thus, in our experiments, we let $m$ range from $10^3$ to $10^6$ and $n=1.5f(k)m$ for the distribution pair in \eqref{distribution_1}, and $n=0.5f(k)m$ for the Zipf distributions. Here $f(k)$ is set to be $5$ in the first distribution pairs, and is computed by the definition in the remaining distribution pairs.

%We sample 20 points equally spaced in a logarithmic as $m$ scales, and compute the root mean squared error (RMSE) for each estimator over 100 trials. The full performance comparison is shown in Fig \ref{Fig1}. In the classic regime, our estimator, as well as the one from \cite{han2016}, outperforms the estimator based on plug-in significantly. When the
%samples are abundant all estimators achieve very small error.

We next compare the performance of the five estimators under the large-alphabet setting, in which we let $k$ range from $10^3$ to $10^6$, and set $m=\frac{2k}{\log k}$ and $n=\frac{kf(k)}{\log k}$. We use the same three pairs of distributions as in the previous setting. In Fig.~\ref{Fig2}, we plot the RMSEs of the five estimators as a function of $k$. It is clear from the figure that our minimax optimal estimator (BZLV opt) and the HJW minimax optimal estimator (HJW opt) have very small estimation errors, which is consistent with our theoretical results of the minimax risk bound. However, the RMSEs of the other three approaches increase with $k$, which implies that $m=\frac{2k}{\log k}$, $n=\frac{kf(k)}{\log k}$ are insufficient for those estimators.

{We also observe that our minimax optimal estimator (BZLV opt) and the HJW minimax optimal estimator (HJW opt) achieve almost equally good performance in all experiments. This is to be expected, because the difference between the two estimators is mainly captured by the threshold between the use of the polynomial approximation and the use of the plug-in estimator, i.e., BZLV opt exploits the knowledge of $k$ to set the threshold, whereas HJW opt sets the threshold adaptively without using the information about $k$. In fact, for typical sample sizes in the experiments, such thresholds in two estimators are not very different.}

\section{Conclusion}

In this paper, we studied the estimation of KL divergence between large-alphabet distributions.
We showed that there exists no consistent estimator for KL divergence under the worst-case quadratic risk over all distribution pairs. We then studied a more practical set of distribution pairs with bounded density ratio. We proposed an augmented plug-in estimator, and characterized the worst-case quadratic risk of such an estimator. We further designed a minimax optimal estimator by employing a polynomial approximation along with the plug-in approach, and established the optimal minimax rate. We anticipate that the designed KL divergence estimator can be used in various application contexts including classification, anomaly detection, community clustering, and nonparametric hypothesis testing.

\section*{Acknowledgement}
{The authors are grateful to the Associate Editor and the anonymous reviewers for their helpful comments. In particular, we thank one of the reviewers for pointing out that the $\frac{\log^2k}{m}$ term in the original variance bound for the augmented plug-in estimator can be removed using the techniques in \cite{han2016}. We also thank Yanjun Han at Stanford University for pointing out the same issue to us. }

\appendices

\section{Proof of Proposition \ref{prop:upper-plug}}
\label{app:prop1}
%We provide the proof for the more general add-constant estimator given by
%\begin{equation}\label{eq:addc_est}
%  \hat {D}_{\mathrm{A-plug-in}}(M,N) = \sum_{i=1}^k \frac{M_i}{m} \log\frac{M_i/m}{(N_i+c)/(n+kc)},
%\end{equation}
%where the constant $c \in [\frac{2}{3},\frac{5}{4}]$. This estimator reduces to the add-one estimator defined in \eqref{eq:pluginest} for $c=1$.

The quadratic risk can be decomposed into the sum of square of the bias and the variance as follows:
\begin{align*}
   &\mathbb{E}\big[(\hat {D}_{\mathrm{A-plug-in}}(M,N)-D(P\|Q))^2\big] \nn \\
   & =  \Big(\mathbb{E}\big[\hat {D}_{\mathrm{A-plug-in}}(M,N)-D(P\|Q)\big]\Big)^2 \nn \\
   &\quad + \mathrm{Var}\big[\hat {D}_{\mathrm{A-plug-in}}(M,N)\big].
\end{align*}

We bound the bias and the variance in the following two subsections, respectively.
\subsection{Bounds on the Bias}
The bias of the augmented plug-in estimator can be written as
\begin{align}\label{eq:bias_pl}
  &\left|\mathbb{E} \big[ \hat{D}_{\mathrm{A-plug-in}} (M,N) - D(P\|Q) \big] \right|\nn\\
  &=\left|\mathbb{E} \left[\sum_{i=1}^k \Big(\frac{M_i}{m} \log\frac{M_i/m}{(N_i+c)/(n+kc)}-P_i\log \frac{P_i}{Q_i} \Big)\right]\right| \nn\\
   %=&\left| \mathbb{E} \left( \sum_{i=1}^k  \left[\frac{M_i}{m} \log \frac{M_i}{m} - P_i \log P_i \right] \right)+\mathbb{E} \left( \sum_{i=1}^k \left[P_i \log Q_i - \frac{M_i}{m} \log \frac{N_i+c}{n+kc}  \right] \right)\right|\nn\\
%   = &\left|\mathbb{E} \left( \sum_{i=1}^k \left[ \frac{M_i}{m} \log \frac{M_i}{m} - P_i \log P_i \right] \right) + \mathbb{E} \left( \sum_{i=1}^k P_i \log \frac{(n+kc)Q_i}{N_i+c}   \right)\right |\nn\\
  &\le \left|\mathbb{E} \left[ \sum_{i=1}^k  \Big( \frac{M_i}{m} \log \frac{M_i}{m} - P_i \log P_i \Big) \right]\right| \nn \\
  &\quad + \left|\mathbb{E} \left[ \sum_{i=1}^k \Big( P_i \log \frac{(n+kc)Q_i}{N_i+c} \Big)  \right] \right|.
\end{align}

The first term in \eqref{eq:bias_pl} is the bias of the plug-in estimator for entropy estimation, which can be bounded as in \cite{paninski2003estimation}:
\begin{align}\label{eq:bias_entropy_upper}
   %|\mathrm{Bias}(\hat{H}_{\mathrm{plug-in}})|=
   &\left|\mathbb{E} \left[ \sum_{i=1}^k \Big( \frac{M_i}{m} \log \frac{M_i}{m} - P_i \log P_i \Big) \right]\right| \nn \\
   &\le \log\Big(1 + \frac{k-1}{m}\Big)  < \frac{k}{m}.
\end{align}

Next, we lower bound the second term in \eqref{eq:bias_pl} as follows:
\begin{align}\label{eq:plogq_lower}
  &\mathbb{E} \left[ \sum_{i=1}^k P_i \log \frac{(n+kc)Q_i}{N_i+c}  \right] \nn \\
  &= -  \sum_{i=1}^k P_i \mathbb{E} \left[\log\Big(1+ \frac{N_i+c-(n+kc)Q_i} {(n+kc)Q_i}\Big ) \right] \nn \\
  &\overset{(a)}{\ge} - \sum_{i=1}^k P_i  \mathbb{E} \left[\frac{N_i+c-(n+kc)Q_i} {(n+kc)Q_i} \right]\nn\\
  &= \sum_{i=1}^k P_i \frac{kc} {n+kc}-\sum_{i=1}^k P_i \frac{c} {(n+kc)Q_i}\nn \\
  &\ge - \frac{ckf(k)}{n},
\end{align}
where (a) is due to the fact that $\log(1+x)\le x$.

Then, we need to find an upper bound for the second term in \eqref{eq:bias_pl}, which can be rewritten as,
\begin{align}\label{eq:bias_decomp}
  &\mathbb{E} \left[ \sum_{i=1}^k P_i \log \frac{(n+kc)Q_i}{N_i+c}  \right] \nn \\
  &=\underbrace{\sum_{i=1}^k P_i \left( \log\Big(Q_i+\frac{c}{n}\Big) - \mathbb{E}\Big[ \log \frac{N_i+c}{n}\Big]\right)}_{A_1} \nn \\
  &\quad+ \underbrace{\sum_{i=1}^k P_i \log \frac{(n+kc)Q_i}{nQ_i+c}}_{A_2}.
%  &\triangleq A_1+A_2.
\end{align}

The term $A_1$ can be upper bounded as follows:
%\begin{align}\label{eq:plogq_upper1}
%  A_1= \sum_{i=1}^k P_i \left( \log(Q_i+\frac{c}{n}) - \mathbb{E} \log (\frac{N_i+c}{n})\right) \le \sum_{i=1}^k P_i \left( \frac{Q_i(1-Q_i)}{2n(Q_i + \frac{c}{n})^2} \right)
%   \le \sum_{i=1}^k P_i \left( \frac{Q_i}{2n(Q_i)^2} \right)
%   \le \frac{kf(k)}{2n}.
%\end{align}
\begin{align}\label{eq:plogq_upper1}
  A_1 =& \sum_{i=1}^k P_i \left( \log(Q_i+\frac{c}{n}) - \mathbb{E} \log (\frac{N_i+c}{n})\right) \nn \\
  %&\le \left| \sum_{i=1}^k P_i  \log\Big(Q_i+\frac{c}{n}\Big) - \mathbb{E}\Big[ \log \frac{N_i+c}{n}\Big] \right| \nn\\
  \overset{(a)}{\le}& \sum_{i=1}^k f(k) \bigg( \Big( Q_i+\frac{c}{n}\Big) \log\Big(Q_i+\frac{c}{n}\Big) \nn \\
   &\quad\quad - \Big( Q_i+\frac{c}{n}\Big)\mathbb{E} \Big[\log \frac{N_i+c}{n}\Big] \bigg) \nn\\
  \le & f(k) \sum_{i=1}^k \bigg(  \Big( Q_i+\frac{c}{n}\Big) \log\Big(Q_i+\frac{c}{n}\Big)  \nn \\
  &- \mathbb{E} \Big[\frac{N_i+c}{n} \log \frac{N_i+c}{n}\Big]  +\left|\mathbb{E} \Big[\Big( \frac{N_i}{n}-Q_i\Big) \log \frac{N_i+c}{n}\Big] \right|\bigg)\nn\\
  \overset{(b)}{=}& f(k) \sum_{i=1}^k \bigg(  \mathbb{E} \Big[\frac{N_i+c}{n} \log \frac{nQ_i+c}{N_i+c}\Big]  \nn \\
  &+\left|\mathbb{E} \Big[\Big( \frac{N_i}{n}-Q_i\Big) \log \frac{N_i+c}{n}\Big] \right|\bigg)\nn \\
  \le& f(k)  \frac{n+kc}{n}\mathbb{E} \Big[\sum_{i=1}^k \frac{N_i+c}{n+kc} \log \frac{(N_i+c)/(n+kc)}{(nQ_i+c)/(n+kc)}\Big]  \nn\\
  &+ f(k) \sum_{i=1}^k \left|\mathbb{E} \Big[\Big( \frac{N_i}{n}-Q_i\Big) \log \frac{N_i+c}{n}\Big] \right|,
\end{align}
where $(a)$ follows from the assumption $P_i\le f(k)Q_i$ and $ \mathbb{E}\big[ \log \frac{N_i+c}{n}\big] \le \log\big(Q_i+\frac{c}{n}\big)$, and $(b)$ is due to the fact $\mathbb{E}[N_i]=nQ_i$.

The first term in \eqref{eq:plogq_upper1} is the expectation of the KL divergence between the smoothed empirical distribution and the true distribution. It was shown in \cite{paninski2003estimation} that the KL divergence between two distributions $p$ and $q$ can be upper bounded by the $\chi^2$ divergence, i.e.,
\begin{equation}
D(p\|q) \le \log(1+\chi^2(p,q)) \le \chi^2(p,q) ,
\end{equation}
where $\chi^2(p,q)$ is defined as  $\chi^2(p,q)\triangleq \sum_{i=1}^k \frac{(p_i-q_i)^2}{q_i}$. Applying this result to the first term in \eqref{eq:plogq_upper1}, we obtain
\begin{align}\label{eq:plogq_upper1term1}
&\mathbb{E} \Big[\sum_{i=1}^k \frac{N_i+c}{n+kc} \log \frac{(N_i+c)/(n+kc)}{(nQ_i+c)/(n+kc)}\Big]\nn \\
&\le  \sum_{i=1}^k \frac{\mathbb{E}\big[(N_i-nQ_i)^2\big]}{(n+kc)(nQ_i+c)} \nn \\
& =  \sum_{i=1}^k \frac{nQ_i(1-Q_i)}{(n+kc)(nQ_i+c)}  \le \frac{k-1}{n+kc}.
\end{align}

For the second term in \eqref{eq:plogq_upper1}, by the fact $\mathbb{E}[N_i]=nQ_i$ and the Cauchy-Schwartz inequality, we obtain
\begin{align}\label{eq:plogq_upper1term2}
  &\left|\mathbb{E} \Big[\big( \frac{N_i}{n}-Q_i\big) \log \frac{N_i+c}{n}\Big] \right|^2 \nn \\
&= \left|\mathbb{E} \bigg[\big( \frac{N_i}{n}-Q_i\big)\Big( \log \frac{N_i+c}{n}-\mathbb{E}\Big[\log \frac{N_i+c}{n}\Big]\Big)\bigg] \right|^2 \nn \\
&\le \frac{Q_i}{n}\mathrm{Var}\Big[ \log \frac{N_i+c}{n}\Big]
\lesssim \frac{1}{n^2}.
\end{align}
The last step above follows because
%comes from the following upper bound on the variance of $\log \frac{N_i+c}{n}$,
\begin{align}\label{eq:var_plogq2}
  &\mathrm{Var}\Big[ \log \frac{N_i+c}{n}\Big] \nn \\
  &\le \mathbb{E}\left[  \Big(\log \frac{N_i+c}{nQ_i+c}\Big) ^2\right] \nn \\
  & =\mathbb{E}\left[  \Big(\log \frac{N_i+c}{nQ_i+c}\Big) ^2\mathds{1}_{\{N_i\ge \frac{nQ_i}{2} \}} \right] \nn \\
  &\qquad +\mathbb{E}\left[  \Big(\log \frac{N_i+c}{nQ_i+c}\Big) ^2\mathds{1}_{\{N_i < \frac{nQ_i}{2} \}} \right] \nn\\
  &\overset{(a)}{\le} \sup_{\xi \ge \frac{nQ_i}{2}} \frac{1}{(\xi+c)^2} \mathbb{E}\left[  (N_i-nQ_i)^2 \right]\nn \\
  &\qquad +\sup_{\xi \ge 0} \frac{1}{(\xi+c)^2} \mathbb{E}\left[  (N_i-nQ_i)^2\mathds{1}_{\{N_i< \frac{nQ_i}{2} \}} \right] \nn\\
  &\le   \frac{4}{n^2Q_i^2} nQ_i(1-Q_i) + \frac{n^2Q_i^2}{c^2}\mathbb{P}\Big(N_i< \frac{nQ_i}{2}\Big)\nn\\
  &\overset{(b)}{\le} \frac{4}{nQ_i}  + \frac{n^2Q_i^2}{c^2}e^{-nQ_i/8} \nn \\
  &\overset{(c)}{\lesssim} \frac{1}{nQ_i}.
\end{align}
where $(a)$ is due to the mean value theorem; $(b)$ uses the Chernoff bound of Binomial distribution; $(c)$ is due to the fact that $x^3e^{-\frac{x}{8}}$ is upper bounded by some constant for $x>0$.

Thus, substituting the bounds in \eqref{eq:plogq_upper1term1} and \eqref{eq:plogq_upper1term2} into \eqref{eq:plogq_upper1}, we obtain
\begin{equation}\label{eq:plogq_upper1_final}
  A_1 \lesssim \frac{kf(k)}{n}.
\end{equation}

%Furthermore,  by Jensen's inequality, we have,
%\begin{equation}\label{eq:55}
%  \mathbb{E} \left( \sum_{i=1}^k P_i \log \frac{nQ_i}{N_i+1}   \right) =    \sum_{i=1}^k P_i \mathbb{E} \left(\log \frac{nQ_i}{N_i+1}   \right)
%    \le  \sum_{i=1}^k P_i  \log \mathbb{E} \left[\frac{nQ_i}{N_i+1}\right].
%\end{equation}
%Let $B(n,p)$ denote the Binomial distribution where $n$ is the total number of experiments, and $p$ is the probability that each experiment yields a desired outcome. Note that since $N_i\sim B(n,Q_i)$, and then the expectation in \eqref{eq:55} can be computed as follows:
%\begin{align}\label{expectation}
%  \mathbb{E}\left[\frac{1}{N_i+1}\right] & = \sum_{j=0}^n \frac{1}{j+1} {n \choose j} Q_i^j(1-Q_i)^{n-j} \nn\\
%   & = \sum_{j=0}^n \frac{1}{j+1} \frac{n!}{(n-j)!j!} Q_i^j(1-Q_i)^{n-j}\nn\\
%   & = \frac{1}{n+1}\sum_{j=0}^n \frac{(n+1)!}{(n-j)!(j+1)!} Q_i^j(1-Q_i)^{n-j}\nn\\
%   & = \frac{1}{(n+1)Q_i}\sum_{j=0}^n {n+1 \choose j+1} Q_i^{j+1}(1-Q_i)^{n-j}\nn\\
%   & = \frac{1}{(n+1)Q_i} (1-(1-Q_i)^{n+1}) < \frac{1}{n Q_i}.
%\end{align}
%Thus, we obtain
%\begin{equation}\label{plogq_upper}
%  \mathbb{E} \left( \sum_{i=1}^k P_i \log \frac{nQ_i}{N_i+1}   \right)   \le  \sum_{i=1}^k P_i  \log \mathbb{E} \left[\frac{nQ_i}{N_i+1}\right]
%   < \sum_{i=1}^k P_i  \log \frac{nQ_i}{nQ_i}  =0.
%\end{equation}

Furthermore, the term $A_2$ can be upper bounded by,
\begin{align}\label{eq:plogq_upper2}
  A_2&= \sum_{i=1}^k P_i \log \Big( 1+\frac{(kQ_i-1)c}{nQ_i+c}\Big)\nn \\
   &\le \sum_{i=1}^k P_i \frac{(kQ_i)c}{nQ_i+c}\nn \\
   %&\le \sum_{i=1}^k f(k)\frac{ckQ_i}{n}\nn \\
   &\le \frac{ckf(k)}{n}.
\end{align}

Combining  \eqref{eq:plogq_lower}, \eqref{eq:plogq_upper1_final} and \eqref{eq:plogq_upper2}, we obtain the following upper bound for the second term in the bias,
\begin{equation}
\left|\mathbb{E} \left( \sum_{i=1}^k P_i \log \frac{(n+kc)Q_i}{N_i+c}  \right) \right| \lesssim \frac{kf(k)}{n}.
\end{equation}
Hence,
\begin{align}\label{eq:pluginbias}
  &\left|\mathbb{E} \bigg( \hat{D}_{\mathrm{A-plug-in}} (M,N) - D(P\|Q) \bigg)\right| \nn \\
  &\lesssim \frac{k}{m} +  \frac{kf(k)}{n} .
\end{align}

\subsection{Bounds on the Variance}
The variance of the augmented plug-in estimator can be upper bounded by

\begin{align}
 &\mathrm{Var}\big[\hat {D}_{\mathrm{A-plug-in}}(M,N)\big] \nn \\
&= \sum_{i=1}^k \mathbb{E}\bigg[\Big( \frac{M_i}{m} \log\frac{M_i/m}{(N_i+c)/(n+kc)} \nn \\
&\qquad\qquad- \mathbb{E} \Big[ \frac{M_i}{m} \log\frac{M_i/m}{(N_i+c)/(n+kc)}\Big] \Big)^2\bigg]\nn \\
&\le  \sum_{i=1}^k \mathbb{E}\bigg[ \Big(  \frac{M_i}{m} \log\frac{M_i/m}{(N_i+c)/(n+kc)} \nn \\
 &\qquad\qquad-  P_i \log \frac{P_i}{(nQ_i+c)/(n+kc)} \Big)^2\bigg]\nn \\
&\le 3\sum_{i=1}^k \underbrace{\mathbb{E}\left[ \Big(  \frac{M_i}{m} \big(\log\frac{M_i}{m} - \log P_i\big)\Big)^2\right]}_{V^{(1)}_i}\nn \\
& \quad + 3\sum_{i=1}^k\underbrace{\mathbb{E}\left[ \Big( \frac{M_i}{m}\big(\log \frac{N_i+c}{n+kc}-\log \frac{nQ_i+c}{n+kc}\big) \Big)^2\right]}_{V^{(2)}_i}\nn\\
& \quad + 3\sum_{i=1}^k\underbrace{\mathbb{E}\left[ \Big( \big(\frac{M_i}{m}-P_i\big) \log \frac{P_i}{(nQ_i+c)/(n+kc)}\Big)^2\right]}_{V^{(3)}_i}.
\end{align}

%We bound these three terms separately, let
%\begin{align}
%  V^{(1)}_i & \triangleq  \mathbb{E}\left[ \bigg(  \frac{M_i}{m} \Big(\log\frac{M_i}{m} - \log P_i\Big)\bigg)^2\right],\\
%  V^{(2)}_i & \triangleq \mathbb{E}\left[ \bigg( \frac{M_i}{m}\Big(\log \frac{N_i+c}{n+kc}-\log \frac{nQ_i+c}{n+kc}\Big) \bigg)^2\right],\\
%  V^{(3)}_i& \triangleq \mathbb{E}\left[ \bigg( \Big(\frac{M_i}{m}-P_i\Big) \log \frac{P_i}{(nQ_i+c)/(n+kc)}\bigg)^2\right].
%\end{align}

We first split $V^{(1)}_i$ into two parts,
\begin{align}
  V^{(1)}_i =& \mathbb{E}\left[ \bigg(  \frac{M_i}{m} \Big(\log\frac{M_i}{m} - \log P_i\Big)\bigg)^2 \mathds{1}_{\{M_i\le m P_i\}}\right] \nn \\
  &+\mathbb{E}\left[ \bigg(  \frac{M_i}{m} \Big(\log\frac{M_i}{m} - \log P_i\Big)\bigg)^2 \mathds{1}_{\{M_i > mP_i\}}\right],
\end{align}
where the first term can be upper bounded by using the mean value theorem,
\begin{align}\label{eq:var_plogp1}
  &\mathbb{E}\left[ \bigg(  \frac{M_i}{m} \Big(\log\frac{M_i}{m} - \log P_i\Big)\bigg)^2 \mathds{1}_{\{M_i\le m P_i\}}\right] \nn \\
  &\le \mathbb{E}\left[\frac{M_i^2}{m^2}\sup_{\xi\ge M_i/m}\frac{1}{\xi^2}\Big(\frac{M_i}{m}-P_i\Big)^2 \mathds{1}_{\{M_i\le m P_i\}} \right] \nn \\
  &\le \mathbb{E}\left[\Big(\frac{M_i}{m}-P_i\Big)^2\right] = \frac{P_i(1-P_i)}{m}.
\end{align}
For the second part, applying the mean value theorem, we have
\begin{align}
  &\mathbb{E}\left[ \bigg(  \frac{M_i}{m} \Big(\log\frac{M_i}{m} - \log P_i\Big)\bigg)^2 \mathds{1}_{\{M_i > m P_i\}}\right] \nn \\
 &\le \mathbb{E}\left[\frac{M_i^2}{m^2}\sup_{\xi\ge P_i}\frac{1}{\xi^2}\Big(\frac{M_i}{m}-P_i\Big)^2 \mathds{1}_{\{M_i> m P_i\}} \right] \nn \\
 &\le\frac{1}{m^2P_i^2}\mathbb{E}\left[\Big(M_i\big(\frac{M_i}{m}-P_i\big)\Big)^2\right] \nn \\
 &= \frac{P_i(1-P_i)}{m}-\frac{7P_i(1-P_i)}{m^2}+5\frac{1-P_i}{m^2}+\frac{12P_i-7}{m^3}+\frac{1}{m^3P_i}\nn \\
 &\le \frac{P_i}{m}+\frac{5}{m^2}+\frac{12P_i}{m^3}+\frac{1}{m^3P_i}.
\end{align}
If $P_i \ge \frac{1}{m}$, we have
\begin{align}\label{eq:var_plogp2}
  &\mathbb{E}\left[ \bigg(  \frac{M_i}{m} \Big(\log\frac{M_i}{m} - \log P_i\Big)\bigg)^2 \mathds{1}_{\{M_i > m P_i, m P_i \ge 1 \}}\right] \nn \\
  &\lesssim  \frac{P_i}{m}+\frac{1}{m^2}.
\end{align}
If $P_i < \frac{1}{m}$, a more careful bound can be derived as follows:
\begin{flalign}
  &\mathbb{E}\left[ \bigg(  \frac{M_i}{m} \Big(\log\frac{M_i}{m} - \log P_i\Big)\bigg)^2 \mathds{1}_{\{M_i > m P_i, m P_i < 1 \}}\right]\nn \\
  &= \sum_{j=1}^m {m \choose j}(1-P_i)^{m-j}P_i^j \frac{j^2}{m^2}\log^2\frac{j}{mP_i} \cdot \mathds{1}_{\{P_i<\frac{1}{m}\}}\nn \\
  &= \sum_{j=1}^m \frac{(mP_i)^j}{j!}\frac{m!(1-P_i)^{m-j}}{m^j(m-j)!} \frac{j^2}{m^2}\log^2\frac{j}{mP_i}\cdot\mathds{1}_{\{P_i<\frac{1}{m}\}}\nn \\
  &\le  \sum_{j=1}^m \frac{(mP_i)^j}{j!}\frac{j^2}{m^2}\log^2\frac{j}{mP_i}\cdot\mathds{1}_{\{P_i<\frac{1}{m}\}},
\end{flalign}
where the last step follows from the facts that $\frac{m!}{m^j(m-j)!}\le1$ and $(1-P_i)^{m-j}\le1$. Note that $mP_i<1$,
%\begin{align}
%123& \le \sum_{j=1}^m \frac{(mP_i)^j}{j!}\frac{j^2}{m^2} 2(\log^2j+\log^2mP_i)\cdot\mathds{1}_{\{P_i<\frac{1}{m}\}}\nn \\
%  & \le \sum_{j=1}^m \frac{2\log^2j}{j!}\frac{j^2}{m^2} + \sum_{j=1}^m \frac{2mP_i\log^2mP_i}{j!}\frac{j^2}{m^2} \cdot\mathds{1}_{\{P_i<\frac{1}{m}\}} \nn \\
%\end{align}
\begin{align}
 &\sup_{mP_i\le 1}\frac{(mP_i)^j}{j!}\frac{j^2}{m^2}\log^2\frac{j}{mP_i} \nn \\
 &\le 2\sup_{mP_i\le 1}\frac{(mP_i)^j}{j!}\frac{j^2}{m^2}(\log^2j+\log^2mP_i)\nn \\
 &\le \frac{2\log^2j}{j!}\frac{j^2}{m^2} + 2\sup_{mP_i\le 1}\frac{mP_i\log^2mP_i}{j!}\frac{j^2}{m^2} \nn \\
 &\le \frac{2(\log^2j+1)}{j!}\frac{j^2}{m^2},
\end{align}
where we use the fact that $x^j\log^2 x < x \log^2 x <1$, for $x \in (0,1)$.
Hence,
\begin{align}\label{eq:var_plogp3}
& \mathbb{E}\left[ \bigg(  \frac{M_i}{m} \Big(\log\frac{M_i}{m} - \log P_i\Big)\bigg)^2 \mathds{1}_{\{M_i > m P_i, m P_i < 1 \}}\right] \nn \\
& \le \frac{2}{m^2}\sum_{j=1}^\infty \frac{j^2(\log^2j+1)}{j!}<\frac{22}{m^2}.
\end{align}
The last step above follows because the infinite sum converges to
\begin{equation}
  \sum_{j=1}^\infty \frac{j^2(\log^2j+1)}{j!}\approx 10.24<11.
\end{equation}
Combining \eqref{eq:var_plogp1}, \eqref{eq:var_plogp2} and \eqref{eq:var_plogp3}, we upper bound $V^{(1)}_i$ as
\begin{equation}\label{eq:var_plogp}
  V^{(1)}_i\lesssim \frac{P_i}{m}+\frac{1}{m^2}.
\end{equation}

We next proceed to bound $V^{(2)}_i$, which can be written as
\begin{align}\label{eq:var_plogq1}
  V^{(2)}_i  %= \mathbb{E}\left[ \bigg( \frac{M_i}{m}\Big(\log \frac{N_i+c}{nQ_i+c}\Big)\bigg)^2\right]
  =& \mathbb{E}\Big[\frac{M_i^2}{m^2}\Big]\mathbb{E}\left[  \Big(\log \frac{N_i+c}{nQ_i+c}\Big)^2\right]  \nn \\
  =& \left(\frac{P_i(1-P_i)}{m} +P_i^2 \right)\mathbb{E}\left[  \Big(\log \frac{N_i+c}{nQ_i+c}\Big) ^2\right].
\end{align}
%Then,
%\begin{align}\label{eq:var_plogq2}
%  \mathbb{E}\left[  \Big(\log \frac{N_i+c}{nQ_i+c}\Big) ^2\right] & = \mathbb{E}\left[  \Big(\log \frac{N_i+c}{nQ_i+c}\Big) ^2\mathds{1}_{\{N_i\ge \frac{nQ_i}{2} \}} \right]+\mathbb{E}\left[  \Big(\log \frac{N_i+c}{nQ_i+c}\Big) ^2\mathds{1}_{\{N_i < \frac{nQ_i}{2} \}} \right] \nn\\
%   &\overset{(a)}{\le} \sup_{\xi \ge \frac{nQ_i}{2}} \frac{1}{(\xi+c)^2} \mathbb{E}\left[  (N_i-nQ_i)^2 \right]+\sup_{\xi \ge 0} \frac{1}{(\xi+c)^2} \mathbb{E}\left[  (N_i-nQ_i)^2\mathds{1}_{\{N_i< \frac{nQ_i}{2} \}} \right] \nn\\
%   & \le   \frac{4}{n^2Q_i^2} nQ_i(1-Q_i) + \frac{n^2Q_i^2}{c^2}\mathbb{P}\Big(N_i< \frac{nQ_i}{2}\Big)\nn\\
%   &\overset{(b)}{\le} \frac{4}{nQ_i}  + \frac{n^2Q_i^2}{c^2}e^{-nQ_i/8} \nn \\
%   & \overset{(c)}{\lesssim} \frac{1}{nQ_i}.
%\end{align}
%where $(a)$ is due to the mean value theorem; $(b)$ uses the Chernoff bound of Binomial distribution; $(c)$ is due to the fact that $x^3e^{-\frac{x}{8}}$ is upper bounded by a constant for $x>0$.
Using the result in \eqref{eq:var_plogq2}, $V^{(2)}_i$ can be upper bounded by
\begin{equation}\label{eq:var_plogq}
  V^{(2)}_i \lesssim \left(\frac{P_i}{m} +P_i^2 \right) \frac{1}{nQ_i} \lesssim \frac{f(k)}{n}\Big(\frac{1}{m} + P_i \Big).
\end{equation}

We further derive the following bound on $V^{(3)}_i$
\begin{align}\label{eq:var_approx_1}
  V^{(3)}_i &\le \frac{P_i}{m} \log^2 \frac{P_i(n+kc)}{nQ_i+c}\nn \\
  &=  \frac{P_i}{m}\left(\log \frac{P_i}{Q_i}+ \log \frac{Q_i(n+kc)}{nQ_i+c}\right)^2 \nn \\
  &\le \frac{2P_i}{m}\log^2 \frac{P_i}{Q_i}+ \frac{2P_i}{m} \log^2 \frac{Q_i(n+kc)}{nQ_i+c}.
\end{align}
The first term in \eqref{eq:var_approx_1} can be upper bounded by
\begin{align}\label{eq:var_approx_2}
  &\frac{2P_i}{m}\log ^2\frac{P_i}{Q_i} \nn \\
&=\frac{2}{m}\Big( {P_i}\log ^2\frac{P_i}{Q_i} \mathds 1_{\{\frac{1}{f(k)}\leq \frac{P_i}{Q_i}\leq f(k)\}} \nn \\
&\qquad \qquad +Q_i\frac{P_i}{Q_i}\log ^2\frac{P_i}{Q_i}\mathds 1_{\{\frac{P_i}{Q_i}\leq\frac{1}{f(k)}\}}\Big) \nn \\
&\lesssim \frac{P_i\log^2f(k)}{m}+\frac{Q_i}{m},
\end{align}
where the  last inequality follows because $x\log^2 x$ is bounded by a constant on the interval $[0,1/f(k)]$.

We bound the second term in \eqref{eq:var_approx_1} by splitting it into two parts,
\begin{align}\label{eq:var_approx_decomp}
  &\frac{2P_i}{m} \log^2 \frac{Q_i(n+kc)}{nQ_i+c} \nn \\
  &=\frac{2P_i}{m} \left(\log^2 \frac{Q_i(n+kc)}{nQ_i+c}\right)\mathds{1}_{\{Q_i > \frac{1}{k}\}} \nn \\
   &\quad + \frac{2P_i}{m}\left(\log^2 \frac{nQ_i+c}{Q_i(n+kc)}\right) \mathds{1}_{\{Q_i\le \frac{1}{k}\}}.
\end{align}
The first term in \eqref{eq:var_approx_decomp} can be bounded as follows,
\begin{align}\label{eq:var_approx_3}
  &\frac{2P_i}{m} \left(\log^2 \frac{Q_i(n+kc)}{nQ_i+c}\right)\mathds{1}_{\{Q_i > \frac{1}{k}\}} \nn \\
  &= \frac{2P_i}{m} \log^2 \left( 1+\frac{(kQ_i-1)c}{nQ_i+c}\right)\mathds{1}_{\{Q_i > \frac{1}{k}\}} \nn \\
  &\le \frac{2P_i}{m}\left(\frac{kQ_ic}{nQ_i+c}\right)^2 \lesssim \frac{k^2P_i}{mn^2}.
\end{align}
The second term in \eqref{eq:var_approx_decomp} requires more delicate analysis. We first bound it as
\begin{align}
&\frac{2P_i}{m}\left(\log^2 \frac{nQ_i+c}{Q_i(n+kc)}\right) \mathds{1}_{\{Q_i\le \frac{1}{k}\}} \nn \\
& \le  \frac{2f(k)}{m}Q_i\left(\log^2 \frac{n+c/Q_i}{n+kc}\right) \mathds{1}_{\{Q_i\le \frac{1}{k}\}}.
\end{align}
Consider the function $h(q) = q  \log^2 \frac{n+c/q}{n+kc}$, for $q \in [0,\frac{1}{k}]$. It can be shown that the maximizer $q^*$ of $h(q)$ on the interval $[0,\frac{1}{k}]$ satisfies %Note that for $q \in [0,\frac{1}{k}]$, $h(q)\ge 0$, $h(0)=0$, $h(1/k)=0$, and $h$ has continuous derivative. Thus, we can compute the maximizer $q^*$ of $h(q)$ in interval $[0,\frac{1}{k}]$ by setting $h'(q^*)=0$.
\begin{equation*}
  \log \frac{n+c/q^*}{n+kc} = \frac{2c}{nq^*+c}.
\end{equation*}
Then,
\begin{align}\label{eq:var_approx_4}
&\frac{2P_i}{m}\left(\log^2 \frac{nQ_i+c}{Q_i(n+kc)}\right) \mathds{1}_{\{Q_i\le \frac{1}{k}\}} \nn \\
& \le \frac{2f(k)}{m}q^*  \frac{4c^2}{(nq^*+c)^2} \lesssim \frac{f(k)}{mn},
\end{align}
where the  last inequality follows because $\frac{q^*}{(nq^*+c)^2} \le \frac{1}{4cn}$, for $q^*\in[0,\frac{1}{k}]$.
%achieves maximum at $x=\frac{c}{n}$.

Combining \eqref{eq:var_approx_2}, \eqref{eq:var_approx_3} and  \eqref{eq:var_approx_4}, $V^{(3)}_i$  is upper bounded by
\begin{equation}\label{eq:var_approx}
  V^{(3)}_i \lesssim \frac{P_i\log^2f(k)}{m}+\frac{Q_i}{m}+ \frac{k^2P_i}{mn^2}+\frac{f(k)}{mn}.
\end{equation}

%
%where we use the bound $\log(1+x)\le x$ and $\log(1+x)\ge\frac{x}{x+1}$. Thus, the second term in \eqref{eq:var_approx_1} can be upper bounded by
%\begin{flalign}
%  \frac{2P_i}{m}\log^2 \frac{Q_i(n+kc)}{nQ_i+c}&\le\frac{4P_i}{m}\Bigg( \frac{c^2k^2}{n^2} +\frac{c^2}{n^2Q_i^2} \Bigg)\lesssim \frac{P_ik^2}{mn^2}+\frac{P_i}{m} .
%\end{flalign}
A combination of the upper bounds on $V^{(1)}_i$, $V^{(2)}_i$ and $V^{(3)}_i$ yields
\begin{align}\label{eq:pluginvariance}
 &\mathrm{Var}\big[\hat {D}_{\mathrm{A-plug-in}}(M,N)\big] \nn \\
&\le 3\sum_{i=1}^k V^{(1)}_i+ 3\sum_{i=1}^kV^{(2)}_i+ 3\sum_{i=1}^kV^{(3)}_i\nn\\
&\lesssim  \sum_{i=1}^k \bigg( \frac{P_i}{m}+\frac{1}{m^2} +\frac{f(k)}{n}\Big(\frac{1}{m} + P_i \Big) \nn \\
 & \qquad \qquad +\frac{P_i\log^2f(k)}{m}+\frac{Q_i}{m}+ \frac{k^2P_i}{mn^2}+\frac{f(k)}{mn}\bigg)\nn \\
&\lesssim  \frac{k^2}{m^2}+\frac{kf(k)}{mn}+\frac{k^2}{mn^2}+\frac{f(k)}{n}+\frac{\log^2f(k)}{m}.
\end{align}
%\subsection{Upper bound for Augmented plug-in estimator}
Note that the terms $\frac{kf(k)}{nm}$ and $\frac{k^2}{mn^2}$ in the variance can be further upper bounded as follows
\begin{align}\label{eq:49}
  \frac{kf(k)}{mn} \le \frac{k}{m}\frac{kf(k)}{n} &\le \left(\frac{k}{m} +\frac{kf(k)}{n} \right)^2, \nn \\
  \frac{k^2}{mn^2} \le \frac{k}{m}\frac{1}{n}\frac{kf(k)}{n} &\le \left(\frac{k}{m} +\frac{kf(k)}{n}\right)^2.
\end{align}

Combining \eqref{eq:pluginbias}, \eqref{eq:pluginvariance} and \eqref{eq:49}, we obtain the following upper bound on the worst-case quadratic risk for the augmented plug-in estimator:
\begin{align}
  R(\hat{D}_{\mathrm{A-plug-in}},&k,m,n,f(k))\nn \\
  \lesssim&\left(\frac{k}{m} +\frac{kf(k)}{n}\right)^2 + \frac{\log^2 f(k)}{m} + \frac{f(k)}{n}.
\end{align}

\section{Proof of Proposition \ref{prop:lower-plug}}\label{app:prop2}

In this section, we derive the lower bound on the worst-case quadratic risk of the augmented plug-in estimator over the set $\mathcal{M}_{k,f(k)}$. We first prove the lower bound terms corresponding to the squared bias by choosing two different pairs of worst-case distributions. We then prove the lower bound terms corresponding to the variance using the minimax lower bound given by Le Cam's two-point method.

%Note that here we provide the proof for general add-constant estimator defined in \eqref{eq:addc_est}.
\subsection{Bounds on the Terms Corresponding to the Squared Bias}
It can be shown that the mean square error is lower bounded by the squared bias given as follows:
\begin{align}
   &\mathbb{E}\left[\big(\hat {D}_{\mathrm{A-plug-in}}(M,N)-D(P\|Q)\big)^2\right] \nn \\
& \ge \left(\mathbb{E}\left[\hat {D}_{\mathrm{A-plug-in}}(M,N)-D(P\|Q)\right]\right)^2.
\end{align}
We first decompose the bias into two parts:
\begin{align}\label{eq:des}
&\mathbb{E}[\hat {D}_{\mathrm{A-plug-in}}(M,N)-D(P\|Q)] \nn \\
&=\mathbb{E} \left[ \sum_{i=1}^k \Big( \frac{M_i}{m} \log \frac{M_i}{m} - P_i \log P_i \Big) \right] \nn \\
&\quad +\mathbb{E} \left[ \sum_{i=1}^k P_i \log \frac{(n+kc)Q_i}{N_i+c}  \right] .
\end{align}
The first term in \eqref{eq:des} is the bias of the plug-in entropy estimator. As shown in \cite{wu2014minimax} and \cite{paninski2003estimation}, the worst-case quadratic risk of the first term can be bounded as follows if $m \ge k$ holds,

\begin{align}\label{eq:entropy1}
    \mathbb{E} \left[ \sum_{i=1}^k \Big( \frac{M_i}{m} \log \frac{M_i}{m} - P_i \log P_i \Big) \right] \ge \frac{k}{2m},
\end{align}
if $P$ is the uniform distribution,
\begin{align}\label{eq:entropy2}
    \mathbb{E} \left[ \sum_{i=1}^k \Big( \frac{M_i}{m} \log \frac{M_i}{m} - P_i \log P_i \Big) \right] \le \log\Big(1+\frac{k-1}{m} \Big),
\end{align}
for any $P$.

In the proof of Proposition \ref{prop:upper-plug}, \eqref{eq:plogq_lower}, \eqref{eq:plogq_upper1} and \eqref{eq:plogq_upper2} show that the second term in \eqref{eq:des} can be bounded by
\begin{equation}
  - \frac{ckf(k)}{n}\le \mathbb{E} \left[ \sum_{i=1}^k P_i \log \frac{(n+kc)Q_i}{N_i+c}  \right]  \lesssim \frac{kf(k)}{n}.
\end{equation}

Note that the bias of the augmented plug-in estimator can be decomposed into: 1) the bias due to estimating $\sum_{i=1}^k P_i\log P_i$; and 2) the bias due to estimating $\sum_{i=1}^k - P_i\log Q_i$. As shown above, the first bias term is always positive, but the second bias term can be negative. Hence, the two bias terms may cancel out partially or even fully. Thus, to prove the minimax lower bound, we first determine which bias term dominates, and then construct a pair of distributions such that the dominant bias term is either lower bounded by positive terms or upper bounded by negative terms.

%Thus, to prove the lower bound for the worst-case risk, we need to construct two different pairs of worst-case distributions, and derive more delicate bounds. Under each cases, we can determine which bias term dominates, and the proof can be done by showing that the dominant bias term is either lower bounded by some positive terms or upper bounded by negative terms.

We recall \eqref{eq:bias_decomp}, and rewrite it here for convenience.
%\begin{align*}
%    \mathbb{E} \left[ \sum_{i=1}^k P_i \log \frac{(n+kc)Q_i}{N_i+c}  \right] & = A_1+A_2.
%%    A_1 &\triangleq \sum_{i=1}^k P_i \left( \log\Big(Q_i+\frac{c}{n}\Big) - \mathbb{E} \Big[\log \frac{N_i+c}{n}\Big]\right),\\
%%    A_2 &\triangleq \sum_{i=1}^k P_i \log \frac{(n+kc)Q_i}{nQ_i+c}.
%\end{align*}
\begin{align*}
&\mathbb{E} \left[ \sum_{i=1}^k P_i \log \frac{(n+kc)Q_i}{N_i+c}  \right] \nn \\
&=\underbrace{\sum_{i=1}^k P_i \left( \log\Big(Q_i+\frac{c}{n}\Big) - \mathbb{E}\Big[ \log \frac{N_i+c}{n}\Big]\right)}_{A_1} \nn \\
& \quad + \underbrace{\sum_{i=1}^k P_i \log \frac{(n+kc)Q_i}{nQ_i+c}}_{A_2}
%  &\triangleq A_1+A_2.
\end{align*}
We next derive tighter bounds for the terms $A_1$ and $A_2$ using two different pairs of worst-case distributions by considering the following two cases.

\textbf{Case I:} If $\frac{k}{m} > (1+\epsilon)\frac{ckf(k)}{5n}$, where $\epsilon>0$ is a constant, and which implies that the number of samples drawn from $P$ is relatively smaller than the number of samples drawn from $Q$, then the first bias term dominates. To obtain a tight lower bound on the second term in \eqref{eq:des}, we choose the following $(P,Q)$:
\begin{flalign}\label{eq:distribution_lower}
  P &=\left(\frac{1}{k},\ \frac{1}{k},\ \cdots,\ \frac{1}{k}\right),\nn\\
  Q &=\left(\frac{10}{kf(k)},\ \cdots,\  \frac{10}{kf(k)},\ 1-\frac{10(k-1)}{kf(k)}\right).
\end{flalign}
It can be verified that $P$ and $Q$ are distributions and satisfy the density ratio constraint, if $f(k)\ge 10$. For this $(P,Q)$ pair, $A_1$ can be lower bounded by
\begin{align}
 A_1 & =\sum_{i=1}^k P_i \mathbb{E}\Big[ \log\frac{nQ_i+c}{N_i+c}\Big]\nn\\
 &= - \sum_{i=1}^k P_i \mathbb{E}\Big[ \log\Big(1+\frac{N_i-nQ_i}{nQ_i+c}\Big)\Big]\nn\\
 &\ge - \sum_{i=1}^k P_i  \frac{\mathbb{E}[N_i]-nQ_i}{nQ_i+c}=0.
% \sum_{i=1}^k P_i \frac{Q_i(1-Q_i)(3n(Q_i+\frac{c}{n})-2(1-2Q_i))}{6(Q_i+\frac{c}{n})^3n^2}=\sum_{i=1}^k P_i \frac{Q_i(1-Q_i)((3n+4)Q_i+3c-2)}{6(Q_i+\frac{c}{n})^3n^2} \ge 0.
% & \ge \sum_{i=1}^k P_i \frac{Q_i(1-Q_i)((3n+4)Q_i)}{6(Q_i+\frac{c}{n})^3n^2}\\
% & \ge \sum_{i=1}^{k-1} P_i \frac{3nQ_i^2}{12(Q_i+\frac{c}{n})^3n^2}
\end{align}
%Since the assumption $n\ge 8 kf(k)$ ensures that of our choice of $Q_i\ge \frac{2}{3n}$ for all $i$.
Due to $\log(1+x) \ge \frac{x}{1+x}$, we lower bound $A_2$ by
\begin{align}
  A_2  %&\ge \sum_{i=1}^k P_i \log \left( 1+ \frac{(kQ_i-1)c}{nQ_i+c} \right)\nn \\
  &\ge \sum_{i=1}^k P_i \frac{(kQ_i-1)c}{(n+kc)Q_i} \nn \\
  &\ge \sum_{i=1}^{k-1} P_i \frac{(kQ_i-1)c}{(n+kc)Q_i}\nn \\
  &\ge \sum_{i=1}^{k-1} -P_i \frac{c}{(n+kc)Q_i} \nn \\
  &= -\frac{(k-1)f(k)c}{10(n+kc)}.
\end{align}
Thus, for the $(P,Q)$ pair in \eqref{eq:distribution_lower}, we have
\begin{align}\label{eq:tight_lower}
&\mathbb{E} \left[ \sum_{i=1}^k P_i \log \frac{(n+kc)Q_i}{N_i+c}  \right]  \nn \\
&=  A_1+A_2\ge -\frac{c(k-1)f(k)}{10(n+kc)}\ge -\frac{ckf(k)}{10n}.
\end{align}
Note that for the $(P,Q)$ in \eqref{eq:distribution_lower}, $P$ is an uniform distribution. Thus, we combine the bound \eqref{eq:entropy1} with \eqref{eq:tight_lower}, and obtain
\begin{align}
   &\mathbb{E}[\hat {D}_{\mathrm{A-plug-in}}(M,N)-D(P\|Q)]  \nn \\
&\ge \frac{k}{2m}-\frac{ckf(k)}{10n}\ge \frac{\epsilon k}{2(1+\epsilon)m},
\end{align}
where the last step follows from the assumption
\begin{equation}\label{eq:condition1}
  \frac{k}{m} > (1+\epsilon)\frac{ckf(k)}{5n},\quad \epsilon>0.
\end{equation}
%the first bias term dominates, and the worst-case bias of the Augmented plug-in estimator for the distribution pair \eqref{eq:distribution_lower} is lower bounded by $\frac{\epsilon k}{2(1+\epsilon)m}$.
Thus,
\begin{equation}\label{eq:bias_lower_1}
  \mathbb{E}[\hat {D}_{\mathrm{A-plug-in}}(M,N)-D(P\|Q)] \gtrsim \frac{k}{m} {\asymp} \frac{k}{m}+\frac{kf(k)}{n},
\end{equation}
where the last step holds under condition \eqref{eq:condition1}.

\textbf{Case II:} If $\frac{k}{m} \le (1+\epsilon)\frac{ckf(k)}{5n}$, which implies that the number of samples drawn from $P$ is relatively larger than the number of samples drawn from $Q$, then the second bias term dominates. We choose the following $(P,Q)$:
\begin{flalign}\label{eq:distribution_upper}
  P&=\Bigg(\frac{f(k)}{4n},\ \cdots,\ \frac{f(k)}{4n},\ 1-\frac{(k-1)f(k)}{4n}\Bigg), \nn \\
  Q&=\Bigg(\frac{1}{4n},\ \cdots,\ \frac{1}{4n},\ 1-\frac{k-1}{4n}\Bigg).
\end{flalign}
By the assumption that $n \ge 10 kf(k)$, it can be verified that $P$ and $Q$ are distributions and satisfy the density ratio constraint. For this $(P,Q)$ pair, $A_1$ can be upper bounded using the following lemma.

\begin{lemma}\label{lemma:Bernstein}\cite[Equation 10.3.4]{devore1993constructive}
If $f$ is twice continuously differentiable, then for $X\sim B(n,x)$
\begin{equation*}
  \big|f(x)-\mathbb{E}[f(X/n)]\big|\le \|f''\|_\infty\frac{x(1-x)}{2n}, \quad x\in [0,1].
\end{equation*}
\end{lemma}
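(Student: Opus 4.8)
The statement is the classical Voronovskaya-type error bound for Bernstein polynomials, and the plan is to prove it directly via a second-order Taylor expansion. First I would dispose of the boundary cases: when $x\in\{0,1\}$ the variable $X/n$ is deterministic and equals $x$, so the inequality is trivial; assume henceforth $x\in(0,1)$. Observe that $\mathbb{E}[f(X/n)]=\sum_{j=0}^n f(j/n)\binom{n}{j}x^j(1-x)^{n-j}$ is exactly the degree-$n$ Bernstein polynomial of $f$ evaluated at $x$. Since $f$ is twice continuously differentiable, Taylor's theorem with Lagrange remainder yields, for each $j\in\{0,\dots,n\}$, a point $\xi_j$ lying between $x$ and $j/n$ such that
\[
  f(j/n)=f(x)+f'(x)\Big(\tfrac{j}{n}-x\Big)+\tfrac12 f''(\xi_j)\Big(\tfrac{j}{n}-x\Big)^2.
\]

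Next I would multiply this identity by the binomial weight $\binom{n}{j}x^j(1-x)^{n-j}$ and sum over $j$, i.e., take the expectation over $X\sim B(n,x)$. The zeroth-order term reproduces $f(x)$ because the weights sum to one; the first-order term vanishes because $\mathbb{E}[X/n-x]=0$ (as $\mathbb{E}[X]=nx$). Hence
\[
  \mathbb{E}[f(X/n)]-f(x)=\tfrac12\sum_{j=0}^n f''(\xi_j)\Big(\tfrac{j}{n}-x\Big)^2\binom{n}{j}x^j(1-x)^{n-j}.
\]
Bounding $|f''(\xi_j)|\le\|f''\|_\infty$ term by term and invoking the variance identity $\mathbb{E}\big[(X/n-x)^2\big]=\mathrm{Var}(X)/n^2=x(1-x)/n$ then gives $|f(x)-\mathbb{E}[f(X/n)]|\le\|f''\|_\infty\frac{x(1-x)}{2n}$, which is the claim.

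I do not expect a genuine obstacle here; the argument is elementary. The only point worth a word of care is that the map $j\mapsto\xi_j$ coming from the Taylor remainder need not be chosen measurably or consistently across $j$, but this is immaterial since the expectation is a finite sum and each remainder term is controlled pointwise by the uniform bound $\|f''\|_\infty$; the sole quantitative inputs to the argument are this bound and the binomial variance $\mathrm{Var}(X)=nx(1-x)$. (This lemma is in any case a standard fact, available as \cite[Equation 10.3.4]{devore1993constructive}, so in the paper it can simply be cited.)
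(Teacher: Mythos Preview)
Your proof is correct and is exactly the standard argument for this Bernstein polynomial error bound. The paper itself does not prove this lemma at all; it simply cites \cite[Equation 10.3.4]{devore1993constructive} and uses the result, so there is nothing to compare against beyond noting that your final parenthetical remark matches precisely how the paper handles it.
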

%Note that the proof of this lemma can be done by using Taylor expansion for $f(X/n)$ at point $x$ with the Lagrange remainder.
Let $f(x)=\log(x+\frac{c}{n})$, and $g(x)=x\log(x+\frac{c}{n})$. It can be shown that $\|f^{(2)}\|_\infty = \frac{n^2}{c^2}$ and $\|g^{(2)}\|_\infty = \frac{2n}{c}$. Hence,
\begin{align}
  \left|\log\Big(x+\frac{c}{n}\Big)-\mathbb{E}\Big[\log\frac{X+c}{n}\Big]\right| &\le \frac{nx(1-x)}{2c^2},\label{eq:Bernstein1} \\
  \left|x\log\Big(x+\frac{c}{n}\Big)-\mathbb{E}\Big[\frac{X}{n}\log\frac{X+c}{n}\Big]\right| &\le \frac{x(1-x)}{c}.\label{eq:Bernstein2}
\end{align}
Thus, $A_1$ can be upper bounded by
\begin{align}\label{eq:bias_A1_upper1}
 A_1  =& \sum_{i=1}^{k-1}P_i\left( \log\Big(Q_i+\frac{c}{n}\Big) - \mathbb{E} \Big[\log \frac{N_i+c}{n}\Big] \right) \nn \\
 &+ P_k\left( \log\Big(Q_k+\frac{c}{n}\Big) - \mathbb{E} \Big[\log \frac{N_k+c}{n}\Big]\right)\nn \\
 \overset{(a)}{\le}& \sum_{i=1}^{k-1}P_i\left| \log\Big(Q_i+\frac{c}{n}\Big) - \mathbb{E}\Big[ \log \frac{N_i+c}{n}\Big]\right| \nn\\
 & + \left|Q_k \log\Big(Q_k+\frac{c}{n}\Big) - Q_k \mathbb{E}\Big[ \log \frac{N_k+c}{n}\Big]\right|\nn\\
 \overset{(b)}{\le} &\sum_{i=1}^{k-1}P_i \frac{nQ_i(1-Q_i)}{2c^2} \nn \\
 &+ \left|Q_k \log\Big(Q_k+\frac{c}{n}\Big) - \mathbb{E} \Big[\frac{N_k}{n}  \log \frac{N_k+c}{n}\Big]\right| \nn \\
 &+\left|\mathbb{E} \Big[\Big( \frac{N_k}{n}-Q_k\Big) \log \frac{N_k+c}{n}\Big] \right|\nn \\
 \overset{(c)}{\le} &\frac{(k-1)f(k)}{32c^2n} + \frac{Q_k(1-Q_k)}{c}\nn \\
 &+\left|\mathbb{E} \Big[\Big( \frac{N_k}{n}-Q_k\Big) \log \frac{N_k+c}{n}\Big] \right|\nn \\
 \le & \frac{kf(k)}{32c^2n} + \frac{k}{4cn}+\left|\mathbb{E} \Big[\Big( \frac{N_k}{n}-Q_k\Big) \log \frac{N_k+c}{n}\Big] \right|,
%  & = \frac{(k-1)f(k)}{n}  \frac{1-1/n}{2(1+c)^2} +(1-\frac{(k-1)f(k)}{n})\frac{(n-k+1)(k-1)}{2n(n-k+c+1)^2}
%  \nn\\
%  & \le \frac{(k-1)f(k)}{2(1+c)^2n}  +\frac{(k-1)}{2n(n-k+c+1)}\nn\\
%   & \le \frac{kf(k)}{2(1+c)^2n}  +\frac{k}{n^2}.
\end{align}
where $(a)$ is due to the fact $P_k\le Q_k$ for the $(P,Q)$ given in \eqref{eq:distribution_upper}; and $(b)$ and $(c)$ follow from \eqref{eq:Bernstein1} and \eqref{eq:Bernstein2}, respectively.

The last term in \eqref{eq:bias_A1_upper1} can be further bounded by using steps similar to those in \eqref{eq:plogq_upper1term2} and \eqref{eq:var_plogq2} as follows:
\begin{align}\label{eq:bias_A1_upper2}
  &\left|\mathbb{E} \Big[\Big( \frac{N_k}{n}-Q_k\Big) \log \frac{N_k+c}{n}\Big] \right|^2 \nn \\
  &= \left|\mathbb{E} \left[\Big( \frac{N_k}{n}-Q_k\Big)\Big( \log \frac{N_k+c}{n}-\mathbb{E}\Big[\log \frac{N_k+c}{n}\Big]\Big)\right] \right|^2 \nn \\
  &\overset{(a)}{\le} \frac{Q_k(1-Q_k)}{n}\mathrm{Var}\Big[ \log \frac{N_k+c}{n}\Big]\nn \\
  &\overset{(b)}{\le} \frac{Q_k(1-Q_k)}{n} \left(\frac{4}{nQ_k}  + \frac{n^2Q_k^2}{c^2}e^{-nQ_k/8}\right)\nn \\
  &\overset{(c)}{\le} \frac{Q_k(1-Q_k)}{n} \left(\frac{4}{nQ_k}  + \frac{4}{c^2nQ_k}\right) \nn \\
  &\le  (1+\frac{1}{c^2})\frac{k}{n^3},
\end{align}
where $(a)$ follows from the Cauchy-Schwartz inequality; $(b)$  follows from \eqref{eq:var_plogq2}; and $(c)$ follows because $x^3e^{-x/8} < 4$ for $x>100$, $nQ_k=n-\frac{k-1}{4}$, and $n\ge 10kf(k)$.

Combining \eqref{eq:bias_A1_upper1} and \eqref{eq:bias_A1_upper2}, we have
\begin{equation}\label{eq:bias_A1_upper}
  A_1 \le \frac{kf(k)}{32c^2n} + \frac{k}{4cn}+ (\frac{1}{c}+1)\frac{\sqrt{k}}{n^{3/2}}.
\end{equation}

We next bound $A_2$ as follows,
\begin{align}
  A_2  %& = \sum_{i=1}^k P_i \log \frac{(n+kc)Q_i}{nQ_i+c} \nn \\
  =&  \frac{(k-1)f(k)}{4n} \log \frac{1+kc/n}{1+4c} \nn \\
  &\qquad+\left(1-\frac{(k-1)f(k)}{4n}\right) \log \frac{n+kc}{n+4cn/(4n-k+1)} \nn \\
  \overset{(a)}{\le}& \frac{(k-1)f(k)}{4n} \log \frac{1+c/10}{1+4c}+\log \frac{n+kc}{n+4cn/(4n-k+1)} \nn \\
  \overset{(b)}{\le}& \frac{(k-1)f(k)}{4n} \log \frac{1+c/10}{1+4c}+ \frac{kc-4cn/(4n-k+1)}{n+4cn/(4n-k+1)} \nn \\
  %& = \frac{(k-1)f(k)}{4n} \log \frac{1+c/10}{1+4c}+ \frac{(4n-k+1)kc/n-4c}{4n-k+4c+1}\nn\\
  \le& \frac{kf(k)}{4n} \log \frac{1+c/10}{1+4c}+ \frac{kc}{n},
\end{align}
where $(a)$ is due to the assumption $n \ge 10kf(k)$; and $(b)$ follows because $\log(1+x)\le x$.

Thus, for the $(P,Q)$ pair in \eqref{eq:distribution_upper}, we have
\begin{align}\label{eq:tight_upper}
  &\mathbb{E} \left( \sum_{i=1}^k P_i \log \frac{(n+kc)Q_i}{N_i+c}  \right) \nn \\
  &\le \frac{kf(k)}{n}\Big(\frac{1}{4}\log \frac{1+c/10}{1+4c}+\frac{1}{32c^2} \nn \\
  & \qquad\qquad\qquad+\frac{c+1/(4c)}{f(k)} +\frac{c+1}{c}\frac{1}{\sqrt{nk}f(k)}\Big).
\end{align}
Since $\frac{1}{\sqrt{nk}}$ converges to zero as $n$ and $k$ go to infinity, we omit it in the following analysis.

For the distribution pair in \eqref{eq:distribution_upper}, we combine \eqref{eq:entropy2} and \eqref{eq:tight_upper}, and obtain
\begin{align}
&\mathbb{E}[\hat {D}_{\mathrm{A-plug-in}}(M,N)-D(P\|Q)] \nn \\
&\le \frac{k}{m}+\frac{kf(k)}{n}\left(\frac{1}{4}\log \frac{1+c/10}{1+4c}+\frac{1}{32c^2}+\frac{c+1/(4c)}{f(k)}\right).
\end{align}
Note that under the assumption $n\ge 10kf(k)$ and $f(k)\ge10$, we can always find $\epsilon>0$, such that
\begin{align}\label{eq:condition2}
&(1-\epsilon) \left(\frac{1}{4}\log \frac{1+4c}{1+c/10}-\frac{1}{32c^2}-\frac{c+1/(4c)}{f(k)} \right)\nn \\
&> (1+\epsilon)\frac{c}{5}\ge \frac{k/m}{kf(k)/n}
\end{align}
holds for all $\frac{2}{3}\le c \le \frac{5}{4}$. %which means the second bias term dominates,
%\begin{equation}\label{eq:condition2}
%  \frac{k/m}{kf(k)/n} < (1+\epsilon)\frac{c}{5} < (1-\epsilon) \left(\frac{1}{4}\log \frac{1+4c}{1+c/10}-\frac{1}{32c^2}-\frac{c+1/(4c)}{f(k)} \right),\quad \epsilon>0,
%\end{equation}
Then, the worst-case bias of the augmented plug-in estimator for the $(P,Q)$ in \eqref{eq:distribution_upper} is upper bounded by
\begin{align}\label{eq:bias_lower_2}
  &\mathbb{E}[\hat {D}_{\mathrm{A-plug-in}}(M,N)-D(P\|Q)] \nn \\
  &\le -\left(\frac{1}{4}\log \frac{1+4c}{1+c/10}-\frac{1}{32c^2}-\frac{c+1/(4c)}{f(k)} \right) \frac{\epsilon kf(k)}{n}\nn\\
  &\lesssim -\frac{kf(k)}{n} {\asymp} -\frac{kf(k)}{n}-\frac{k}{m},
\end{align}
where the last step holds under condition \eqref{eq:condition2}.

Following \eqref{eq:bias_lower_1} and \eqref{eq:bias_lower_2}, we conclude that
\begin{align}
 &R(\hat{D}_{\mathrm{A-plug-in}},k,m,n,f(k)) \nn \\
 &= \sup_{(P,Q) \in \mathcal{M}_{k,f(k)}} \mathbb{E}[(\hat{D}_{\mathrm{A-plug-in}}(M,N)-D(P\|Q))^2] \nn \\
 &\gtrsim  \left(\frac{kf(k)}{n} + \frac{k}{m}\right)^2.
\end{align}

\subsection{Bounds on the Terms Corresponding to the Variance}

\subsubsection{Proof of $R(\hat{D}_{\mathrm{A-plug-in}},k,m,n,f(k))\gtrsim \frac{\log^2 f(k)}{m} $}\label{app:c21}
We use the minimax risk as a lower bound on the worst-case quadratic risk for the augmented plug-in estimator. To this end, we apply Le Cam's two-point method. We first construct two pairs of distributions as follows:
\begin{small}
\begin{align}
  P^{(1)} & =\Big(\frac{1}{3(k-1)},\ \dots,\ \frac{1}{3(k-1)},\ \frac{2}{3} \Big), \\
  P^{(2)} & = \Big(\frac{1-\epsilon}{3(k-1)},\ \dots,\ \frac{1-\epsilon}{3(k-1)},\ \frac{2+\epsilon}{3} \Big),\\
  Q^{(1)} & =Q^{(2)} \nn \\
   &=\Big(\frac{1}{3(k-1)f(k)},\dots,\frac{1}{3(k-1)f(k)},1-\frac{1}{3f(k)} \Big) ,
\end{align}
\end{small}

%\begin{align}
%      P^{(1)} & =\left(\frac{1}{3(k-1)},\ \dots,\ \frac{1}{3(k-1)},\ \frac{2}{3} \right), \\
%      P^{(2)} & = \left(\frac{1-\epsilon}{3(k-1)},\ \dots,\ \frac{1-\epsilon}{3(k-1)},\ 1-\frac{1-\epsilon}{3} \right),\\
%      Q^{(1)} & =Q^{(2)} =\left(\frac{1}{3(k-1)f(k)},\ \dots,\ \frac{1}{3(k-1)f(k)},\ 1-\frac{1}{3f(k)} \right) .
%\end{align}
The above distributions satisfy:
\begin{align}
  D(P^{(1)}\|Q^{(1)}) =& \frac{1}{3} \log f(k)+\frac{2}{3}\log \frac{2f(k)}{3f(k)-1},\\
  D(P^{(2)}\|Q^{(2)}) =& \frac{1-\epsilon}{3} \log (1-\epsilon) f(k)
  +\frac{2+\epsilon}{3}\log \frac{(2+\epsilon)f(k)}{3f(k)-1},\\
  D(P^{(1)}\|P^{(2)}) =& \frac{1}{3} \log \frac{1}{1-\epsilon} +\frac{2}{3}\log \frac{2}{2+\epsilon}.
\end{align}
We set $\epsilon = \frac{1}{\sqrt{m}}$, and obtain
\begin{align}
  &D(P^{(1)}\|P^{(2)}) \nn \\
  &=  \frac{1}{3} \log\Big( 1+ \frac{\epsilon}{1-\epsilon}\Big) +\frac{2}{3}\log \Big(1-\frac{\epsilon}{2+\epsilon}\Big) \nn \\
  &\le  \frac{\epsilon}{3(1-\epsilon)} -\frac{2}{3}\frac{\epsilon}{2+\epsilon} \nn \\
  &=  \frac{\epsilon^2}{(1-\epsilon)(2+\epsilon)}\le\frac{1}{m}.
\end{align}
Furthermore,
\begin{align}
  &D(P^{(1)}\|Q^{(1)}) - D(P^{(2)}\|Q^{(2)}) \nn\\
  %= &\frac{1}{3} \log f(k)+\frac{2}{3}\log \frac{2f(k)}{3f(k)-1} - \frac{1-\epsilon}{3} \log (1-\epsilon) f(k) \nn \\
  %& \qquad-\frac{2+\epsilon}{3}\log \frac{(2+\epsilon)f(k)}{3f(k)-1}\nn \\
  &= \frac{1}{3}\log \frac{1}{1-\epsilon}+\frac{\epsilon}{3}\log(1-\epsilon)f(k)\nn \\
  &\qquad +\frac{2}{3}\log \frac{2}{2+\epsilon}-\frac{\epsilon}{3}\log \frac{2+\epsilon}{3-\frac{1}{f(k)}}\nn\\
  &= \frac{1}{3}\log \frac{1}{1-\epsilon}\frac{4}{(2+\epsilon)^2}-\frac{\epsilon}{3}\log \frac{2+\epsilon}{(1-\epsilon)(3f(k)-1)},
\end{align}
which implies that
\begin{align}
  &\big(D(P^{(1)}\|Q^{(1)}) - D(P^{(2)}\|Q^{(2)})\big)^2 \nn \\
&\gtrsim  \epsilon^2 \log^2 \frac{2}{(3f(k)-1)} \asymp \frac{\log^2 f(k)}{m},
\end{align}
as $m\to \infty$. Now applying Le Cam's two-point method, we obtain
\begin{small}
\begin{align}
  &R(\hat{D}_{\mathrm{A-plug-in}},k,m,n,f(k)) \nn \\
  &\ge R^*(k,m,n,f(k)) \nn \\
  &\ge \frac{1}{16}\big(D(P^{(1)}\|Q^{(1)})-D(P^{(2)}\|Q^{(2)})\big)^2\exp\big(-mD(P^{(1)}\|P^{(2)}) \big) \nn \\
  &\gtrsim  \frac{\log^2 f(k)}{m}.
\end{align}
\end{small}
%Clearly, if $m\lesssim \log ^2 f(k)$, the minimax quadratic risk does not converge to $0$ as $k\rightarrow \infty$, which further implies that the augmented plug-in estimator is not consistent for this case.

\subsubsection{Proof of $R(\hat{D}_{\mathrm{A-plug-in}},k,m,n,f(k))\gtrsim \frac{f(k)}{n}$}\label{app:c22}
We construct two pairs of distributions as follows:
\begin{small}
\begin{align}
      P^{(1)} & =P^{(2)}  =\Big(\frac{1}{3(k-1)},\ 0,\ \dots,\ \frac{1}{3(k-1)},\ 0,\ \frac{5}{6} \Big), \\
      Q^{(1)} & = \Big(\frac{1}{2(k-1)f(k)}, \dots,\ \frac{1}{2(k-1)f(k)},1-\frac{1}{2f(k)} \Big),\\
      Q^{(2)}& =\Big(\frac{1-\epsilon}{2(k-1)f(k)},\ \frac{1+\epsilon}{2(k-1)f(k)},\ \dots,\ \nn \\
      & \quad \quad \frac{1-\epsilon}{2(k-1)f(k)},\ \frac{1+\epsilon}{2(k-1)f(k)},\ 1-\frac{1}{2f(k)} \Big),
\end{align}
\end{small}
It can be verified that if $\epsilon < \frac{1}{3}$, then the density ratio is bounded by $\frac{2f(k)}{3(1-\epsilon)}\le f(k)$.
We set $\epsilon = \sqrt{\frac{f(k)}{n}}$. The above distributions satisfy:
\begin{align}
%  D(P_1\|Q_1) &= \frac{1}{3} \log \frac{2f(k)}{3}+\frac{2}{3}\log \frac{\frac{2}{3}}{1-\frac{1}{2f(k)}},\\
%  D(P_2\|Q_2) &= \frac{1}{6}\log \frac{2f(k)}{3(1+\epsilon)}+\frac{1}{6}\log \frac{2f(k)}{3(1-\epsilon)}+\frac{2}{3}\log \frac{\frac{2}{3}}{1-\frac{1}{2f(k)}},\\
  D(Q^{(1)}\|Q^{(2)}) =& \frac{1}{4f(k)}\log \frac{1}{1+\epsilon}+\frac{1}{4f(k)}\log \frac{1}{1-\epsilon}, \\
  D(P^{(1)}\|Q^{(1)})-&D(P^{(2)}\|Q^{(2)}) \nn \\
  =&\frac{1}{6}\log (1-\epsilon)\leq -\frac{\epsilon}{6}.
\end{align}

Due to $\epsilon = \sqrt{\frac{f(k)}{n}}$, it can be shown that
\begin{align}
  D(Q^{(1)}\|Q^{(2)}) =& \frac{1}{4f(k)}\log (1+\frac{\epsilon^2}{1-\epsilon^2}) \nn \\
  \le& \frac{1}{4f(k)} \frac{\epsilon^2}{1-\epsilon^2}
  < \frac{\epsilon^2}{f(k)}=\frac{1}{n}.
\end{align}

%Given this constraint $\epsilon \le \sqrt{\frac{f(k)}{n}}$, we  maximize the difference between $D(P_1\|Q_1)$ and $D(P_2\|Q_2)$:
%\begin{align}
%  D(P_1\|Q_1) - D(P_2\|Q_2)  = \frac{1}{3}\log \frac{2f(k)}{3} -\frac{1}{6}\log \frac{4f^2(k)}{9(1-\epsilon^2)}
%  = \frac{1}{6}\log (1-\epsilon^2)
%\end{align}
%Since $n \ge \frac{kf(k)}{\log k}$, we can

%We set $\epsilon = \sqrt{\frac{f(k)}{n}}$, thus

We apply Le Cam's two-point method, and obtain
\begin{flalign}
&R^*(k,m,n,f(k)) \nn \\
&\ge\frac{1}{16}\big(D(P^{(1)}\|Q^{(1)})-D(P^{(2)}\|Q^{(2)})\big)^2 \nn \\
& \quad \cdot\exp   \big(-mD(P^{(1)}\|P^{(2)})-nD(Q^{(1)}\|Q^{(2)}) \big)\nn\\
&\gtrsim  (D(P^{(1)}\|Q^{(1)}) - D(P^{(2)}\|Q^{(2)}))^2 \gtrsim \epsilon^2 \nn\\
&=  \frac{f(k)}{n}.
\end{flalign}

\section{Proof of Lemma \ref{lemma:poisson}}\label{app:poisson}
We prove the inequality \eqref{eq:poissonineq} that connects the minimax risk  \eqref{eq:minimaxriskdensity} under the deterministic sample size to the risk \eqref{eq:poissonsampling} under the Poisson sampling model. We first prove the left hand side of \eqref{eq:poissonineq}. Recall that $0\le R^*(k,m,n,f(k))\le \log ^2f(k)$ and $R^*(k,m,n,f(k))$ is decreasing with $m,n$. Therefore,
\begin{align}
  &\widetilde{R}^*(k,2m,2n,f(k))\nn \\
  &=\sum_{i \ge 0} \sum_{j \ge 0} R^*(k,i,j,f(k)) \mathrm{Poi}(2m,i)\mathrm{Poi}(2n,i)\nn\\
  &= \sum_{i \ge m+1} \sum_{j \ge n+1} R^*(k,i,j,f(k)) \mathrm{Poi}(2m,i)\mathrm{Poi}(2n,i)\nn \\
  &\qquad +\sum_{i \ge 0} \sum_{j =0}^n R^*(k,i,j,f(k)) \mathrm{Poi}(2m,i)\mathrm{Poi}(2n,i)\nn\\
  & \qquad+\sum_{i =0}^m \sum_{j \ge n+1} R^*(k,i,j,f(k)) \mathrm{Poi}(2m,i)\mathrm{Poi}(2n,i)\nn\\
  &\le  R^*(k,m,n,f(k))+ e^{-(1-\log 2)n}\log^2 f(k)\nn \\
   &\qquad+e^{-(1-\log 2)m}\log^2 f(k),
\end{align}
where the last inequality follows from the Chernoff bound $\mathbb{P}[\mathrm{Poi(2n)}\le n]\le \exp(-(1-\log2)n)$.
We then prove the right hand side of \eqref{eq:poissonineq}. By the minimax theorem,
\begin{equation}
  R^*(k,m,n,f(k)) = \sup_\pi \inf_{\hat{D}}\mathbb{E}[(\hat{D}(M,N)-D(P\|Q))^2],
\end{equation}
where $\pi$ ranges over all probability distribution pairs on  $\mathcal{M}_{k,f(k)}$ and the expectation is over $(P,Q)\sim \pi$.

Fix a prior $\pi$ and an arbitrary sequence of estimators $\{\hat{D}_{m,n}\}$ indexed by the sample sizes $m$ and $n$. It is unclear whether the sequence of Batesian risks $\alpha_{m,n} = \mathbb{E}[(\hat{D}_{m,n}(M,N)-D(P\|Q))^2]$ with respect to $\pi$ is decreasing in $m$ or $n$. However, we can define $\{\widetilde{\alpha}_{i,j}\}$ as
\begin{equation}
  \widetilde{\alpha}_{0,0} =\alpha_{0,0},\quad \widetilde{\alpha}_{i,j}= \alpha_{i,j} \wedge \alpha_{i-1,j} \wedge \alpha_{i,j-1}.
\end{equation}
Further define,
\begin{equation}
  \widetilde{D}_{m,n}(M,N) \triangleq \left\{
                              \begin{array}{ll}
                                \hat{D}_{m,n}(M,N), &\text{ if } \hbox{$\widetilde{\alpha}_{m,n}= \alpha_{m,n}$;} \\
                                \hat{D}_{m-1,n}(M,N), &\text{ if } \hbox{$\widetilde{\alpha}_{m,n}= \alpha_{m-1,n}$;} \\
                                \hat{D}_{m,n-1}(M,N), &\text{ if } \hbox{$\widetilde{\alpha}_{m,n}= \alpha_{m,n-1}$.}
                              \end{array}
                            \right.
\end{equation}
Then for $m'\sim\mathrm{Poi}(m/2)$ and $n'\sim\mathrm{Poi}(n/2)$, and $(P,Q) \sim \pi$, we have
\begin{small}
\begin{align}
&\mathbb{E}\left[\big(\hat{D}_{m',n'}(M',N')-D(P\|Q)\big)^2\right] \nn \\
&=\sum_{i\ge 0}\sum_{j\ge 0} \mathbb{E}\left[\big(\hat{D}_{i,j}(M',N')-D(P\|Q)\big)^2\right] \mathrm{Poi}(\frac{m}{2},i) \mathrm{Poi}(\frac{n}{2},j)\nn\\
&\ge  \sum_{i\ge 0}\sum_{j\ge 0} \mathbb{E}\left[\big(\widetilde{D}_{i,j}(M,N)-D(P\|Q)\big)^2\right] \mathrm{Poi}(\frac{m}{2},i) \mathrm{Poi}(\frac{n}{2},j)\nn\\
&\ge  \sum_{i=0}^m\sum_{j=0}^n \mathbb{E}\left[\big(\widetilde{D}_{i,j}(M,N)-D(P\|Q)\big)^2\right] \mathrm{Poi}(\frac{m}{2},i) \mathrm{Poi}(\frac{n}{2},j)\nn\\
&\overset{(a)}{\ge}  \frac{1}{4}\mathbb{E}\left[\big(\widetilde{D}_{m,n}(M,N)-D(P\|Q)\big)^2\right],
\end{align}
\end{small}
where $(a)$ is due to the Markov's inequality: $\mathbb{P} [\mathrm{Poi}(n/2) \ge n] \le \frac{1}{2}$. If we take infimum of the left hand side over $\hat D_{m,n}$, then take supremum of both sides over $\pi$, and use the Batesian risk as a  lower bound on the minimax risk, then we can show that
\begin{equation}
  \widetilde{R}^*(k,\frac{m}{2},\frac{n}{2},f(k))\ge \frac{1}{4} R^*(k,{m},{n},f(k)).
\end{equation}

\section{Proof of Proposition \ref{thm:minimax}}
\label{app:prop3}
\subsection{Bounds Using Le Cam's Two-Point Method}
\subsubsection{Proof of $R^*(k,m,n,f(k))\gtrsim \frac{\log^2 f(k)}{m} $}
%First we will prove the minimax risk $R^*(k,m,n,f(k))$ is lower bounded by $\frac{\log^2f(k)}{m}$. Note that we apply the Le-Cam method in the proof of Proposition \ref{prop:lower-plug}, we use the minimax lower bound as the lower bound for the worst case quadratic risk of augmented plug-in estimator.

Following the same steps in Appendix \ref{app:c21}, we can show
\begin{align}
  R^*(k,m,n,f(k)) \gtrsim & \big(D(P^{(1)}\|Q^{(1)}) - D(P^{(2)}\|Q^{(2)})\big)^2 \nn \\
  \gtrsim & \frac{\log ^2 f(k)}{m}.
\end{align}

%\subsubsection{Proof of $R^*(k,m,n,f(k))\gtrsim \frac{\log^2 k}{m} $}
%To prove that the minimax risk $R^*(k,m,n,f(k))$ is lower bounded by $\frac{\log^2 k}{m}$, we choose $Q_1=Q_2$ to be uniform distribution, and $P_1$ and $P_2$ as follows:
%\begin{flalign}
%  P_1&=\left(\frac{1}{3(k-1)},\ldots,\frac{1}{3(k-1)},\frac{2}{3}\right),\\
%  P_2&=\left(\frac{1+\epsilon}{3(k-1)},\ldots,\frac{1+\epsilon}{3(k-1)},1-\frac{1+\epsilon}{3}\right).
%\end{flalign}

\subsubsection{Proof of $R^*(k,m,n,f(k))\gtrsim \frac{f(k)}{n}$}

Following the same steps in Appendix \ref{app:c22}, we can show
\begin{align}
  R^*(k,m,n,f(k)) \gtrsim & \big(D(P^{(1)}\|Q^{(1)}) - D(P^{(2)}\|Q^{(2)})\big)^2 \nn \\
  \gtrsim & \frac{f(k)}{n}.
\end{align}

\subsection{Bounds Using Generalized Le Cam's Method}
\subsubsection{Proof of ${R}^*(k,m,n,f(k))\gtrsim (\frac{k}{m\log k})^2$}
Let $Q^{(0)}$ denote the uniform distribution. The minimax risk is lower bounded as follows:
\begin{flalign}
 &{R}^*(k,m,n,f(k)) \nn \\
 &=\inf_{\hat {D}}\sup_{(P,Q)\in \mathcal{M}_{k,f(k)} }\mathbb{E}[(\hat {D}(M,N)-D(P\|Q))^2]\nn\\
 &\ge \inf_{\hat {D}}\sup_{(P,Q^{(0)})\in \mathcal{M}_{k,f(k)} }\mathbb{E}[(\hat {D}(M,Q^{(0)})-D(P\|Q^{(0)}))^2]\nn\\
 &\triangleq R^*(k,m,Q^{(0)},f(k)).
\end{flalign}
If $Q=Q^{(0)}$ is  known, then estimating the KL divergence between $P$ and $Q^{(0)}$ is equivalent to estimating the entropy of $P$, because
\begin{flalign}
  D(P\|Q^{(0)})=&\sum_{i=1}^k \left(P_i\log P_i +P_i\log \frac{1}{Q^{(0)}_i}\right)\nn\\
  =&-H(P)+\log k.
\end{flalign}
Hence,  ${R}^*(k,m,Q^{(0)},f(k))$ is equivalent to the following minimax risk of estimating the entropy of distribution $P$ with $P_i\le \frac{f(k)}{k}$ for $i \in [k]$ such that the ratio between $P$ and $Q^{(0)}$ is upper bounded by $f(k)$.
\begin{flalign}\label{eq:entropyfk}
  R^*(k,m,Q^{(0)},f(k))=\inf_{\hat {H}}\sup_{P:P_i\le \frac{f(k)}{k}}\mathbb{E}[( \hat H(M)-H(P))^2].
\end{flalign}
%Comparing our problem with the minimax entropy estimation, we have one more constraint on
If  $m\gtrsim \frac{k}{\log k}$, as shown in \cite{wu2014minimax}, the minimax lower bound for estimating entropy is given by
\begin{flalign}
  \inf_{\hat {H}}\sup_{P}\mathbb{E}[( \hat H(M)-H(P))^2]\gtrsim (\frac{k}{m\log k})^2.
\end{flalign}
The supremum is achieved for $P_i \le \frac{\log^2 k}{k}$. Comparing this result to \eqref{eq:entropyfk}, if $f(k)\ge \log^2k$, then
\begin{equation}
  \frac{\log^2 k}{k} \le \frac{f(k)}{k}.
\end{equation}
Thus, we can use the minimax lower bound of entropy estimation as the lower bound for divergence estimation on $\mathcal{M}_{k,f(k)}$,
\begin{flalign}
   {R}^*(k,m,n,f(k))\gtrsim {R}^*(k,m,Q^{(0)},f(k))\gtrsim (\frac{k}{m\log k})^2.
\end{flalign}

\subsubsection{Proof of ${R}^*(k,m,n,f(k))\gtrsim (\frac{kf(k)}{n\log k})^2 $}
Since $n\gtrsim \frac{kf(k)}{\log k}$, we assume that $n\ge \frac{C'kf(k)}{\log k}$. If $C'\ge 1$, we set $P=P^{(0)}$, where
\begin{flalign}\label{eq:distribution2}
  P^{(0)}=\bigg(\frac{f(k)}{n\log k},\ \ldots,\ \frac{f(k)}{n\log k},\ 1-\frac{(k-1)f(k)}{n\log k}\bigg).
\end{flalign}
Then, we have
$0\le 1-\frac{(k-1)f(k)}{n\log k}\le 1$. Hence, $P^{(0)}$ is a well-defined probability distribution. If $C'<1$, we set $P^{(0)}$ as follows:
\begin{flalign}
  P^{(0)}=\bigg(\frac{C'f(k)}{n\log k},\ \ldots,\ \frac{C'f(k)}{n\log k},\ 1-\frac{C'(k-1)f(k)}{n\log k}\bigg).
\end{flalign}
which is also a well defined probability distribution. In the following, we focus on the case that $C'\ge 1$. And the results can be easily generalized to the case when $C'<1$.

If $P=P^{(0)}$ given in \eqref{eq:distribution2} and is known, then estimating the KL divergence between $P$ and $Q$ is equivalent to estimating the following  function:
\begin{align}
  D(P^{(0)}\|Q)=&\sum_{i=1}^{k-1} \frac{f(k)}{n\log k}\log \frac{\frac{f(k)}{n\log k}}{Q_i} \nn \\
  &  + (1-\frac{(k-1)f(k)}{n\log k})\log \frac{1-\frac{(k-1)f(k)}{n\log k}}{Q_k},
\end{align}
which is further equivalent to estimating
\begin{flalign}
\sum_{i=1}^{k-1} \frac{f(k)}{n\log k}\log \frac{1}{Q_i} + (1-\frac{(k-1)f(k)}{n\log k})\log \frac{1}{Q_k}.
\end{flalign}

We further consider the following subset of $\mathcal{M}_{k,f(k)} $:
\begin{align}
  \mathcal{N}_{k,f(k)} \triangleq\{(P^{(0)},Q)\in &\mathcal{M}_{k,f(k)}: \nn \\
  \frac{1}{n\log k}\le  &Q_i\le \frac{c_4\log k}{n}, \forall\ i \in [k-1] \},
\end{align}
where $c_4$ is a constant defined later.

The minimax risk can be lower bounded as follows:
\begin{flalign}
 &{R}^*(k,m,n,f(k)) \nn \\
 &=\inf_{\hat {D}}\sup_{(P,Q)\in \mathcal{M}_{k,f(k)} }\mathbb{E}[(\hat {D}(M,N)-D(P\|Q))^2]\nn\\
 &\ge \inf_{\hat {D}}\sup_{(P^{(0)},Q)\in \mathcal{N}_{k,f(k)} }\mathbb{E}[(\hat {D}(P^{(0)},N)-D(P^{(0)}\|Q))^2]\nn\\
 &\triangleq {R}_{\mathcal{N}}^*(k,P^{(0)},n,f(k)).\label{eq:71}
\end{flalign}

For $0<\epsilon<1$, we introduce the following set of approximate probability vectors:
\begin{flalign}
  \mathcal {N}_{k,f(k)}(\epsilon)\triangleq \{(P^{(0)},\mathsf{Q}): \mathsf{Q} \in &\mathbb{R}_+^k,  |\sum_{i=1}^k \mathsf{Q}_i -1|\le \epsilon, \nn \\
  \frac{1}{n\log k}\le \mathsf{Q}_i\le &\frac{c_4\log k}{n}, \forall i \in [k-1]\}.
\end{flalign}
Note that $\mathsf{Q}$ is not a distribution. Furthermore, the set $\mathcal{N}_{k,f(k)}(\epsilon)$ reduces to $\mathcal{N}_{k,f(k)}$ if $\epsilon=0$.

We further consider the minimax quadratic risk \eqref{eq:71} under Poisson sampling on the set $\mathcal{N}_{k,f(k)}(\epsilon)$ as follows:
\begin{flalign}\label{eq:eps}
&\widetilde{R}_{\mathcal{N}}^*(k,P^{(0)},n,f(k),\epsilon) \nn \\
&=\inf_{\hat {D}}\sup_{(P^{(0)},\mathsf{Q}) \in \mathcal{N}_{k,f(k)}(\epsilon)} \mathbb{E}[(\hat D(P^{(0)},N)-D(P^{(0)},\mathsf{Q}))^2],
\end{flalign}
where $N_i\sim \text{Poi}(n\mathsf{Q}_i)$, for $i \in [k]$.
The risk \eqref{eq:eps} is connected to the risk \eqref{eq:71} for multinomial sampling by the following lemma.
\begin{lemma}\label{lemma:1}
For any $k$, $n\in \mathbb{N}$ and $\epsilon<1/3$,
\begin{align}
{R}_{\mathcal{N}}^*(k,P^{(0)},\frac{n}{2},f(k)) \ge &\frac{1}{2}\widetilde{R}_{\mathcal{N}}^*(k,P^{(0)},n,f(k),\epsilon) \nn \\
    &-\log^2 f(k)\exp{(-\frac{n}{50}) }-\log^2(1+\epsilon).
\end{align}
\end{lemma}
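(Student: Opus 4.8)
The plan is to transfer the lower bound that will be constructed downstream in the Poisson model on the relaxed set $\mathcal{N}_{k,f(k)}(\epsilon)$ back to the deterministic--sample--size multinomial model on $\mathcal{N}_{k,f(k)}$, by a Poissonized subsampling argument in the spirit of the proof of Lemma \ref{lemma:poisson}. Two corrections have to be paid: the fact that $\mathsf{Q}$ is only approximately a probability vector, which produces the $\log^2(1+\epsilon)$ term, and the randomness of the Poisson sample count, which produces the $\log^2 f(k)\,e^{-n/50}$ term.

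Concretely, I would fix $\delta>0$ and an estimator $\hat D$ for the multinomial model with $\lfloor n/2\rfloor$ samples from $Q$ (reading the ``$n/2$'' in the statement as $\lfloor n/2\rfloor$) attaining $\sup_{(P^{(0)},Q)\in\mathcal{N}_{k,f(k)}}\mathbb{E}_Q[(\hat D(N)-D(P^{(0)}\|Q))^2]\le R^*_{\mathcal{N}}(k,P^{(0)},\lfloor n/2\rfloor,f(k))+\delta$, and from it build a Poisson estimator $\widehat D_{\mathrm{Pois}}$: given $N=(N_1,\dots,N_k)$ with $N_i\sim\mathrm{Poi}(n\mathsf{Q}_i)$, let $n'=\sum_i N_i\sim\mathrm{Poi}(n\|\mathsf{Q}\|_1)$; on $\{n'\ge n/2\}$ draw a uniformly random sub-sample of size $\lceil n/2\rceil$ from the $n'$ observations encoded by $N$, let $\widetilde N$ be its histogram, and set $\widehat D_{\mathrm{Pois}}(N):=\hat D(\widetilde N)$; on $\{n'<n/2\}$ set $\widehat D_{\mathrm{Pois}}(N):=\tfrac12\log f(k)$, a fixed value in the range of $D(P^{(0)}\|\mathsf{Q})$. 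By the Poisson--multinomial conditioning, given $n'$ the $n'$ observations are i.i.d.\ $\overline{\mathsf{Q}}:=\mathsf{Q}/\|\mathsf{Q}\|_1$, so $\widetilde N\sim\mathrm{Multinomial}(\lceil n/2\rceil,\overline{\mathsf{Q}})$.

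The risk bound then rests on three routine ingredients. (i) Since $\epsilon<1/3$, $\|\mathsf{Q}\|_1\ge 2/3$, so a Chernoff bound for the Poisson law gives $\mathbb{P}[n'<n/2]\le e^{-n/50}$; with the crude estimate $|D(P^{(0)}\|\mathsf{Q})|\lesssim\log f(k)$ on $\mathcal{N}_{k,f(k)}(\epsilon)$ (the box constraints together with $\|\mathsf{Q}\|_1\asymp1$ pin down every coordinate of $\mathsf{Q}$ up to constant factors; when $n$ is so small that the set is degenerate the lemma is trivial), the contribution of $\{n'<n/2\}$ is $\lesssim\log^2 f(k)\,e^{-n/50}$. (ii) $D(P^{(0)}\|\mathsf{Q})=D(P^{(0)}\|\overline{\mathsf{Q}})+\log\|\mathsf{Q}\|_1$ with $|\log\|\mathsf{Q}\|_1|\lesssim\log(1+\epsilon)$, so $(a+b)^2\le 2a^2+2b^2$ splits the error on $\{n'\ge n/2\}$ into $2(\hat D(\widetilde N)-D(P^{(0)}\|\overline{\mathsf{Q}}))^2$ plus a $\log^2(1+\epsilon)$ remainder. (iii) Conditioning on $n'$ and invoking the uniform guarantee for $\hat D$ at $\overline{\mathsf{Q}}$ bounds $\mathbb{E}[(\hat D(\widetilde N)-D(P^{(0)}\|\overline{\mathsf{Q}}))^2]$ by $R^*_{\mathcal{N}}(k,P^{(0)},\lfloor n/2\rfloor,f(k))+\delta$. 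Assembling these, taking $\sup$ over $\mathsf{Q}\in\mathcal{N}_{k,f(k)}(\epsilon)$, then $\inf$ over Poisson estimators, and letting $\delta\downarrow 0$, yields $\widetilde R^*_{\mathcal{N}}(k,P^{(0)},n,f(k),\epsilon)\le 2R^*_{\mathcal{N}}(k,P^{(0)},\lfloor n/2\rfloor,f(k))+O(\log^2(1+\epsilon))+O(\log^2 f(k)\,e^{-n/50})$, which rearranges to the claimed inequality.

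The main obstacle is ingredient (iii): the guarantee for $\hat D$ is uniform over $\mathcal{N}_{k,f(k)}$, but normalization rescales the box constraints on $\mathsf{Q}_1,\dots,\mathsf{Q}_{k-1}$ and the density ratio $P^{(0)}_i/\mathsf{Q}_i$ by the factor $\|\mathsf{Q}\|_1^{\pm1}\in[(1+\epsilon)^{-1},(1-\epsilon)^{-1}]$, so $\overline{\mathsf{Q}}$ need not lie exactly in $\mathcal{N}_{k,f(k)}$. I would absorb this either by choosing the constant $c_4$ in $\mathcal{N}_{k,f(k)}$ and $\mathcal{N}_{k,f(k)}(\epsilon)$ with enough slack that every $\overline{\mathsf{Q}}$ arising from the priors used in the downstream construction still obeys the box, or by routing through the intermediate set obtained from $\mathcal{N}_{k,f(k)}$ by enlarging $f(k)$ to $(1+\epsilon)f(k)$, which costs only an extra $O(\log(1+\epsilon))$ of the same order as a term already present. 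Verifying that this rescaling does not degrade the downstream bounds---where $\epsilon$ will be taken small---is the part that needs care; the Poisson/multinomial conditioning, the Chernoff estimate, and the bound on $|D(P^{(0)}\|\mathsf{Q})|$ are all standard.
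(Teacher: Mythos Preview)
Your proposal is correct and follows essentially the same route as the paper: start from a near-optimal multinomial estimator, build a Poisson estimator from it via the Poisson--multinomial conditioning (given $n'=\sum_i N_i$, the histogram is multinomial from $\overline{\mathsf{Q}}=\mathsf{Q}/\|\mathsf{Q}\|_1$), split the target using $D(P^{(0)}\|\mathsf{Q})=D(P^{(0)}\|\overline{\mathsf{Q}})+\log\|\mathsf{Q}\|_1$, and control the lower tail of $n'$ by a Poisson Chernoff bound.

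The one technical difference is how the random sample size is handled. You subsample down to exactly $\lceil n/2\rceil$ on $\{n'\ge n/2\}$ and use a single fixed-sample-size estimator. The paper instead takes a \emph{sequence} of near-optimal estimators $\{\hat D(P^{(0)},j)\}_{j\ge 1}$, applies the one indexed by $n'$ to all $n'$ samples, and then bounds $\sum_j R^*_{\mathcal{N}}(k,P^{(0)},j,f(k))\mathbb{P}(n'=j)$ by splitting at $j=n/2$ and using that $R^*_{\mathcal{N}}$ is monotone decreasing in the sample size. Both devices are standard and yield the same inequality; the paper's avoids the extra randomization of subsampling, while yours avoids having to invoke monotonicity and a whole family of estimators.

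The obstacle you flag---that $\overline{\mathsf{Q}}$ need not lie exactly in $\mathcal{N}_{k,f(k)}$ after normalization---is real, and the paper's own proof simply glosses over it (it writes the bound $\mathbb{E}[(\hat D(P^{(0)},j)-D(P^{(0)}\|\overline{\mathsf{Q}}))^2\mid n'=j]\le R^*_{\mathcal{N}}(k,P^{(0)},j,f(k))+\delta$ without checking membership). Your suggested fixes---either building slack into $c_4$ and the density-ratio bound, or passing through the set with $f(k)$ replaced by $(1+\epsilon)f(k)$---are exactly the right way to close this, and since downstream $\epsilon=\sqrt{k}\log k/n\to 0$, the perturbation is harmless for the eventual lower bound.
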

\begin{proof}
  See Appendix \ref{app:lemma2}.
\end{proof}
For $(P^{(0)},\mathsf{Q})\in \mathcal{N}_{k,f(k)}(\epsilon)$, we then apply the generalized Le Cam's method which involves two composite hypotheses as follows:
\begin{align}\label{eq:143}
  H_0 &:D(P^{(0)}\|\mathsf{Q})\le t \quad  \mathrm{versus} \nn \\
  H_1 &: D(P^{(0)}\|\mathsf{Q}) \ge t+\frac{(k-1)f(k)}{n\log k}d.
\end{align}
In the following we construct tractable prior distributions. Let $V$ and $V'$ be two $\mathbb{R}^+$ valued random variables defined on the interval $[\frac{1}{n\log k},\frac{c_4\log k}{n}]$ and have equal mean $\mathbb{E}(V)=\mathbb{E}(V')=\alpha$. We construct two random vectors
\begin{align}\label{eq:prior}
\mathsf{Q} = (V_1,\ \dots,\ V_{k-1},\ 1-(k-1)\alpha), \nn \\
\mathsf{Q}' = (V'_1,\ \dots,\ V'_{k-1},\ 1-(k-1)\alpha)
\end{align}
consisting of $k-1$ i.i.d. copies of $V$ and $V'$ and a deterministic term $1-{(k-1)\alpha}$, respectively. It can be verified that $(P^{(0)},\mathsf{Q}),\ (P^{(0)}, \mathsf{Q}') \in \mathcal{N}_{k,f(k)}(\epsilon)$ satisfy the density ratio constraint. Then the averaged divergences are separated by the distance of
\begin{align}
&|\mathbb E[D(P^{(0)}\|\mathsf{Q})]-\mathbb E[D(P^{(0)}\|\mathsf{Q}')]| \nn \\
&= \frac{(k-1)f(k)}{n\log k}|\mathbb E[\log V]-\mathbb E[\log V']|.
\end{align}

Thus, if we construct $V$ and $V'$ such that
\begin{equation}
  |\mathbb E[\log V]-\mathbb E[\log V']|\ge d,
\end{equation}
then the constructions in \eqref{eq:prior} satisfy \eqref{eq:143}, serving as the two composite hypotheses which are separated.

By such a construction, we have the following lemma via the generalized Le Cam's method:
\begin{lemma}\label{lemma:2}
Let $V$ and $V'$ be random variables such that $V$, $V'\in[\frac{1}{n\log k},\frac{c_4\log k}{n}]$, $\mathbb E[V]=\mathbb E[V']=\alpha$, and $|\mathbb E[\log V]-\mathbb E[\log V']|\ge d$. Then,
\begin{small}
\begin{flalign}
&\widetilde{R}_{\mathcal{N}}^*(k,P^{(0)},n,f(k),\epsilon) \nn \\
&\ge  \frac{(\frac{(k-1)f(k)d}{n\log k})^2}{32}\bigg(1-\frac{2(k-1)c_4^2\log^2 k}{n^2\epsilon^2}-\frac{32(\log n+\log\log k)^2}{(k-1)d^2}\nn\\
&\quad\qquad\qquad\qquad\qquad-k\mathrm{TV}(\mathbb E [\text{Poi}(nV)], \mathbb E [ \text{Poi}(nV')])\bigg),
\end{flalign}
\end{small}
where $\mathrm{TV}(P,Q)=\frac{1}{2}\sum_{i=1}^k |P_i-Q_i|$ denotes the total variation between two distributions.
\end{lemma}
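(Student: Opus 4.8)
The plan is to carry out the generalized Le Cam (two composite hypotheses) argument with the randomized priors $\pi,\pi'$ on $\mathsf Q,\mathsf Q'$ from \eqref{eq:prior}, reducing $\widetilde R_{\mathcal N}^*(k,P^{(0)},n,f(k),\epsilon)$ to a Bayes risk and then accounting separately for the three losses that appear in the bound: the chance that $\mathsf Q$ (or $\mathsf Q'$) violates the $\epsilon$-approximate-sum constraint of $\mathcal N_{k,f(k)}(\epsilon)$, the chance that $D(P^{(0)}\|\mathsf Q)$ (or $D(P^{(0)}\|\mathsf Q')$) deviates from its mean by more than a quarter of the mean-separation, and the total variation between the two data laws of $N$. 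The first two are handled by Chebyshev's inequality and the last by subadditivity of total variation over product measures.

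First I would set $\alpha=\mathbb E[V]=\mathbb E[V']$ and write, for either prior, $D(P^{(0)}\|\mathsf Q)=-\frac{f(k)}{n\log k}\sum_{i=1}^{k-1}\log V_i+c$, where $c$ gathers the $P^{(0)}_i\log P^{(0)}_i$ terms and the deterministic $k$-th coordinate $1-(k-1)\alpha$ (common to $\pi$ and $\pi'$). Hence the divergence is an affine image of a sum of $k-1$ i.i.d.\ bounded variables, and the mean-separation is exactly $\Delta:=\frac{(k-1)f(k)}{n\log k}\,|\mathbb E[\log V]-\mathbb E[\log V']|\ge \frac{(k-1)f(k)d}{n\log k}$. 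Since $V,V'\in[\frac1{n\log k},\frac{c_4\log k}{n}]$ we get $|\log V|,|\log V'|\le \log n+\log\log k$ for $n$ large, so $\mathrm{Var}(\log V)\le(\log n+\log\log k)^2$, and Chebyshev's inequality with radius $\Delta/4$ applied to each of $\pi,\pi'$ contributes a concentration-failure probability of at most $16(\log n+\log\log k)^2/((k-1)d^2)$, i.e.\ $32(\log n+\log\log k)^2/((k-1)d^2)$ for the pair. Similarly $\sum_{i=1}^k\mathsf Q_i-1=\sum_{i=1}^{k-1}(V_i-\alpha)$ has mean zero and variance at most $(k-1)(c_4\log k/n)^2$, so each of $\mathsf Q,\mathsf Q'$ lies in $\mathcal N_{k,f(k)}(\epsilon)$ off an event of probability at most $(k-1)c_4^2\log^2 k/(n^2\epsilon^2)$; combined with the deterministic feasibility already noted above \eqref{eq:prior} (coordinate bounds and the density-ratio constraint), this makes the priors feasible and yields the $2(k-1)c_4^2\log^2 k/(n^2\epsilon^2)$ term.

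Next I would invoke the generalized Le Cam bound (the composite-hypothesis version used in \cite{wu2014minimax,Tsybakov2008}): restricting $\pi,\pi'$ to the good events produces priors supported on $\{D\le t\}$ and $\{D\ge t+\Delta/2\}$ for a suitable threshold $t$ (the midpoint of the two means), and for every estimator the worst-case risk over $\mathcal N_{k,f(k)}(\epsilon)$ under Poisson sampling is then at least $\tfrac18(\Delta/2)^2\big(1-\delta-\mathrm{TV}(\mathbb P_\pi,\mathbb P_{\pi'})\big)=\tfrac{\Delta^2}{32}\big(1-\delta-\mathrm{TV}(\mathbb P_\pi,\mathbb P_{\pi'})\big)$, where $\delta$ is the total concentration-and-feasibility budget collected above and $\mathbb P_\pi,\mathbb P_{\pi'}$ are the laws of the data $N=(N_1,\dots,N_k)$. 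The key simplification is that $P^{(0)}$ is known (so the data is $N$ alone) and the $k$-th coordinate of $\mathsf Q,\mathsf Q'$ is deterministic and identical, so $N_k$ has a common law while $N_1,\dots,N_{k-1}$ are i.i.d.\ with marginal $\mathbb E[\mathrm{Poi}(nV)]$ under $\pi$ and $\mathbb E[\mathrm{Poi}(nV')]$ under $\pi'$; by subadditivity of total variation over product measures,
\[
\mathrm{TV}(\mathbb P_\pi,\mathbb P_{\pi'})\le (k-1)\,\mathrm{TV}\!\big(\mathbb E[\mathrm{Poi}(nV)],\mathbb E[\mathrm{Poi}(nV')]\big)\le k\,\mathrm{TV}\!\big(\mathbb E[\mathrm{Poi}(nV)],\mathbb E[\mathrm{Poi}(nV')]\big).
\]
Substituting the three bounds gives the claimed inequality.

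The step I expect to be the main obstacle is the clean execution of the Le Cam reduction with the correct constants: translating the minimax risk over $\mathcal N_{k,f(k)}(\epsilon)$ (a supremum) into a Bayes risk under the randomized priors while folding the rare events $\{(P^{(0)},\mathsf Q)\notin\mathcal N_{k,f(k)}(\epsilon)\}$ and $\{|D-\mathbb E D|>\Delta/4\}$ into the error budget rather than into infeasibility, and keeping track of the numerical factors so that the prefactor is $1/32$. By comparison, the total-variation tensorization across the $k-1$ Poisson-mixture coordinates and the two Chebyshev estimates are routine once the priors and $t$ are pinned down.
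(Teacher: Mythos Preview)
Your proposal is correct and matches the paper's proof essentially step for step: the paper defines events $E,E'$ combining the feasibility constraint $|\sum_i V_i-(k-1)\alpha|\le\epsilon$ with the concentration $|D(P^{(0)}\|\mathsf Q)-\mathbb E D(P^{(0)}\|\mathsf Q)|\le\frac{(k-1)f(k)d}{4n\log k}$, bounds $\mathbb P(E^c),\mathbb P(E'^c)$ by the same two Chebyshev estimates you give, conditions the priors on $E,E'$, and then controls the total variation between the conditioned data laws via the triangle inequality $\mathrm{TV}(P_{N|E},P_{N'|E'})\le \mathbb P(E^c)+\mathbb P(E'^c)+\mathrm{TV}(P_N,P_{N'})$ together with the product-measure tensorization $\mathrm{TV}(P_N,P_{N'})\le k\,\mathrm{TV}(\mathbb E[\mathrm{Poi}(nV)],\mathbb E[\mathrm{Poi}(nV')])$. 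Applying the generalized Le~Cam bound with separation $\Delta/2$ yields exactly the $\Delta^2/32$ prefactor you anticipate, so the constant bookkeeping you flag as the main obstacle works out as written.
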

\begin{proof}
  See Appendix \ref{app:lemma3}.
\end{proof}
To establish the impossibility of hypothesis testing between $V$ and $V'$, we also have the following lemma which provides an upper bound on the total variation of the two mixture Poisson distributions.
\begin{lemma}\cite[Lemma 3]{wu2014minimax}\label{lemma:4}
  Let $V$ and $V'$ be random variables on $[\frac{1}{n\log k},\frac{c_4\log k}{n}]$. If $\mathbb E[V^j]=\mathbb E[V'^j]$ for $j=1,\ldots,L$, and $L>\frac{2c_4\log k}{n}$, then,
  \begin{flalign}
    \mathrm{TV}&\left( \mathbb E [\text{Poi}(nV)], \mathbb E [ \text{Poi}(nV')]\right)\nn \\
    &\le 2\exp\left(-\Big(\frac{L}{2}\log\frac{L}{ {2ec_4\log k} }-2c_4\log k\Big)\right)\wedge 1.
  \end{flalign}
\end{lemma}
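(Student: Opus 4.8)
This is \cite[Lemma~3]{wu2014minimax}, and the plan is to reconstruct that argument. Its engine is an elementary identity that converts the total variation between two Poisson mixtures into a power series in the differences of the moments of the mixing variables. First I would write the probability mass function of $\mathbb{E}[\mathrm{Poi}(nV)]$ at a point $\ell\in\mathbb{Z}_{\ge 0}$ as $\frac{n^\ell}{\ell!}\mathbb{E}[V^\ell e^{-nV}]$, and likewise for $V'$. Expanding $e^{-nv}=\sum_{j\ge 0}\frac{(-n)^j}{j!}v^j$ — legitimate term by term since $V$ and $V'$ are supported on the bounded interval $[\frac{1}{n\log k},\frac{c_4\log k}{n}]$, so Tonelli/Fubini applies — gives
\[
  \mathbb{E}\!\left[\frac{e^{-nV}(nV)^\ell}{\ell!}\right]-\mathbb{E}\!\left[\frac{e^{-nV'}(nV')^\ell}{\ell!}\right]=\frac{n^\ell}{\ell!}\sum_{j\ge 0}\frac{(-n)^j}{j!}\big(\mathbb{E}[V^{\ell+j}]-\mathbb{E}[V'^{\ell+j}]\big).
\]
Since the moments of $V$ and $V'$ agree for all orders $0,1,\dots,L$, every term with $\ell+j\le L$ vanishes.

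The next step is to take absolute values, sum over $\ell\ge 0$, re-index the surviving terms by $m\triangleq\ell+j>L$, and collapse the inner sum using the identity $\sum_{\ell=0}^m\binom{m}{\ell}=2^m$; this yields $2\,\mathrm{TV}\big(\mathbb{E}[\mathrm{Poi}(nV)],\,\mathbb{E}[\mathrm{Poi}(nV')]\big)\le\sum_{m>L}\frac{(2n)^m}{m!}\,\big|\mathbb{E}[V^m]-\mathbb{E}[V'^m]\big|$. Because $0\le V,V'\le\frac{c_4\log k}{n}$, both $\mathbb{E}[V^m]$ and $\mathbb{E}[V'^m]$ lie in $[0,(c_4\log k/n)^m]$, so the moment gap is at most $(c_4\log k/n)^m$ and the right-hand side reduces to the tail of an exponential series, $\sum_{m>L}\frac{(2c_4\log k)^m}{m!}$. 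I would conclude by estimating this tail by a single exponential in $L$, using $m!\ge(m/e)^m$ together with a comparison of the tail with its leading term once $m>L$, so as to land on the claimed expression $\exp\!\big(-(\frac{L}{2}\log\frac{L}{2ec_4\log k}-2c_4\log k)\big)$; the universal bound $\mathrm{TV}\le 1$ supplies the outer $\wedge\,1$, and in particular makes the claim content-free whenever that exponential exceeds $1$.

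The only genuinely delicate point is this last tail estimate, which must be clean and valid under the weak hypothesis $L>\frac{2c_4\log k}{n}$ alone — in particular without assuming that $L$ dominates $c_4\log k$. When $L$ is not comfortably larger than $b\triangleq 2c_4\log k$, one cannot bound $\sum_{m>L}\frac{b^m}{m!}$ by a convergent geometric series, so a coarser estimate is unavoidable, and it is precisely this coarseness that forces the factor $\frac12$ multiplying $L\log\frac{L}{2ec_4\log k}$ and the additive slack $2c_4\log k$ in the exponent. I would manage it by splitting the summation range near $m\approx 2b$: the short initial block $L<m\lesssim 2b$ is absorbed into a crude Stirling bound contributing the $e^{2c_4\log k}$-type factor, while the remaining block $m\gtrsim 2b$ has successive-term ratio $\frac{b}{m+1}\le\frac12$ and is summed geometrically. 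The remaining ingredients — the justification for interchanging summation and expectation, the binomial collapse, and the uniform bound on the moment gaps — are routine, each an immediate consequence of the bounded support of $V$ and $V'$.
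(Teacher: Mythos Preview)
The paper does not give its own proof of this lemma: it is stated with a citation to \cite[Lemma~3]{wu2014minimax} and used as a black box in the generalized Le Cam argument. Your proposal is precisely the standard reconstruction of that cited result --- power-series expansion of the Poisson pmf, cancellation of the first $L$ moment terms, the binomial collapse $\sum_{\ell=0}^m\binom{m}{\ell}=2^m$, and the Stirling/geometric tail bound on $\sum_{m>L}(2c_4\log k)^m/m!$ --- and it is correct. Your discussion of the tail estimate is also on point: the factor $\tfrac12$ on $L\log\frac{L}{2ec_4\log k}$ and the additive $2c_4\log k$ are exactly the slack needed to cover the regime where $L$ is only comparable to $2c_4\log k$, and the $\wedge\,1$ makes the inequality vacuously true there, so nothing more than your block-splitting is required.
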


What remains is to construct $V$ and $V'$ to maximize $d = |\mathbb{E}[\log V']  -\mathbb{E}[\log V]|$, subject to the constraints in Lemma \ref{lemma:4}. Consider the following optimization problem over random variables $X$ and $X'$.
\begin{flalign}\label{eq:max}
  \mathcal{E}^*=&\max \mathbb E[\log  {X}]-\mathbb E[\log  {X'}]\nn\\
  &\text{s.t. } \mathbb E[X^j]=\mathbb E[X'^j],\quad j=1,\ldots,L\nn\\
  &\quad\quad X,X'\in[\frac{1}{c_4\log^2 k},1].
\end{flalign}
As shown in Appendix E in \cite{wu2014minimax}, the maximum $\mathcal{E}^*$ is equal to twice the error in approximating $\log x$ by a polynomial with degree $L$:
\begin{flalign}
  \mathcal{E}^*=2E_L(\log,[\frac{1}{c_4\log^2 k},1]).
\end{flalign}
The following lemma provides a lower bound on the error in the approximation of $\log x$ by a polynomial with degree $L$ over $[L^{-2},1]$.
\begin{lemma}\cite[Lemma 4]{wu2014minimax}\label{lemma:approx}
There exist universal positive constants $c$, $c'$, $L_0$ such that for any $L>L_0$,
\begin{flalign}
  E_{\lfloor cL\rfloor}(\log, [L^{-2},1])>c'.
\end{flalign}
\end{lemma}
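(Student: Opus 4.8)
\emph{Proof proposal.} This statement is \cite[Lemma 4]{wu2014minimax}; here is how I would establish it. Writing $\eta=1/L$ and $n=\lfloor cL\rfloor$, the goal is to show that $E_n(\log,[\eta^2,1])$ stays bounded below by a universal constant once $c$ is a small enough absolute constant and $L$ is large. The guiding heuristic is analyticity: $\log x$ is analytic on $\mathbb{C}\setminus(-\infty,0]$, and under the affine map taking $[\eta^2,1]$ onto $[-1,1]$ the branch point $x=0$ goes to $t_0=-\frac{1+\eta^2}{1-\eta^2}$, which is exactly the leftmost point of the Bernstein ellipse with parameter $\rho_0=\frac{1+\eta}{1-\eta}=1+\frac{2}{L}+O(L^{-2})$. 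Bernstein's theorem then predicts $E_n(\log,[\eta^2,1])\asymp\rho_0^{-n}\asymp e^{-2n/L}$, which is $\asymp e^{-2c}=\Theta(1)$ at $n=\lfloor cL\rfloor$; the task is to turn the lower half of this estimate into a rigorous and uniform bound.

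The plan for the rigorous lower bound is to use duality. For any $n+2$ nodes $x_0<\dots<x_{n+1}$ in $[\eta^2,1]$, every polynomial of degree $\le n$ is killed by the $(n+1)$-st divided difference, so $|\log[x_0,\dots,x_{n+1}]|=|(\log-p)[x_0,\dots,x_{n+1}]|\le\Lambda\,\|\log-p\|_\infty$, where $\Lambda\triangleq\sum_{i=0}^{n+1}\prod_{j\ne i}|x_i-x_j|^{-1}$; taking the infimum over $p$ gives $E_n(\log,[\eta^2,1])\ge|\log[x_0,\dots,x_{n+1}]|/\Lambda$. (This is precisely the dual of the moment-matching identity $\mathcal{E}^*=2E_L$ invoked above, with the two composite priors placed on alternate nodes.) I would then (i) use the Cauchy-type representation $\log x=\int_0^\infty\big(\tfrac{1}{1+t}-\tfrac{1}{x+t}\big)\,dt$, which yields $\log[x_0,\dots,x_{n+1}]=(-1)^n\int_0^\infty\prod_{i=0}^{n+1}(x_i+t)^{-1}\,dt$, and lower-bound this integral by restricting to $t\lesssim x_0$; and (ii) upper-bound the divided-difference functional norm $\Lambda$ through Vandermonde-type products of the nodes. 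I would take the nodes logarithmically clustered toward $0$ inside $[\eta^2,1]$ (roughly geometrically spaced, $x_i\approx\eta^{2}\eta^{-2i/(n+1)}$), which is the configuration that makes the ratio in the lower bound largest.

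The hard part will be the bookkeeping: both $|\log[x_0,\dots,x_{n+1}]|$ and $\Lambda$ turn out to be of size $(L^2)^{n/2}$ times sub-exponential correction factors, and the lemma demands that their ratio not decay at all, so these correction factors must be balanced against one another and the nodes must be placed essentially at the equioscillation points of $\log$ on $[\eta^2,1]$ — equivalently, one needs the quantitative form of the converse to Bernstein's theorem rather than its soft $\limsup$ version. Once this accounting is carried out (as in \cite{wu2014minimax}), fixing $c$ to a small enough absolute constant yields the asserted $c'$ and $L_0$.
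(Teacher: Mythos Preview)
The paper does not prove this lemma at all: it is quoted verbatim as \cite[Lemma~4]{wu2014minimax} and used as a black box in the minimax lower-bound argument, so there is no ``paper's own proof'' to compare against. Your proposal is therefore not a comparison target but an attempt to reconstruct the cited result.

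On its own merits, your outline is headed in a sound direction. The Bernstein-ellipse heuristic correctly identifies the scale $\rho_0^{-n}\asymp e^{-2c}$, and the divided-difference duality bound $E_n\ge |f[x_0,\dots,x_{n+1}]|/\Lambda$ together with the integral representation of $\log$ is a legitimate route to a lower bound. However, you explicitly defer the only nontrivial step---showing that with geometrically spaced nodes the ratio $|f[x_0,\dots,x_{n+1}]|/\Lambda$ remains bounded below by a universal constant rather than decaying subexponentially---back to \cite{wu2014minimax}. That step is the entire content of the lemma; without carrying it out, the proposal is a plausible plan but not a proof. If you want a self-contained argument, you will need to actually compute both the Vandermonde-type product in $\Lambda$ and the integral $\int_0^\infty\prod_i(x_i+t)^{-1}\,dt$ for the geometric nodes $x_i=\eta^{2(1-i/(n+1))}$ and check that the exponential factors cancel to leave a constant; alternatively, one can appeal to classical results on the best approximation of $\log$ on $[a,1]$ (e.g., via the change of variable $x\mapsto e^{-s}$ and known rates for approximating $s$ by exponential sums), which is closer in spirit to how such bounds are usually established in the approximation-theory literature.
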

Let $X$ and $X'$ be the maximizer of \eqref{eq:max}. We let $V=\frac{c_4\log k}{n} X$ and $V'=\frac{c_4\log k}{n} X'$, such that $V, V'\in[\frac{1}{n\log k},\frac{c_4\log k}{n}]$. Then it can be shown that
\begin{flalign}
  \mathbb E[\log {V}]-\mathbb E[\log {V'}]=\mathcal{E}^*,
\end{flalign}
where $V$ and $V'$ match up to $L$-th moment. We choose the value of $d$ to be $\mathcal{E}^*$.

Hence, we set $L=\lfloor c\log k\rfloor$. Then from Lemma \ref{lemma:approx}, $d= \mathcal{E}^* >2c'$. We further assume  that $\log ^2 n \le c_5 k$, set $c_4$ and $c_5$ such that $2c_4^2+\frac{8c_5}{c'^2}<1$ and $\frac{c}{2}\log\frac{c}{2ec_4}-2c_4>2$. Then from Lemma \ref{lemma:2} and Lemma \ref{lemma:1}, with $\epsilon=\frac{\sqrt{k}\log k}{n}$, the minimax risk is lower bounded as follows:
\begin{flalign}
   {R}^*(k,m,n,f(k))\ge&{R}_{\mathcal{N}}^*(k,P^{(0)},n,f(k))\nn \\
   \gtrsim & (\frac{kf(k)}{n\log k})^2.
\end{flalign}
\subsubsection{Proof of Lemma \ref{lemma:1}}\label{app:lemma2}
Fix $\delta>0$ and $(P^{(0)},Q)\in \mathcal{N}_{k,f(k)}(0)$. Let $\hat D(P^{(0)},n)$ be a near optimal minimax estimator for $D(P^{(0)}\|Q)$ with $n$ samples such that
\begin{flalign}
  \sup_{(P^{(0)},Q)\in \mathcal{N}_{k,f(k)}(0)}&\mathbb{E}[(\hat D(P^{(0)},n)-D(P^{(0)}\|Q))^2] \nn \\
  \le & \delta + {R}_{\mathcal{N}}^*(k,P^{(0)},n,f(k)).
\end{flalign}
For any $(P^{(0)},\mathsf{Q})\in \mathcal{N}_{k,f(k)}(\epsilon)$, $\mathsf{Q}$ is approximately a distribution. We normalize $\mathsf{Q}$ to be a probability distribution, i.e., $\frac{\mathsf{Q}}{\sum_{i=1}^k \mathsf{Q}_i}$, and then we have,
\begin{flalign}
  D(P^{(0)}\|\mathsf{Q}) =&\sum_{i=1}^k P_{0,i}\log \frac{P_{0,i}}{\mathsf{Q}_i} \nn \\
% =&-\log k-\log \sum_{i=1}^k Q_i-\sum_{i=1}^k \frac{1}{k}\log \frac{Q_i}{\sum_{i=1}^k Q_i} \nn\\
  =&-\log \sum_{i=1}^k \mathsf{Q}_i+D\Big(P^{(0)}\Big\|\frac{\mathsf{Q}}{\sum_{i=1}^k \mathsf{Q}_i}\Big).
\end{flalign}
Fix distributions $(P^{(0)},\mathsf{Q})\in \mathcal {N}_{k,f(k)}(\epsilon)$. Let $N=(N_1,\dots,N_k)$, and $N_i\sim$Poi$(n\mathsf{Q}_i)$. And define $n'=\sum N_i\sim\text{Poi}(n\sum \mathsf{Q}_i)$. We set an estimator under the Poisson sampling by
\begin{flalign}
  \widetilde D(P^{(0)},N)=\hat D(P^{(0)},n').
\end{flalign}
By the triangle inequality, we obtain
\begin{flalign}\label{eq:109}
  &\frac{1}{2}\big(\widetilde D(P^{(0)},N)-D(P^{(0)}\|\mathsf{Q})\big)^2 \nn \\
  &\le \left(\widetilde D(P^{(0)},N)-D\Big(P^{(0)}\Big\|\frac{\mathsf{Q}}{\sum_{i=1}^k \mathsf{Q}_i}\Big)\right)^2 \nn \\
   &\quad +\left(D\Big(P^{(0)}\Big\|\frac{\mathsf{Q}}{\sum_{i=1}^k \mathsf{Q}_i}\Big)-D(P^{(0)}\|\mathsf{Q})\right)^2\nn\\
  &= \left(\widetilde D(P^{(0)},N)-D\Big(P^{(0)}\Big\|\frac{\mathsf{Q}}{\sum_{i=1}^k \mathsf{Q}_i}\Big)\right)^2 + (\log \sum_{i=1}^k \mathsf{Q}_i)^2\nn\\
  &\le \left(\widetilde D(P^{(0)},N)-D\Big(P^{(0)}\Big\|\frac{\mathsf{Q}}{\sum_{i=1}^k \mathsf{Q}_i}\Big)\right)^2 +\log^2(1+\epsilon).
\end{flalign}
Since $n'=\sum N_i\sim$Poi$(n\sum Q_i)$, we can show that
\begin{small}
\begin{flalign}
  &\mathbb E\left[\Big(\widetilde D(P^{(0)},N)-D\Big(P^{(0)}\Big\|\frac{\mathsf{\mathsf{Q}}}{\sum_{i=1}^k \mathsf{\mathsf{Q}}_i}\Big)\Big)^2\right] \nn \\
  &=\sum_{j=1}^\infty \mathbb E\left[\Big(\hat D(P^{(0)},j)-D\Big(P^{(0)}\Big\|\frac{\mathsf{\mathsf{Q}}}{\sum_{i=1}^k \mathsf{\mathsf{Q}}_i}\Big)\Big)^2\bigg|n'=j\right]\mathbb P(n'=j)\nn\\
  &\le\sum_{j=1}^\infty {R}_{\mathcal{N}}^*(k,P^{(0)},j,f(k))\mathbb P(n'=j)+\delta.
\end{flalign}
\end{small}
We note that for fixed $k$, ${R}_{\mathcal{N}}^*(k,P^{(0)},j,f(k))$ is a monotone decreasing function with respect to $n$. We also have ${R}_{\mathcal{N}}^*(k,P^{(0)},j,f(k))\le \log^2 f(k)$, because for any $(P^{(0)},\mathsf{Q})\in \mathcal{N}_{k,f(k)}(0)$, $D(P^{(0)}\|\mathsf{Q})\le \log f(k)$. Furthermore, since $n'\sim \text{Poi}(n\sum \mathsf{Q}_i)$, and $|\sum \mathsf{Q}_i-1|\le\epsilon \le 1/3$, we have $P(n'>\frac{n}{2}) \le e^{-\frac{n}{50}}$. Hence, we obtain
\begin{flalign}\label{eq:111}
  &\mathbb E\left[\bigg(\widetilde D(P^{(0)},N)-D\Big(P^{(0)}\Big\|\frac{\mathsf{Q}}{\sum_{i=1}^k \mathsf{Q}_i}\Big)\bigg)^2\right]\nn\\
  &\le\sum_{j=1}^\infty {R}_{\mathcal{N}}^*(k,P^{(0)},j,f(k))\mathbb P(n'=j)+\delta\nn\\
  &=\sum_{j=1}^{n/2} {R}_{\mathcal{N}}^*(k,P^{(0)},j,f(k))\mathbb P(n'=j) \nn \\
  &\quad +\sum_{j=\frac{n}{2}+1}^{\infty} {R}_{\mathcal{N}}^*(k,P^{(0)},j,f(k))\mathbb P(n'=j)+\delta\nn\\
  &\le {R}_{\mathcal{N}}^*(k,P^{(0)},\frac{n}{2},f(k))+(\log ^2 f(k))P(n'>\frac{n}{2})+\delta\nn\\
  &\le {R}_{\mathcal{N}}^*(k,P^{(0)},\frac{n}{2},f(k))+\log ^2 f(k)e^{-\frac{n}{50}}+\delta.
\end{flalign}
Combining \eqref{eq:109} and \eqref{eq:111} completes the proof because $\delta$ can be arbitrarily small.

\subsubsection{Proof of Lemma \ref{lemma:2}}\label{app:lemma3}
We construct the following pairs of $(P,\mathsf{Q})$ and $(P',\mathsf{Q}')$:
\begin{small}
\begin{flalign}
  P&=P'=  P^{(0)}=\bigg(\frac{f(k)}{n\log k},\ \ldots,\ \frac{f(k)}{n\log k},\ 1-\frac{(k-1)f(k)}{n\log k}\bigg),\\
  \mathsf{Q}&=\left({V_1},\ \ldots,\ V_{k-1},\ 1-(k-1)\alpha\right),\\
  \mathsf{Q}'&=\left({V_1'},\ \ldots,\ {V'_{k-1}},\ 1-{(k-1)\alpha}\right).
\end{flalign}
\end{small}
We further define the following events:
\begin{flalign}
  E\triangleq \bigg\{\left|\sum_{i=1}^{k-1} {V_i} - {(k-1)\alpha} \right|\le&\epsilon, \nn \\
  |D(P\|\mathsf{Q})-\mathbb{E}(D(P\|\mathsf{Q}&))|\le \frac{d(k-1)f(k)}{4n\log k}\bigg\},\\
  E'\triangleq \bigg\{\left|\sum_{i=1}^{k-1} {V'_i} - {(k-1)\alpha} \right|\le&\epsilon, \nn \\
  |D(P'\|\mathsf{Q}')-\mathbb{E}(D(P'\|\mathsf{Q}'&))|\le \frac{d(k-1)f(k)}{4n\log k}\bigg\}.
\end{flalign}
By union bound and Chebyshev's inequality, we have
\begin{flalign}
  P(E^C)&\le \frac{(k-1)\text{Var}(V)}{\epsilon^2}+\frac{16(k-1)\text{Var}(\frac{f(k)}{n\log k}\log V_i)}{(\frac{(k-1)f(k)}{n\log k}d)^2}\nn\\
  &\le \frac{c_4^2 (k-1)\log^2 k}{\epsilon^2n^2}+\frac{16\log^2 ({n\log k}) }{(k-1)d^2}.
%  &\overset{(a)}{\le} \frac{c_4^2 k\log^2 k}{\epsilon^2n^2}+\frac{16\log^2 n}{kd^2}
\end{flalign}
Similarly, we have
\begin{flalign}
   P(E'^C)\le \frac{c_4^2 (k-1)\log^2 k}{\epsilon^2n^2}+\frac{16\log^2 ({n\log k}) }{(k-1)d^2}.
\end{flalign}
Now, we define two priors on the set $\mathcal{N}_{k,f(k)}(\epsilon)$ by the following conditional distributions:
\begin{flalign}
  \pi=P_{V|E} \quad\text{ and }\quad  \pi'=P_{V'|E'}.
\end{flalign}
Hence, given $\pi$ and $\pi'$ as prior distributions, recall the assumption $|\mathbb E[\log V]-\mathbb E[\log V']|\ge d$, we have
\begin{flalign}
  |D(P\|\mathsf{Q})-D(P'\|\mathsf{Q}')|\ge \frac{d(k-1)f(k)}{2n\log k}.
\end{flalign}
Now, we consider the total variation of observations under $\pi$ and $\pi'$. The observations are Poisson distributed: $N_i\sim \text{Poi}(n\mathsf{Q}_i)$ and $N'_i\sim \text{Poi}(n\mathsf{Q}'_i)$.
By the triangle inequality, we have
\begin{flalign}\label{eq:39}
  &\mathrm{TV}(P_{N|E},P_{N'|E'})\nn \\
  &\le \mathrm{TV}(P_{N|E},P_N)+\mathrm{TV}(P_N,P_{N'})+\mathrm{TV}(P_{N'},P_{N'|E'})\nn\\
  &= P(E^C)+P(E'^C)+\mathrm{TV}(P_N,P_{N'})\nn\\
  &\le \frac{2c_4^2 (k-1)\log^2 k}{\epsilon^2n^2}+\frac{32\log^2 ({n\log k}) }{(k-1)d^2}+\mathrm{TV}(P_N, P_{N'}).
\end{flalign}
From the fact that total variation of product distribution can be upper
bounded by the summation of individual ones we obtain,
\begin{flalign}\label{eq:40}
  \mathrm{TV}(P_N, P_{N'})& \le \sum_{i=1}^{k-1}\mathrm{TV}(\mathbb E(\text{Poi}( {nV_i} )),\mathbb E(\text{Poi}( {nV'_i} ))),\nn\\
  &= k\mathrm{TV}\left(\mathbb E(\text{Poi}( {nV} )),\mathbb E(\text{Poi}( {nV'} ))\right).
\end{flalign}
Applying the generalized Le Cam's method \cite{Tsybakov2008}, and combining \eqref{eq:39} and \eqref{eq:40} completes the proof.

\section{Proof of Proposition \ref{prop:upperbound}}
\label{app:prop4}
We first denote %$D_1=\sum_{i=1}^k P_i\log P_i$ and $D_2=\sum_{i=1}^k P_i\log Q_i$.
\begin{equation}
  D_1\triangleq\sum_{i=1}^k P_i\log P_i,\quad  D_2\triangleq\sum_{i=1}^k P_i\log Q_i.
\end{equation}
Hence, $D(P\|Q)=D_1-D_2$. Recall that our estimator $\hat D_{\mathrm{opt}}$ for $D(P\|Q)$ is:
\begin{flalign}
  \hat D_{\mathrm{opt}}=\widetilde D_{\mathrm{opt}} \vee 0\wedge \log f(k),
\end{flalign}
where
\begin{flalign}
\widetilde D_{\mathrm{opt}}=&\hat D_1-\hat D_2,\\
%\hat D_1&=\widetilde D_1 \wedge 0 \vee (-\log k),\\
%\hat D_2&=\widetilde D_2 \wedge 0 \vee (-\log kf(k)),\\
  \hat D_1=&\sum_{i=1}^k \bigg(g_L'(M_i)\mathds{1}_{\{M_i'\le c'_2\log k\}} \nn \\
  &\qquad\qquad + (\frac{M_i}{m}\log \frac{M_i}{m} - \frac{1}{2m} )\mathds{1}_{\{M_i'> c'_2\log k\}}\bigg)\nn \\
  \triangleq & \sum_{i=1}^k \hat D_{1,i},\\
  \hat D_2=&\sum_{i=1}^k \bigg(\frac{M_i}{m} g_L( {N_i} ) \mathds{1}_{\{N_i'\le c_2\log k\}} \nn \\
  &\qquad + \frac{M_i}{m} \Big(\log \frac{N_i+1}{n} -\frac{1}{2(N_i+1)}\Big)\mathds{1}_{\{N_i'> c_2\log k\}}\bigg)\nn\\
  \triangleq& \sum_{i=1}^k \hat D_{2,i}.
\end{flalign}
%Then, we can show that
%\begin{align}
%    R(\hat{D}_{\mathrm{opt}},k,m,n,f(k))& = \sup_{(P,Q)\in  \mathcal M_{k,f(k)}} \mE\left[\Big(\hat D(M,N)-D(P\|Q)\Big)^2\right]\nn\\
%    & = \sup_{(P,Q)\in  \mathcal M_{k,f(k)}} \mE\left[\Big(\hat D_1- \sum_{i=1}^k P_i\log P_i-\hat D_2+\sum_{i=1}^k P_i\log Q_i\Big)^2\right]\nn\\
%    & \lesssim \sup_{(P,Q)\in  \mathcal M_{k,f(k)}} \mE\left[\Big(\hat D_1- \sum_{i=1}^k P_i\log P_i \Big)^2\right]+\sup_{(P,Q)\in  \mathcal M_{k,f(k)}} \mE\left[\Big(\hat D_2-\sum_{i=1}^k P_i\log Q_i \Big)^2\right].
%\end{align}
%The first term can be bounded as in \cite{wu2014minimax}:
%\begin{equation*}
%  \sup_{(P,Q)\in  \mathcal M_{k,f(k)}} \mE\left[\Big(\hat D_1- \sum_{i=1}^k P_i\log P_i \Big)^2\right] \lesssim \left(\frac{k}{m\log k }\right)^2 +\frac{\log^2 k}{m}.
%\end{equation*}
%We then focus on bounding $\sup_{(P,Q)\in  \mathcal M_{k,f(k)}} \mE\left[\Big(\hat D_2-\sum_{i=1}^k P_i\log Q_i \Big)^2\right]$, which is the worst case quadratic risk for estimating $\sum_{i=1}^k P_i\log Q_i$.

We define the following sets:
\begin{flalign}
  E_{1,i}&\triangleq\{N_i'\le c_2\log k, Q_i\le \frac{c_1\log k}{n}\},\\
  E_{2,i}&\triangleq\{N_i'> c_2\log k, Q_i> \frac{c_3\log k}{n}\},
%  F_{1,i}&\triangleq\{N_i'\le c_2\log k, Q_i> \frac{c_1\log k}{n}\},\nn\\
%  F_{2,i}&\triangleq\{N_i'> c_2\log k, Q_i\le \frac{c_3\log k}{n}\},
\end{flalign}
and
\begin{flalign}
  E_{1,i}'&\triangleq\{M_i'\le c_2'\log k, P_i\le \frac{c_1'\log k}{m}\},\\
  E_{2,i}'&\triangleq\{M_i'> c_2'\log k, P_i> \frac{c_3'\log k}{m}\},
%  F_{1,i}'&\triangleq\{M_i'\le c_2'\log k, P_i> \frac{c_1'\log k}{m}\},\nn\\
%  F_{2,i}'&\triangleq\{M_i'> c_2'\log k, P_i\le \frac{c_3'\log k}{m}\},
\end{flalign}
where $c_1> c_2> c_3$ and $c_1'> c_2'> c_3'$.
We further define the following sets:
\begin{flalign}
  &E_1  \triangleq \bigcap_{i=1}^k E_{1,i},\quad   E_2  \triangleq \bigcap_{i=1}^k E_{2,i}, \\
  &E'_1 \triangleq \bigcap_{i=1}^k E'_{1,i},\quad   E'_2  \triangleq \bigcap_{i=1}^k E'_{2,i},\\
  &E\triangleq E_1 \cup E_2,\quad  E'\triangleq E'_1 \cup E'_2,\\
  &\bar E\triangleq E\cap E'= \bigcap_{i=1}^k \Big((E_{1,i}\cup E_{2,i})\cap (E_{1,i}'\cup E_{2,i}')\Big).
\end{flalign}
%It can be shown that
%\begin{flalign}
%  \bar E^c&=\bigcup_{i=1}^k (E_{1,i}\cup E_{2,i})^c \cup (E_{1,i}'\cup E_{2,i}')^c =\bigcup_{i=1}^k (F_{1,i}\cup F_{2,i}) \cup (F_{1,i}'\cup F_{2,i}').
%\end{flalign}
By union bound and Chernoff bound for Poisson distributions \cite[Theorem 5.4]{mitzenmacher2005probability}, we have
%
%\begin{align*}
%  F_{1,i}&\triangleq\{N_i'\le c_2\log k, Q_i> \frac{c_1\log k}{n}\},\nn\\
%  F_{2,i}&\triangleq\{N_i'> c_2\log k, Q_i\le \frac{c_3\log k}{n}\},\nn\\
%    F_{1,i}'&\triangleq\{M_i'\le c_2'\log k, P_i> \frac{c_1'\log k}{m}\},\nn\\
%  F_{2,i}'&\triangleq\{M_i'> c_2'\log k, P_i\le \frac{c_3'\log k}{m}\},
%\end{align*}
\begin{align}
  \mathbb{P}(\bar E^c)=&\mathbb{P}\bigg(  \bigcup_{i=1}^k (E_{1,i}\cup E_{2,i})^c \cup (E_{1,i}'\cup E_{2,i}')^c  \bigg)\nn\\
   \le& k\bigg(\mathbb{P}\Big(N_i'\le c_2\log k, Q_i> \frac{c_1\log k}{n}\Big) \nn \\
  &\quad + \mathbb{P}\Big(N_i'> c_2\log k, Q_i\le \frac{c_3\log k}{n}\Big)\nn \\
  &\quad+\mathbb{P}\Big(M_i'\le c_2'\log k, P_i> \frac{c_1'\log k}{m}\Big)\nn \\
  &\quad+\mathbb{P}\Big(M_i'> c_2'\log k, P_i\le \frac{c_3'\log k}{m}\Big) \bigg)\nn\\
  \le & \frac{1}{k^{c_1-c_2\log\frac{ec_1}{c_2}-1}}+\frac{1}{k^{c_3-c_2\log\frac{ec_3}{c_2}-1}}\nn \\
  &+\frac{1}{k^{c_1'-c_2'\log\frac{ec_1'}{c_2'}-1}}+\frac{1}{k^{c_3'-c_2'\log\frac{ec_3'}{c_2'}-1}}.
\end{align}
Note that $\hat D_{\mathrm{opt}},D(P\|Q)\in[0,\log f(k)]$, and $\hat D_{\mathrm{opt}}=\widetilde D_{\mathrm{opt}} \vee 0\wedge \log f(k)$. Therefore, we have
\begin{flalign}
  &\mE [(\hat D_{\mathrm{opt}}-D(P\|Q))^2]\nn\\
  &=\mE [(\hat D_{\mathrm{opt}}-D(P\|Q))^2 \mathds{1}_{\{\bar E\}} + (\hat D_{\mathrm{opt}}-D(P\|Q))^2 \mathds{1}_{\{\bar E^c\}} ]\nn\\
  &\le \mE [(\widetilde D_{\mathrm{opt}}-D(P\|Q))^2 \mathds{1}_{\{\bar E\}}] +\log^2 f(k)P(\bar E^c)\nn\\
  &=\mE[(\hat D_1-\hat D_2-D_1+D_2)^2\mathds{1}_{\{\bar E\}}] +\log^2 f(k)P(\bar E^c).
%  =&\mE\left[\mE\left[(\hat D_1-\hat D_2-D_1+D_2)^2|\bar E\right] \right]+\log^2 f(k)P(\bar E^c)
\end{flalign}
%where (a) is due to the independence between $M_i,N_i$ and $M_i',N_i'$.
We choose constants $c_1,c_2,c_3,c_1',c_2',c_3'$ such that $c_1-c_2\log\frac{ec_1}{c_2}-1>C$, $c_1'-c_2'\log\frac{ec_1'}{c_2'}-1>C$, $c_3-c_2\log\frac{ec_3}{c_2}-1>C$, and $c_3'-c_2'\log\frac{ec_3'}{c_2'}-1>C$. Then together with $\log m\le C\log k$, we have
\begin{flalign}
  \log^2 f(k) P(\bar E^c)\le \frac{\log^2f(k)}{m}.
\end{flalign}

Define the index sets $I_1$, $I_2$, $I_1'$ and $I_2'$ as follows:
\begin{flalign}
  &I_1\triangleq \{i: N_i'\le c_2\log k, Q_i\le \frac{c_1\log k}{n}\},\nn\\
  &I_2\triangleq \{i: N_i'> c_2\log k, Q_i> \frac{c_3\log k}{n}\},\nn\\
    &I_1'\triangleq \{i: M_i'\le c_2'\log k, P_i\le \frac{c_1'\log k}{m}\},\nn\\
  &I_2'\triangleq \{i: M_i'> c_2'\log k, P_i> \frac{c_3'\log k}{m}\}.
\end{flalign}
%Using these index set, define
%\begin{align*}
%  A & \triangleq \sum_{i\in I_1\cap I_1'}\big(\hat D_{1,i}-\hat D_{2,i}-P_i\log P_i+P_i \log Q_i),\quad B\triangleq \sum_{i\in I_2\cap I_1'}\big(\hat D_{1,i}-\hat D_{2,i}-P_i\log P_i+P_i \log Q_i), \\
%  C & \triangleq \sum_{i\in I_1\cap I_2'}\big(\hat D_{1,i}-\hat D_{2,i}-P_i\log P_i+P_i \log Q_i),\quad D\triangleq \sum_{i\in I_2\cap I_2'}\big(\hat D_{1,i}-\hat D_{2,i}-P_i\log P_i+P_i \log Q_i).
%\end{align*}

Thus, we can upper bound $\mE[(\hat D_1-\hat D_2-D_1+D_2)^2\mathds{1}_{\{\bar E\}}]$ as follows:
\begin{flalign}\label{eq:135}
  &\mE[(\hat D_1-\hat D_2-D_1+D_2)^2\mathds{1}_{\{\bar E\}}]\nn\\
  %=&\mE\bigg[\Big(A+B+C +D\Big)^2\mathds{1}_{\{\bar E\}}\bigg]\nn\\
  &\le  \mE\bigg[\Big(\hat D_1-\hat D_2-D_1+D_2\Big)^2\bigg]\nn\\
  &=\mE\bigg[\mE ^2\Big(\hat D_1-\hat D_2-D_1+D_2 \Big|  I_1,I_2,I_1',I_2'\Big) \nn \\
   & \qquad\qquad + \text{Var}\Big(\hat D_1-\hat D_2\Big|  I_1,I_2,I_1',I_2'\Big)\bigg],
\end{flalign}
where the last step follows from the conditional variance formula. For the second term in \eqref{eq:135},
\begin{flalign}\label{eq:142}
  &\text{Var}\Big(\hat D_1-\hat D_2 \Big|  I_1,I_2,I_1',I_2'\Big)\nn\\
  %=&\text{Var}\bigg[\sum_{i\in I_1\cap I_1'}  \big(\hat D_{1,i}-\hat D_{2,i}\big) +
%      \sum_{i\in I_2\cap I_1'} \big(\hat D_{1,i}-\hat D_{2,i}\big)+
%      \sum_{i\in I_1\cap I_2'} \big(\hat D_{1,i}-\hat D_{2,i}\big)+
%      \sum_{i\in I_2\cap I_2'} \big(\hat D_{1,i}-\hat D_{2,i}\big)  \bigg|\bar E\bigg]\nn\\
  &\le  4\text{Var}\bigg[\sum_{i\in I_1\cap I_1'} \big(\hat D_{1,i}-\hat D_{2,i}\big)\bigg|I_1, I_1'\bigg] \nn \\
  & \quad + 4\text{Var}\bigg[\sum_{i\in I_2\cap I_1'} \big(\hat D_{1,i}-\hat D_{2,i}\big)\bigg|I_2, I_1'\bigg] \nn\\
  & \quad + 4\text{Var}\bigg[\sum_{i\in I_1\cap I_2'} \big(\hat D_{1,i}-\hat D_{2,i}\big)\bigg|I_1 , I_2'\bigg] \nn \\
  & \quad + 4\text{Var}\bigg[\sum_{i\in I_2\cap I_2'} \big(\hat D_{1,i}-\hat D_{2,i}\big)  \bigg|I_2, I_2'\bigg].
\end{flalign}

Furthermore, we define $\mathcal{E}_1$, $\mathcal{E}_2$ and $\mathcal{E}'$ as follows:
\begin{flalign}
  \mathcal{E}_1\triangleq &\sum_{i\in I_1\cap(I_1'\cup I_2')} (\hat D_{2,i}-P_i\log Q_i),\\
  \mathcal{E}_2\triangleq &\sum_{i\in I_2\cap(I_1'\cup I_2')} (\hat D_{2,i}-P_i\log Q_i),\\
  \mathcal{E}'\triangleq &\sum_{i\in (I_1 \cup I_2)\cap (I_1'\cup I_2')} (\hat D_{1,i}-P_i\log P_i).
%    \mathcal{E}_1'=\sum_{i\in I_1'} \widetilde D_{1,i}-P_i\log Q_i,\\
%  \mathcal{E}_2'=\sum_{i\in I_2} \widetilde D_{1,i}-P_i\log Q_i.
\end{flalign}
Then, the first term in \eqref{eq:135} can be bounded by
\begin{small}
\begin{align}\label{eq:100}
  &\mE ^2\Big(\hat D_1-\hat D_2-D_1+D_2 \Big|  I_1,I_2,I_1',I_2'\Big) \nn \\
&= \mE ^2\Big(\mathcal{E}'-\mathcal{E}_1-\mathcal{E}_2 \Big|  I_1,I_2,I_1',I_2'\Big)\nn \\
&\le 2 \mE ^2\Big(\mathcal{E}'|  I_1,I_2,I_1',I_2' \Big) + 2 \mE ^2 (\mathcal{E}_1+\mathcal{E}_2 |  I_1,I_2,I_1',I_2' )\nn\\
&\le 2 \mE ^2\Big(\mathcal{E}'|  I_1,I_2,I_1',I_2' \Big)+4\mE^2[\mathcal{E}_1|I_1,I_1',I_2'] + 4\mE^2[\mathcal{E}_2|I_2,I_1',I_2'].
\end{align}
\end{small}
Following steps similar to those in \cite{wu2014minimax}, it can be shown that
\begin{flalign}
  \mE ^2\Big(\mathcal{E}'|  I_1,I_2,I_1',I_2' \Big)\lesssim \frac{k^2}{m^2\log^2 k}.
\end{flalign}
%The second term,
%\begin{flalign}
%  \mE ^2 (\mathcal{E}_1+\mathcal{E}_2 |  I_1,I_2,I_1',I_2' )\le 2\mE^2[\mathcal{E}_1|I_1,I_1',I_2']+2\mE^2[\mathcal{E}_2|I_2,I_1',I_2'].
%\end{flalign}
Thus, in order to bound \eqref{eq:135}, we bound the four terms in \eqref{eq:142} and the last two terms in \eqref{eq:100} one by one.

\subsection{Bounds on the Variance}
\subsubsection{Bounds on Var$\Big[\sum_{i\in I_1\cap I_1'} (\hat D_{1,i}-\hat D_{2,i})\Big|I_1,I_1'\Big]$}\label{app:f21}

We first show that
\begin{align}
&\text{Var}\bigg[\sum_{i\in I_1\cap I_1'}( \hat D_{1,i}-\hat D_{2,i})\bigg|I_1,I_1'\bigg] \nn \\
&\le 2\text{Var}\bigg[\sum_{i\in I_1\cap I_1'} \hat D_{1,i}\bigg|I_1,I_1'\bigg] +2\text{Var}\bigg[\sum_{i\in I_1\cap I_1'}  \hat D_{2,i}\bigg|I_1,I_1'\bigg].
\end{align}
Following  steps similar to those in \cite{wu2014minimax}, it can be shown that
\begin{flalign}
  \text{Var}\bigg[\sum_{i\in I_1\cap I_1'} \hat D_{1,i}\bigg|I_1,I_1'\bigg]\lesssim \frac{k^2 }{m^2\log^2k}.
\end{flalign}
In order to bound $\text{Var}[\sum_{i\in I_1\cap I_1'}  \hat D_{2,i}|I_1,I_1']$, we  bound $\text{Var}(\frac{M_i}{m}g_L(N_i))$ for each $i\in I_1\cap I_1'$. Due to the independence between $M_i$ and $N_i$, $\frac{M_i}{m}$ is independent of $g_L(N_i)$. Hence,
\begin{flalign}
  &\text{Var}\bigg[\sum_{i\in I_1\cap I_1'}  \hat D_{2,i}\bigg|I_1,I_1'\bigg] \nn \\
  &= \sum_{i\in I_1\cap I_1'} \text{Var}\Big(\frac{M_i}{m}g_L(N_i)\Big)\nn\\
  &= \sum_{i\in I_1\cap I_1'} \bigg[\Big(\text{Var}(\frac{M_i}{m})+\mE(\frac{M_i}{m})^2 \Big)\text{Var}\big(g_L(N_i)\big)\nn \\
  &\qquad\qquad\qquad+\text{Var}(\frac{M_i}{m})\Big(\mE\big(g_L(N_i)\big)\Big)^2\bigg].
\end{flalign}
We note that $\text{Var}(\frac{M_i}{m})=\frac{P_i}{m}$, and $\mE(\frac{M_i}{m})=P_i$. We need to upper bound $\text{Var}(g_L(N_i))$ and $\Big(\mE\big(g_L(N_i)\big)\Big)^2$, for $i\in I_1\cap I_1'$. Recall that $g_L(N_i)=\sum_{j=1}^L \frac{a_j}{(c_1\log k)^{j-1}}(N_i)_{j-1}-\log\frac{n}{c_1\log k}$. The following lemma from \cite{wu2014minimax} is useful, which provides an upper bound on the variance of $(N_i)_j$.
\begin{lemma}\cite[Lemma 6]{wu2014minimax}\label{lemma:10}
  If $X\sim$Poi$(\lambda)$ and $(x)_j=\frac{x!}{(x-j)!}$, then the variance of $(X)_j$ is increasing in $\lambda$ and
  \begin{flalign}
    \text{Var}(X)_j\le (\lambda j)^j\left(\frac{(2e)^{2\sqrt{\lambda j}}}{\pi \sqrt{\lambda j}}\vee 1\right).
  \end{flalign}
\end{lemma}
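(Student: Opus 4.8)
The plan is to reduce everything to an exact closed form for the variance, from which monotonicity is immediate and the bound is a routine estimate. First I would record that the factorial moments of a Poisson variable are $\mE[(X)_j]=\lambda^{j}$, which follows in one line from $\mE[(X)_j]=\sum_{x\ge j}\frac{\lambda^{x}e^{-\lambda}}{(x-j)!}=\lambda^{j}$. For the second moment I would use the Vandermonde-type identity for falling factorials, $(x)_j^{2}=\sum_{k=0}^{j}k!\binom{j}{k}^{2}(x)_{2j-k}$ (citable from standard combinatorics, provable by a short double-counting argument, or obtainable by expanding $\mE[(X)_j^{2}]=\lambda^{j}e^{-\lambda}\sum_{y\ge0}\frac{(y+j)!}{(y!)^{2}}\lambda^{y}$ directly). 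Taking expectations and peeling off the $k=0$ term, which equals $(\mE[(X)_j])^{2}$, gives
\[
  \mathrm{Var}\,(X)_j=\sum_{k=1}^{j}k!\binom{j}{k}^{2}\lambda^{2j-k}.
\]
Monotonicity in $\lambda$ is then immediate: each summand has a positive coefficient and a positive exponent $2j-k\ge j\ge 1$, so the finite sum is increasing on $\lambda>0$.

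For the upper bound I would substitute $\ell=j-k$ and use $k!\binom{j}{k}^{2}=\frac{(j!)^{2}}{(\ell!)^{2}(j-\ell)!}=j!\cdot\frac{j!}{(j-\ell)!}\cdot\frac{1}{(\ell!)^{2}}$ to rewrite the sum as $\mathrm{Var}\,(X)_j=j!\,\lambda^{j}\sum_{\ell=0}^{j-1}\frac{j!}{(j-\ell)!}\frac{\lambda^{\ell}}{(\ell!)^{2}}$. Bounding the falling factorial $\frac{j!}{(j-\ell)!}\le j^{\ell}$, extending the sum to infinity, and recognizing $\sum_{\ell\ge0}\frac{(j\lambda)^{\ell}}{(\ell!)^{2}}=I_0(2\sqrt{j\lambda})$ (modified Bessel function of the first kind), I get $\mathrm{Var}\,(X)_j\le j!\,\lambda^{j}\,I_0(2\sqrt{j\lambda})$. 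It then remains to replace $j!$ by $j^{j}$ and to control $I_0$: I would use $I_0(z)\le e^{z}$, which follows from $\binom{2\ell}{\ell}\le 4^{\ell}$ (so $I_0(z)=\sum_\ell\frac{z^{2\ell}}{4^{\ell}(\ell!)^{2}}\le\sum_\ell\frac{z^{2\ell}}{(2\ell)!}\le e^{z}$), together with the elementary inequality $\pi z\le 2^{z+1}$ for all $z>0$ (the function $z\mapsto 2^{z+1}-\pi z$ has a strictly positive minimum near $z\approx 1.18$). With $z=2\sqrt{j\lambda}$ this yields $I_0(2\sqrt{j\lambda})\le e^{z}\le\frac{2^{z+1}e^{z}}{\pi z}=\frac{(2e)^{2\sqrt{j\lambda}}}{\pi\sqrt{j\lambda}}$, and hence
\[
  \mathrm{Var}\,(X)_j\le (j\lambda)^{j}\,\frac{(2e)^{2\sqrt{j\lambda}}}{\pi\sqrt{j\lambda}}\le (j\lambda)^{j}\Big(\frac{(2e)^{2\sqrt{j\lambda}}}{\pi\sqrt{j\lambda}}\vee 1\Big),
\]
which is the claimed bound (the ``$\vee\,1$'' is absorbed for free since $a\le a\vee 1$).

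I expect the main obstacle to be the two places where an exact handle is needed: establishing the falling-factorial product identity that produces the closed form for $\mathrm{Var}\,(X)_j$, and then selecting the right slightly-lossy estimates — $j!\le j^{j}$, the series-to-$I_0$ step, $I_0(z)\le e^{z}$, and $\pi z\le 2^{z+1}$ — so that the constant $2e$ and the prefactor $1/(\pi\sqrt{j\lambda})$ come out exactly as stated rather than with spurious extra factors. Everything downstream of the closed-form expression is bookkeeping.
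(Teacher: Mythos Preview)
The paper does not prove this lemma; it is quoted verbatim as \cite[Lemma~6]{wu2014minimax} and used as a black box, so there is no in-paper proof to compare against.

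Your argument is correct and complete. The closed form $\mathrm{Var}\,(X)_j=\sum_{k=1}^{j}k!\binom{j}{k}^{2}\lambda^{2j-k}$ follows from the falling-factorial convolution identity $(x)_m(x)_n=\sum_k k!\binom{m}{k}\binom{n}{k}(x)_{m+n-k}$ together with $\mE[(X)_r]=\lambda^r$, and monotonicity is then immediate. Your bound chain---$\frac{j!}{(j-\ell)!}\le j^\ell$, $j!\le j^j$, recognizing $\sum_{\ell\ge0}(j\lambda)^\ell/(\ell!)^2=I_0(2\sqrt{j\lambda})$, $I_0(z)\le e^z$ via $\binom{2\ell}{\ell}\le 4^\ell$, and finally $\pi z\le 2^{z+1}$ to trade $e^z$ for $(2e)^z/(\pi z/2)$---lands exactly on the stated constants with no slack to spare. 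The only place worth a word of caution is the elementary inequality $\pi z\le 2^{z+1}$: your sketch of checking the unique interior critical point is fine, but in a write-up you should record the minimum value explicitly (at $z^\star=\log_2(\pi/\ln 2)-1\approx1.18$ one gets $2^{z^\star+1}-\pi z^\star\approx0.82>0$) so the reader can verify it without a calculator.
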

Furthermore,  the polynomial coefficients can be upper bounded as $|a_j|\le 2e^{-1}2^{3L}$ \cite{cai2011}.
Due to the fact that the variance of the sum of random variables is upper bounded by the square of the sum of the individual standard deviations, we obtain
\begin{flalign}\label{eq:1b}
  &\text{Var}\Big(g_L(N_i)\Big) \nn \\
  &=\text{Var}\Big(\sum_{j=2}^L \frac{a_j}{(c_1\log k)^{j-1}}(N_i)_{j-1}\Big)\nn\\
  &\le \bigg( \sum_{j=2}^L\frac{a_j}{(c_1\log k)^{j-1}} \sqrt{\text{Var}\big((N_i)_{j-1}\big)}\bigg)^2\nn\\
  &\le \bigg( \sum_{j=2}^L\frac{2e^{-1}2^{3L}}{(c_1\log k)^{j-1}} \sqrt{\text{Var}\big((N_i)_{j-1}\big)}\bigg)^2.
\end{flalign}
By Lemma \ref{lemma:10}, we obtain
\begin{flalign}\label{eq:1a}
  &\text{Var}\Big((N_i)_{j-1}\Big) \nn \\
  &\le \big(c_1\log k (j-1)\big)^{j-1}\bigg(\frac{(2e)^{2\sqrt{c_1\log k(j-1)}}}{\pi\sqrt{c_1\log k (j-1)}}\vee1\bigg)\nn\\
  &\le (c_1c_0\log^2 k )^{j-1}\bigg(\frac{(2e)^{2\sqrt{c_1c_0\log^2 k}}}{\pi\sqrt{c_1c_0\log^2 k }}\vee1\bigg).
\end{flalign}
Substituting \eqref{eq:1a} into \eqref{eq:1b}, we obtain
\begin{flalign}
  \text{Var}\Big(g_L(N_i)\Big)\le & L  \sum_{j=2}^L \Big(\frac{2e^{-1}2^{3L}}{(c_1\log k)^{j-1}}\Big)^2 \text{Var}\Big((N_i)_{j-1}\Big)\nn \\
  &\lesssim  k^{2(c_0\log 8+\sqrt{c_0c_1}\log 2e)}\log k.
\end{flalign}
Furthermore, for $i\in I_1\cap I_1'$, we bound $\big|\mE \big(g_L(N_i)\big)\big|$ as follows:
\begin{flalign}
  \Big|\mE \big(g_L(N_i)\big)\Big|&=\left| \sum_{j=1}^L \frac{a_j}{(c_1\log k)^{j-1}}(nQ_i)^{j-1}-\log\frac{n}{c_1\log k}\right|\nn\\
  &\le \sum_{j=1}^L\frac{2e^{-1}2^{3L}}{(c_1\log k)^{j-1}}(c_1\log k)^{j-1}+\log\frac{n}{c_1\log k}\nn\\
  &\lesssim k^{c_0\log 8}\log k +\log n.
\end{flalign}

So far, we have all the ingredients we need to bound $\text{Var}\big(\frac{M_i}{m}g_L(N_i)\big)$. Note that  $P_i\le f(k)Q_i$, and $Q_i\le \frac{c_1\log k}{n}$ for $i\in I_1$.
First, we derive the following bound:
\begin{flalign}
  \text{Var}(\frac{M_i}{m})\text{Var}\big(g_L(N_i)\big)\lesssim & \frac{f(k)\log^2 kk^{2(c_0\log 8 +\sqrt{c_0c_1}\log 2e)}}{mn}\nn \\
  \lesssim & \frac{k f(k)}{mn \log^2 k},
\end{flalign}
if $2(c_0\log 8 +\sqrt{c_0c_1}\log 2e)<\frac{1}{2}$.

Secondly, we derive
\begin{flalign}
  \mE(\frac{M_i}{m})^2\text{Var}\big(g_L(N_i)\big)\lesssim& \frac{f^2(k)\log^3 k k^{2(c_0\log 8+\sqrt{c_0c_1}\log 2e)}}{n^2}\nn \\
  \lesssim& \frac{kf^2(k)}{n^2\log^2 k},
\end{flalign}
if $2(c_0\log 8 +\sqrt{c_0c_1}\log 2e)<\frac{1}{2}$.

Thirdly, we have
\begin{flalign}
  &\text{Var}(\frac{M_i}{m})\Big(\mE\big(g_L(N_i)\big)\Big)^2 \nn \\
  &\lesssim \frac{f(k)\log ^3k k^{2c_0\log 8}}{mn}+\frac{f(k)\log k\log ^2 n}{mn}\nn\\
  &\lesssim \frac{kf(k)}{mn \log^2 k} + \frac{k^{1-\epsilon}f(k)\log k}{mn}\nn\\
  &\lesssim \frac{kf(k)}{mn\log^2 k},
\end{flalign}
if $2c_0\log 8<\frac{1}{2}$ and $\log^2n \lesssim k^{1-\epsilon}$.

Combining these three terms together, we obtain
\begin{flalign}
  \text{Var}\bigg[\sum_{i\in I_1\cap I_1'}  \hat D_{2,i}\bigg|I_1, I_1'\bigg]\lesssim \frac{k^2f(k)}{mn\log^2 k} + \frac{k^2f^2(k)}{n^2\log^2 k}.
\end{flalign}
Due to the fact that $ \frac{k^2f(k)}{mn\log^2k}\lesssim \frac{k^2f^2(k)}{n^2\log^2 k}+\frac{k^2}{m^2\log^2k}$,
\begin{flalign}
  \mE\bigg[  \text{Var}\Big[\sum_{i\in I_1\cap I_1'}  \hat D_{2,i}\Big|I_1, I_1'\Big]\bigg]\lesssim \frac{f^2(k)k^2}{n^2\log^2 k} + \frac{k^2}{m^2\log^2k}.
\end{flalign}

%Therefore, we have
%\begin{flalign}
%\mE \mathcal{E}_1^2\lesssim \frac{k^2f^2(k)}{n^2\log^2k}+\frac{f(k)}{n}+\frac{kf(k)\log k}{mn}.
%\end{flalign}
\subsubsection{Bounds on $\text{Var}\Big[\sum_{i\in I_2\cap I_1'}\big( \hat D_{1,i}-\hat D_{2,i}\big)\Big|I_2,I_1'\Big]$}
Note that for $i\in I_2\cap I_1'$, $Q_i>\frac{c_3\log k}{n}$ and $P_i\le \frac{c_1'\log k}{m}$.
Following  steps similar to those in \cite{wu2014minimax}, it can be shown that
\begin{flalign}\label{eq:160}
  \text{Var}\bigg[\sum_{i\in I_2\cap I_1'} \hat D_{1,i}\bigg|I_2,I_1'\bigg]\lesssim \frac{k^2}{m^2\log^2k}.
\end{flalign}
We further consider $\text{Var}\left[\sum_{i\in I_2\cap I_1'} \hat D_{2,i}\right]$. By the definition of $\hat D_{2,i}$, for $i\in I_2\cap I_1'$, we have $\hat D_{2,i}=
\frac{M_i}{m} \Big(\log \frac{N_i+1}{n} -\frac{1}{2(N_i+1)}\Big)$. Therefore
\begin{flalign}\label{eq:161}
&\text{Var}\bigg[\sum_{i\in I_2\cap I_1'} \hat D_{2,i}\bigg|I_2,I_1'\bigg]\nn \\
&= \sum_{i\in I_2\cap I_1'}  \text{Var}\left[ \frac{M_i}{m} \Big(\log \frac{N_i+1}{n} -\frac{1}{2(N_i+1)}\Big)\right]\nn\\
&\le  2\sum_{i\in I_2\cap I_1'}  \text{Var}\left[ \frac{M_i}{m} \log \frac{N_i+1}{n} \right] \nn \\
& \qquad+2\sum_{i\in I_2\cap I_1'} \text{Var}\left[\frac{M_i}{m}\frac{1}{2(N_i+1)}\right].
\end{flalign}

The first term in \eqref{eq:161} can be bounded as follows:
\begin{flalign}\label{eq:164}
  &\sum_{i\in I_2\cap I_1'}  \text{Var}\left[\frac{M_i}{m} \log \frac{N_i+1}{n} \right]\nn\\
  &\le \sum_{i\in I_2\cap I_1'} \mE\left[\Big(\frac{M_i}{m} \log \frac{N_i+1}{n}-P_i\log \frac{nQ_i+1}{n}\Big)^2\right]\nn\\
  &=\sum_{i\in I_2\cap I_1'}  \mE\bigg[\Big(\frac{M_i}{m}\log \frac{N_i+1}{n}-\frac{M_i}{m}\log \frac{nQ_i+1}{n}\nn \\
  &\qquad\qquad\qquad+\frac{M_i}{m}\log  \frac{nQ_i+1}{n}  -P_i\log \frac{nQ_i+1}{n}\Big)^2\bigg]\nn\\
  &\le \sum_{i\in I_2\cap I_1'} 2\mE\left[\Big(\frac{M_i}{m}\Big)^2\Big( \log \frac{N_i+1}{n}-\log  \frac{nQ_i+1}{n} \Big)^2\right]\nn \\
  &\quad + \sum_{i\in I_2\cap I_1'} 2\mE\left[\Big( \big(\frac{M_i}{m}-P_i\big)\log  \frac{nQ_i+1}{n}  \Big)^2 \right].
\end{flalign}
%Note that we have dealt with the first term in \eqref{eq:164} in the proof of Proposition \ref{prop:upper-plug} with Binomial distribution.
Following similar steps as in \eqref{eq:var_plogq} with $c=1$, but for Poisson distribution, for $i\in I_2\cap I_1'$, $Q_i>\frac{c_3\log k}{n}$ and $P_i\le \frac{c_1'\log k}{m}$, we have the following bound on the first term in \eqref{eq:164}:
\begin{flalign}\label{eq:163}
  &\sum_{i\in I_2\cap I_1'} \mE\bigg[\Big(\frac{M_i}{m}\Big)^2\Big( \log \frac{N_i+1}{n}-\log \frac{nQ_i+1}{n} \Big)^2 \bigg] \nn \\
  &\lesssim  \sum_{i\in I_2\cap I_1'} \Big(\frac{P_i}{m} +P_i^2 \Big) \frac{1}{nQ_i}\lesssim \frac{f(k)}{n}+\frac{1}{m}.
 % =&\sum_{i\in I_2\cap I_1'} \frac{mP_i^2+P_i}{m} \mE \bigg[\left( \log \frac{N_i+1}{n}-\log Q_i\right)^2\bigg]\nn\\
%  =&\sum_{i\in I_2\cap I_1'} \frac{P_i(1+mP_i)}{m}\mE \bigg[\left( \log \frac{N_i+1}{n}-\log Q_i\right)^2\mathds 1_{\{N_i\le \frac{nQ_i}{2}\}}  + \left( \log \frac{N_i+1}{n}-\log Q_i\right)^2\mathds 1_{\{N_i> \frac{nQ_i}{2}\}}\bigg]\nn\\
%  \overset{(a)}{\lesssim} &\frac{k^2}{m^2\log ^2 k}+ \frac{1}{m},
 %
 %
 % \frac{\log k}{m} (\log^2n) P(N_1\le \frac{nQ_i}{2}) + \sum_{i\in I_2\cap I_1'}\frac{mP_i(1+mP_i)}{m^2}\mE\bigg[\left( \log \frac{N_i+1}{n}-\log Q_i\right)^2\mathds 1_{\{N_1> \frac{nQ_i}{2}\}}\bigg]\nn\\
 %&\overset{(b)}{\lesssim} \frac{\log k}{m}  k^{1-\frac{c_3(1-\log 2)}{2}} + \sum_{i\in I_2\cap I_1'}\frac{mP_i(1+mP_i)}{m^2}\mE\bigg[\left( \log \frac{N_i+1}{n}-\log Q_i\right)^2\mathds 1_{\{N_1> \frac{nQ_i}{2}\}}\bigg]\nn\\
\end{flalign}

We next bound the second term in \eqref{eq:164} as follow,
\begin{flalign}\label{eq:166}
  &\sum_{i\in I_2\cap I_1'} \mE\left[\Big( \big(\frac{M_i}{m}-P_i\big)\log  \frac{nQ_i+1}{n}  \Big)^2 \right] \nn \\
  &=\sum_{i\in I_2\cap I_1'} \frac{P_i}{m}\log^2 \Big(Q_i+\frac{1}{n}\Big) \nn\\
  &\le \sum_{i\in I_2\cap I_1'} \frac{P_i\log^2\frac{P_i}{f(k)}}{m}\nn\\
  &\le \sum_{i\in I_2\cap I_1'} \frac{2P_i(\log^2{P_i}+\log^2{f(k)})}{m}\nn\\
  &\overset{(a)}{\le} \frac{2k\frac{c_1'\log k}{m}\log^2(\frac{c_1'\log k}{m})}{m} + \frac{2\log^2 f(k)}{m}  \nn\\
  &\overset{(b)}{\lesssim} \frac{k^2}{m^2\log^2 k}+\frac{\log^2 f(k)}{m},
\end{flalign}
where $(a)$ is due to the facts that $x\log^2x$ is monotone increasing when $x$ is small and $P_i\le \frac{c_1'\log k}{m}$, and $(b)$ is due to the assumption that $\log m\lesssim \log k$.

Substituting \eqref{eq:166} and \eqref{eq:163} into \eqref{eq:164}, we obtain
\begin{flalign}\label{eq:167}
&\sum_{i\in I_2\cap I_1'}  \text{Var}\left[\frac{M_i}{m} \Big(\log \frac{N_i+1}{n}\Big) \right] \nn \\
&\lesssim  \frac{k^2}{m^2\log ^2k}+\frac{\log^2 f(k)}{m}+\frac{f(k)}{n}.
\end{flalign}

We then consider the second term in \eqref{eq:161}.
\begin{flalign}\label{eq:168}
  &\sum_{i\in I_2\cap I_1'} \text{Var}\left[\frac{M_i}{m}\Big(\frac{1}{2(N_i+1)}\Big)\right]\nn\\
  &=\sum_{i\in I_2\cap I_1'}\bigg(\mE^2[\frac{M_i}{m}] \text{Var}[\frac{1}{2(N_i+1)}] \nn \\
  &\quad + \text{Var}[\frac{M_i}{m}]\Big(\mE^2[\frac{1}{2(N_i+1)}]+\text{Var}[\frac{1}{2(N_i+1)}]\Big)\bigg).
\end{flalign}
In order to bound \eqref{eq:168}, we bound each term as follows. Note that $M_i\sim$Poi$(mP_i)$, and $N_i\sim$Poi$(nQ_i)$. Therefore, $\mE^2[\frac{M_i}{m}]=P_i^2$, $\text{Var}[\frac{M_i}{m}]=\frac{P_i}{m}$, and
\begin{flalign}
  &\text{Var}[\frac{1}{2(N_i+1)}]+\mE^2[\frac{1}{2(N_i+1)}] \nn \\
  &= \mE[\frac{1}{4(N_i+1)^2}]\nn\\
  &\le \mE[\frac{1}{(N_i+1)(N_i+2)}]\nn\\
  &=\sum_{i=0}^\infty \frac{1}{(i+1)(i+2)}\frac{e^{-nQ_i}(nQ_i)^i}{i!}\nn\\
  &=\sum_{i=0}^\infty \frac{1}{(nQ_i)^2}\frac{e^{-nQ_i}(nQ_i)^{i+2}}{(i+2)!}\nn\\
  &\le \frac{1}{(nQ_i)^2}.
\end{flalign}
Therefore, \eqref{eq:168} can be further upper bounded as follows:
\begin{flalign}\label{eq:170}
  &\sum_{i\in I_2\cap I_1'} \text{Var}\left[\frac{M_i}{m}\Big(\frac{1}{2(N_i+1)}\Big)\right]\nn \\
  &\le\sum_{i\in I_2\cap I_1'} \Big(P_i^2+\frac{P_i}{m}\Big)\frac{1}{(nQ_i)^2}\nn \\
  &\lesssim \frac{f(k)}{n\log k}+\frac{1}{m\log^2k}
  \lesssim \frac{f(k)}{n}+\frac{1}{m}.
\end{flalign}

Substituting \eqref{eq:170} and \eqref{eq:167} into \eqref{eq:161}, we obtain
\begin{flalign}
  &\text{Var}\left[\sum_{i\in I_2\cap I_1'} \hat D_{2,i}\bigg|I_2,I_1'\right]\lesssim \frac{k^2}{m^2\log^2k}+\frac{f(k)}{n}+\frac{\log^2f(k)}{m}.
\end{flalign}

Therefore,
\begin{flalign}
&\text{Var}\bigg[\sum_{i\in I_2\cap I_1'} \big( \hat D_{1,i}-\hat D_{2,i}\big)\bigg|I_2,I_1'\bigg]\nn \\
&\lesssim  \frac{k^2}{m^2\log^2k}+\frac{f(k)}{n}+\frac{\log^2f(k)}{m}.
\end{flalign}

\subsubsection{Bounds on $\text{Var}\Big[\sum_{i\in I_1\cap I_2'}\big( \hat D_{1,i}-\hat D_{2,i}\big)\Big|I_1,I_2'\Big]$}
We first note that given $i\in I_1\cap I_2'$, $P_i>\frac{c_3'\log k}{m}$, $Q_i\le \frac{c_1\log k}{n}$, and $\frac{P_i}{Q_i}\le f(k)$. Hence, $\frac{c_3'\log k}{m}<P_i\le \frac{c_1f(k)\log k}{n}$.
Following  steps similar to those  in \cite{wu2014minimax}, it can be shown that
\begin{flalign}\label{eq:162}
  &\text{Var}\bigg[\sum_{i\in I_1\cap I_2'} \hat D_{1,i}\bigg|I_1,I_2'\bigg] \nn \\
  &\le \frac{4}{m}+\frac{12k}{m^2}+\frac{4k}{c_3'm^2\log k}+\sum_{i\in I_1\cap I_2'} \frac{2P_i}{m}\log ^2 P_i.
\end{flalign}
Consider the last term $\sum_{i\in I_1\cap I_2'} \frac{2P_i}{m}\log ^2 P_i$ in \eqref{eq:162}, under the condition that $\frac{c_3'\log k}{m}<P_i\le \frac{c_1f(k)\log k}{n}$. Then,
\begin{flalign}
  \sum_{i\in I_1\cap I_2'} \frac{P_i}{m}\log ^2 P_i
  &\le \sum_{i\in I_1\cap I_2'} \frac{c_1f(k)\log k}{mn}\log^2 \frac{c_3'\log k}{m}\nn\\
  &\le \frac{c_1kf(k)\log k}{mn} \log^2 \frac{c_3'\log k}{m}\nn\\
  &\overset{(a)}{\lesssim} \frac{kf(k)\log k}{mn}  \log^2 m\nn\\
  &\overset{(b)}{\lesssim} \frac{kf(k)\log^3 k}{mn} \nn\\
  &\lesssim \frac{k^2f(k)}{mn\log^2k}\nn\\
  &\overset{(c)}{\lesssim }\frac{f^2(k)k^2}{n^2\log^2 k} + \frac{k^2}{m^2\log^2 k},
\end{flalign}
where $(a)$ is due to the assumption that $m\gtrsim \frac{k}{\log k}$, $(b)$ is due to the assumption that $\log m\le C\log k$, and $(c)$ is due to the fact that $2ab\le a^2+b^2$.
Therefore, we obtain
\begin{flalign}
&\text{Var}\left[\sum_{i\in I_1\cap I_2'} \hat D_{1,i}\bigg|I_1,I_2'\right] \nn \\
&\lesssim  \frac{\log^2 f(k)}{m} + \frac{f^2(k)k^2}{n^2\log^2 k} + \frac{k^2}{m^2\log^2 k}.
\end{flalign}

Following  steps similar to those in Appendix \ref{app:f21}, we can show that
\begin{flalign}
  \text{Var}\bigg[\sum_{i\in I_1\cap I_2'}\hat D_{2,i}\bigg|I_1,I_2'\bigg]\lesssim\frac{f^2(k)k^2}{n^2\log^2 k} + \frac{k^2}{m^2\log^2 k}.
\end{flalign}

Hence,
\begin{flalign}
&\text{Var}\bigg[\sum_{i\in I_1\cap I_2'} \big( \hat D_{1,i}-\hat D_{2,i} \big)\bigg|I_1,I_2'\bigg] \nn \\
&\lesssim  \frac{\log^2 f(k)}{m}+\frac{f^2(k)k^2}{n^2\log^2 k} + \frac{k^2}{m^2\log^2 k}.
\end{flalign}
\subsubsection{Bounds on $\text{Var}\Big[\sum_{i\in I_2\cap I_2'}\big( \hat D_{1,i}-\hat D_{2,i} \big) \Big|I_2,I_2'\Big]$}
We note that for $i\in I_2\cap I_2'$, $P_i>\frac{c_3'\log k}{m}$, $Q_i>\frac{c_3\log k}{n}$, and
\begin{flalign}
&\hat D_{1,i}-\hat D_{2,i}\nn \\
&=\frac{M_i}{m}\log \frac{M_i}{m} - \frac{1}{2m} -\frac{M_i}{m} \Big(\log \frac{N_i+1}{n} -\frac{1}{2(N_i+1)}\Big).
\end{flalign}
It can be shown that
\begin{flalign}\label{eq:179}
  &\text{Var}\bigg[\sum_{i\in I_2\cap I_2'} \big(\hat D_{1,i}-\hat D_{2,i}  \big) \bigg|I_2,I_2'\bigg]\nn\\
  &\le 2\text{Var}\bigg[\sum_{i\in I_2\cap I_2'} \frac{M_i}{m}\log \frac{M_i}{m}-\frac{M_i}{m} \log \frac{N_i+1}{n}   \bigg|I_2,I_2'\bigg] \nn \\
  &\quad+2\text{Var}\bigg[\sum_{i\in I_2\cap I_2'}\frac{M_i}{m}\frac{1}{2(N_i+1)}\bigg|I_2,I_2'\bigg].
\end{flalign}
Following  steps similar to those used in showing \eqref{eq:170}, we bound the second term in \eqref{eq:179} as follows:
\begin{flalign}\label{eq:180}
  \text{Var}\bigg[\sum_{i\in I_2\cap I_2'} \frac{M_i}{m}\frac{1}{2(N_i+1)}\bigg|I_2,I_2'\bigg]\lesssim \frac{f(k)}{n}+\frac{1}{m}.
\end{flalign}

We next bound the first term in \eqref{eq:179}, recall the decomposition in the proof of Proposition \ref{prop:upper-plug},
\begin{align}\label{eq:181}
&\text{Var}\bigg[\sum_{i\in I_2\cap I_2'} \frac{M_i}{m}\log \frac{M_i}{m}-\frac{M_i}{m} \log \frac{N_i+1}{n}   \bigg|I_2,I_2'\bigg]\nn\\
&\le  \sum_{i\in I_2\cap I_2'} \mathbb{E}\left[ \bigg(  \frac{M_i}{m} \log\frac{M_i/m}{(N_i+1)/n} -  P_i \log \frac{P_i}{(nQ_i+1)/n} \bigg)^2\right]\nn \\
&\le 3\sum_{i\in I_2\cap I_2'} \mathbb{E}\left[ \bigg(  \frac{M_i}{m} \Big(\log\frac{M_i}{m} - \log P_i\Big)\bigg)^2\right] \nn \\
&\quad + 3\sum_{i\in I_2\cap I_2'}\mathbb{E}\left[ \bigg( \frac{M_i}{m}\Big(\log \frac{N_i+1}{n}-\log \frac{nQ_i+1}{n}\Big) \bigg)^2\right]\nn\\
&\quad + 3\sum_{i\in I_2\cap I_2'}\mathbb{E}\left[ \bigg( \Big(\frac{M_i}{m}-P_i\Big) \log \frac{P_i}{(nQ_i+1)/n}\bigg)^2\right].
\end{align}
The first and the second terms can be upper bounded by similar steps as in \eqref{eq:var_plogp} and \eqref{eq:163} with $c=1$, note that the $M_i$ and $N_i$ are Poisson random variables.

For the third term in \eqref{eq:181}, we derive the following bound:
\begin{flalign}\label{eq:186}
&\sum_{i\in I_2\cap I_2'}\mE\bigg[\bigg(\Big(\frac{M_i}{m}-P_i\Big)\log \frac{P_i}{Q_i+1/n}\bigg)^2\bigg]\nn \\
&=\sum_{i\in I_2\cap I_2'}\frac{P_i}{m}\log ^2\frac{P_i}{Q_i+1/n} \lesssim \frac{\log^2f(k)}{m},
\end{flalign}
where the last inequality is because
\begin{flalign}\label{eq:187}
  &P_i\log ^2\frac{P_i}{Q_i+1/n} \nn \\
  &=  {P_i}\Big(\log \frac{P_i}{Q_i+1/n}\Big)^2 \mathds 1_{\{1 \le \frac{P_i}{Q_i+1/n}\le f(k)\}} \nn \\
  & \quad +P_i \Big(\log \frac{P_i}{Q_i+1/n}\Big)^2\mathds 1_{\{\frac{P_i}{Q_i+1/n}\le 1 \}}\nn \\
  &\overset{(a)}{\le}  P_i\log^2 f(k) \nn \\
  &\quad+ \frac{Q_i(1+c_3/\log k)P_i}{Q_i(1+c_3/\log k)}\Big(\log \frac{P_i}{Q_i(1+c_3/\log k)}\Big)^2\nn\\ %\Big(Q_i(1+\frac{c_3}{\log k})\Big)
  &\overset{(b)}{\lesssim}  P_i\log^2 f(k) + Q_i(1+c_3/\log k),
\end{flalign}
where $(a)$ follows from the condition $Q_i>\frac{c_3\log k}{n}$; and $(b)$ is because the $x\log^2 x$ is bounded by a constant on the interval $[0,1]$.

%
%We next bound the third term in \eqref{eq:181}. Note that $\frac{\hat x-x}{\hat x}\le \log \frac{\hat x}{x}\le \frac{\hat x-x}{x}$, and therefore
%\begin{flalign}\label{eq:188}
%  &\sum_{i\in I_2\cap I_2'}\mE\bigg[\bigg(\frac{M_i}{m}(\log \frac{N_i+1}{n}-\log Q_i)\bigg)^2\bigg]\nn\\
%  &=\sum_{i\in I_2\cap I_2'}\mE\bigg[\bigg(\frac{M_i}{m}(\log \frac{N_i+1}{nQ_i})\bigg)^2\mathds 1_{\{N_i\le \frac{nQ_i}{2}\}}+\bigg(\frac{M_i}{m}(\log \frac{N_i+1}{nQ_i})\bigg)^2\mathds 1_{\{N_i> \frac{nQ_i}{2}\}} \bigg]\nn\\
%&\overset{(a)}{\le} \sum_{i\in I_2\cap I_2'}\mE\bigg[\bigg(\frac{M_i}{m}\bigg)^2\bigg(\frac{N_i+1-nQ_i}{1}\bigg)^2\mathds 1_{\{N_i\le \frac{nQ_i}{2}\}}  + \bigg(\frac{M_i}{m}\bigg)^2\bigg( \frac{N_i+1-nQ_i}{\frac{nQ_i}{2}}\bigg)^2\mathds 1_{\{N_i> \frac{nQ_i}{2}\}} \bigg]\nn\\
%&\overset{(b)}{\le} \sum_{i\in I_2\cap I_2'} (P_i^2 +\frac{P_i}{m})\bigg(2nQ_iP(N_i\le \frac{nQ_i}{2})+ \frac{8}{nQ_i}\bigg)\nn\\
%&\le \sum_{i\in I_2\cap I_2'} (P_i^2 +\frac{P_i}{m})\bigg(2nQ_ie^{-\frac{nQ_i(1-\log2)}{2}}+ \frac{8}{nQ_i}\bigg)\nn\\
%&\overset{(c)}{\lesssim} \sum_{i\in I_2\cap I_2'} (P_i^2 +\frac{P_i}{m})\frac{1}{nQ_i}\nn\\
%&\lesssim \frac{f(k)}{n}+\frac{kf(k)}{mn}.
%\end{flalign}
%where $(a)$ is due to the mean value theorem and the fact $N_i+1\ge 1$; $(b)$ uses the Chernoff bound of Poisson distribution; $(c)$ is due to the fact that $x^2e^{-\frac{x(1-\log 2)}{2}}$ is bounded by a constant for $x>0$.

Combining \eqref{eq:var_plogp}, \eqref{eq:163} and \eqref{eq:186}, we obtain
\begin{flalign}
  &\text{Var}\bigg[\sum_{i\in I_2\cap I_2'}  \Big (\hat D_{1,i}-\hat D_{2,i} \Big) \bigg|I_2,I_2'\bigg]\nn \\
  &\lesssim  \frac{k}{m^2}+\frac{\log^2 f(k)}{m}+\frac{f(k)}{n}.
\end{flalign}

%By Efron-Stein inequality \cite[Theorem 3.1]{boucheron2013concentratio}, we have
%\begin{flalign}
%  &\text{Var}\bigg[D'(M,N)   |I_2,I_2'\bigg]\nn\\
%  &\le \frac{m}{2}\mE\bigg[  (D'(M,N)-D'(M',N))^2  |I_2,I_2'\bigg] +\frac{n}{2}\mE\bigg[  (D'(M,N)-D'(M,N'))^2  |I_2,I_2' \bigg],
%\end{flalign}
%where $M'$ and $N'$ are the histogram of $(X_1,\ldots,X_{m-1},X_m')$ and $(Y_1,\ldots,Y_{n-1},Y_n')$, respectively. Here, $X_m'$ is an independent sample generated from distribution $P$, and $Y_m'$ is an independent sample generated from distribution $Q$. We denote the histogram of $(X_1,\ldots,X_{m-1})$ as $\widetilde M=(\widetilde M_1,\ldots,\widetilde M_k)$, and denote the histogram of $(Y_1,\ldots,Y_{n-1})$ as $\widetilde N=(\widetilde N_1,\ldots,\widetilde N_k)$

\subsection{Bounds on the Bias:}
Consider the $\mathcal{E}_1$ term in \eqref{eq:100}. Based on the definition of the set $I_1$, $\mathcal{E}_1$ can be written as follows:
\begin{flalign}
  \mathcal{E}_1=\sum_{i\in I_1\cap(I_1'\cup I_2')} \left(\frac{M_i}{m}g_L(N_i)-P_i\log Q_i\right).
\end{flalign}
%Hence, $\big|\mE [\mathcal{E}_1|I_1,I_1',I_2')] \big|=\big|\sum_{i\in I_1\cap(I_1'\cup I_2')}\mE [\frac{M_i}{m}g_L(N_i)-P_i\log Q_i\big|I_1,I_1',I_2' ]\big|$.
For $i\in I_1\cap(I_1'\cup I_2')$, we have $0\le Q_i\le \frac{c_1\log k}{n}$ and $\Big|P_i\frac{\mu_L(Q_i)}{Q_i}-\frac{P_i}{Q_i}Q_i\log Q_i\Big|\lesssim \frac{f(k)}{n\log k}$. Therefore,
\begin{flalign}
&\left|\mE \Big[\frac{M_i}{m}g_L(N_i)-P_i\log Q_i\Big|I_1,I_1',I_2'\Big] \right| \nn \\
&=\left|P_i\frac{\mu_L(Q_i)}{Q_i}-P_i\log Q_i\right|
%  &=|\frac{P_i}{Q_i}{\mu_L(Q_i)}-\frac{P_i}{Q_i}Q_i\log Q_i|\nn\\
 % &=\frac{P_i}{Q_i}|\mu_L(Q_i)-Q_i\log Q_i|\nn\\
  \lesssim \frac{f(k)}{n\log k}.
\end{flalign}
Hence, $\big|\mE [\mathcal{E}_1|I_1,I_1',I_2']\big|$ can be bounded as follows:
\begin{flalign}
  &\big|\mE(\mathcal{E}_1|I_1,I_1',I_2')\big| \nn \\
  &\le  \sum_{i\in I_1\cap(I_1'\cup I_2')} \left|\mE\Big[\frac{M_i}{m}g_L(N_i)-P_i\log Q_i\Big|I_1,I_1',I_2'\Big]\right|\nn \\
  &\lesssim  \frac{kf(k)}{n\log k}.
\end{flalign}
Therefore,
\begin{flalign}
\mE\Big[\mE^2\big[\mathcal{E}_1|I_1,I_1',I_2'\big]\Big]\lesssim \frac{k^2f^2(k)}{n^2\log^2 k}.
\end{flalign}

Now consider the $\mathcal{E}_2$ term in \eqref{eq:100}. Based on how we define $I_2$, $\mathcal E_2$ can be written as follows:
\begin{small}
\begin{flalign}
  \mathcal E_2=&\sum_{i\in I_2\cap(I_1'\cup I_2')}\left( \frac{M_i}{m} \Big(\log \frac{N_i+1}{n} -\frac{1}{2(N_i+1)}\Big)-P_i\log Q_i\right)\nn\\
=&\sum_{i\in I_2\cap(I_1'\cup I_2')} \bigg(\Big(\frac{M_i}{m}-P_i\Big)\log Q_i \nn \\
 &\qquad \qquad\qquad \qquad+ \frac{M_i}{m}\log \frac{N_i+1}{nQ_i}-\frac{P_i}{2(N_i+1)}\bigg).
\end{flalign}
\end{small}
Taking expectations on both sides, we obtain
\begin{flalign}
&\mE\big[\mathcal E_2\big|I_2,I_1',I_2'\big]\nn \\
&=\sum_{i\in I_2\cap(I_1'\cup I_2')}\mE\left[P_i\log \frac{N_i+1}{nQ_i}-\frac{P_i}{2(N_i+1)}\bigg|I_1,I_1',I_2'\right].
\end{flalign}
Consider $\sum_{i\in I_2\cap(I_1'\cup I_2')}\mE\Big[P_i\log \frac{N_i+1}{nQ_i}\Big|I_1,I_1',I_2'\Big]$. Note that for any $x>0$,
\begin{flalign}
  \log x \le (x-1)-\frac{1}{2}(x-1)^2+\frac{1}{3}(x-1)^3.
\end{flalign}
Since $N_i\sim$Poi$(nQ_i)$,
\begin{small}
\begin{flalign}
  &\mE\left[P_i\log \frac{N_i+1}{nQ_i}\right] \nn \\
  &\le P_i\mE\left[\Big(\frac{N_i+1}{nQ_i}-1\Big)-\frac{1}{2}\Big(\frac{N_i+1}{nQ_i}-1\Big)^2
  +\frac{1}{3}\Big(\frac{N_i+1}{nQ_i}-1\Big)^3\right]\nn\\
  &=P_i\Big(\frac{1}{2nQ_i}+\frac{5}{6(nQ_i)^2}+\frac{1}{3(nQ_i)^3}\Big).
\end{flalign}
\end{small}
It can be shown that
\begin{flalign}
  \mE\left[\frac{P_i}{2(N_i+1)}\right]=\frac{P_i}{2nQ_i}(1-e^{-nQ_i}).
\end{flalign}
Hence, we obtain
\begin{flalign}\label{eq:152a}
&\mE\big[\mathcal E_2\big|I_2,I_1',I_2'\big] \nn \\
&\le \sum_{i\in I_2\cap(I_1'\cup I_2')} P_i\Big(\frac{1}{2nQ_i}+\frac{5}{6(nQ_i)^2}+\frac{1}{3(nQ_i)^3}\Big)\nn \\
&\qquad\qquad \qquad\qquad -\frac{P_i}{2nQ_i}(1-e^{-nQ_i})\nn\\
&\overset{(a)}{\lesssim} \sum_{i\in I_2\cap(I_1'\cup I_2')}\frac{P_i}{n^2Q_i^2}\nn\\
&\lesssim \frac{kf(k)}{n\log k}.
\end{flalign}
where $(a)$ is due to the fact that $xe^{-x}$ is bounded by a constant for $x\ge 0$.

We further derive a lower bound on $\mE\big[\mathcal E_2\big|I_2,I_1',I_2'\big]$. For any $x\ge \frac{1}{5}$, it can be shown that
\begin{flalign}\label{eq:log_lower}
  \log x \ge (x-1)-\frac{1}{2}(x-1)^2+\frac{1}{3}(x-1)^3-(x-1)^4.
\end{flalign}
Define the following event: $A_i=\{\frac{N_i}{nQ_i}>\frac{1}{5}\}$. We then rewrite $\mE\big[\mathcal E_2\big|I_2,I_1',I_2'\big]$ as follows:
\begin{flalign}
&\mE\big[\mathcal E_2\big|I_2,I_1',I_2'\big] \nn \\
&=\sum_{i\in I_2\cap(I_1'\cup I_2')} \mE\bigg[P_i\log\frac{N_i+1}{nQ_i}\mathds 1_{\{A_i\}}\nn \\
&\qquad \qquad \qquad+P_i\log\frac{N_i+1}{nQ_i}\mathds 1_{\{A_i^c\}}-\frac{P_i}{2(N_i+1)}\bigg|I_2,I_1',I_2'\bigg]\nn\\
&\ge \sum_{i\in I_2\cap(I_1'\cup I_2')} \mE\left[P_i\log\frac{N_i+1}{nQ_i}\mathds 1_{\{A_i\}}-\frac{P_i}{2(N_i+1)}\bigg|I_2,I_1',I_2'\right]\nn \\
&\quad- \sum_{i\in I_2\cap(I_1'\cup I_2')}\left|\mE\left[ P_i\log\frac{N_i+1}{nQ_i}\mathds 1_{\{A_i^c\}}\bigg|I_2,I_1',I_2'\right]\right|.
\end{flalign}
Using \eqref{eq:log_lower}, we obtain
\begin{flalign}
  &\mathbb{E}\left[ P_i\log \Big( \frac{N_i+1}{nQ_i}\Big)\mathds{1}_{\{A_i\}}\bigg|I_2,I_1',I_2'\right] \nn\\
  &\ge\mathbb{E}\bigg[P_i \Big(\big(\frac{N_i+1}{nQ_i}-1\big)-\frac{1}{2}\big(\frac{N_i+1}{nQ_i}-1\big)^2 \nn \\
  &\quad +\frac{1}{3}\big(\frac{N_i+1}{nQ_i}-1\big)^3 -\big(\frac{N_i+1}{nQ_i}-1\big)^4\Big)\mathds{1}_{\{A_i\}}\bigg|I_2,I_1',I_2'\bigg].
\end{flalign}
Note that
\begin{flalign}
  &\mE\left[ \Big(\frac{N_i+1}{nQ_i}-1\Big)\mathds{1}_{\{A_i\}}\bigg|I_2,I_1',I_2'\right] \nn \\
  &=\mE\left[ \Big(\frac{N_i+1}{nQ_i}-1\Big)\bigg|I_2,I_1',I_2'\right]\nn \\
  &\quad -\mE\left[ \Big(\frac{N_i+1}{nQ_i}-1\Big)\mathds{1}_{\{A_i^c\}}\bigg|I_2,I_1',I_2'\right]\nn\\
  &\overset{(a)}{\ge} \mE\left[ \Big(\frac{N_i+1}{nQ_i}-1\Big)\bigg|I_2,I_1',I_2'\right]\nn\\
  &=\frac{1}{nQ_i},
\end{flalign}
where $(a)$ follows because $(\frac{N_i+1}{nQ_i}-1)\mathds{1}_{\{A_i^c\}} \le 0$.
Similarly,
\begin{flalign}
& \mathbb{E}\left[ \Big(\frac{N_i+1}{nQ_i}-1\Big)^3\mathds{1}_{\{A_i\}}\bigg|I_2,I_1',I_2'\right] \nn \\
&\ge  \mathbb{E}\left[ \Big(\frac{N_i+1}{nQ_i}-1\Big)^3\right]=\frac{4}{(nQ_i)^2} +\frac{1}{(nQ_i)^3}.
\end{flalign}
For the term $\mE\left[\Big(\frac{N_i+1}{nQ_i}-1\Big)^2\bigg|I_2,I_1',I_2'\right]$, it can be shown that
\begin{flalign}
  &\mE\left[\Big(\frac{N_i+1}{nQ_i}-1\Big)^2\mathds{1}_{\{A_i\}}\bigg|I_2,I_1',I_2'\right] \nn \\
  &\le\mE\left[\Big(\frac{N_i+1}{nQ_i}-1\Big)^2\bigg|I_2,I_1',I_2'\right]\nn \\
  &=\frac{1}{nQ_i}+\frac{1}{(nQ_i)^2}.
\end{flalign}
Similarly, it can be shown that
\begin{flalign}
  &\mE\left[\Big(\frac{N_i+1}{nQ_i}-1\Big)^4\mathds{1}_{\{A_i\}}\bigg|I_2,I_1',I_2'\right]\nn \\
  &\le  \mE\left[\Big(\frac{N_i+1}{nQ_i}-1\Big)^4\bigg|I_2,I_1',I_2'\right]\nn \\
  &= \frac{1+3nQ_i}{(nQ_i)^3}+\frac{10}{(nQ_i)^3}+\frac{1}{(nQ_i)^4}.
\end{flalign}
Combining these results together, we obtain
\begin{flalign}\label{eq:160a}
&\mathbb{E}\left[ P_i\log \Big(\frac{N_i+1}{nQ_i}\Big)\mathds{1}_{\{A_i\}}\bigg|I_2,I_1',I_2'\right] \nn \\
&\ge  \frac{P_i}{2nQ_i}-\frac{13P_i}{6(nQ_i)^2}-\frac{32P_i}{3(nQ_i)^3}-\frac{P_i}{(nQ_i)^4}.
\end{flalign}

%Hence, we have
%\begin{flalign}
%  \mathbb{E}[P_i\log\frac{N_i+1}{nQ_i}\mathds 1_{\{A_i^c\}}|I_2,I_1',I_2']
%  \gtrsim -\frac{kf(k)}{n\log k}
%\end{flalign}

From the previous results, we know that
\begin{flalign}\label{eq:163a}
  \mE\left[-\frac{P_i}{2(N_i+1)}\bigg|I_2,I_1',I_2'\right]=-\frac{P_i}{2nQ_i}(1-e^{-nQ_i}).
\end{flalign}
Combining \eqref{eq:160a} and \eqref{eq:163a}, it can be shown that
\begin{flalign}\label{eq:163b}
&\sum_{i\in I_2\cap(I_1'\cup I_2')} \mE\left[P_i\log \Big(\frac{N_i+1}{nQ_i}\Big)\mathds{1}_{\{A_i\}}-\frac{P_i}{2(N_i+1)}\bigg|I_2,I_1',I_2'\right]\nn\\
&\ge  \sum_{i\in I_2\cap(I_1'\cup I_2')}  \bigg(\frac{P_i}{2nQ_i}-\frac{13P_i}{6(nQ_i)^2}-\frac{32P_i}{3(nQ_i)^3} \nn \\
& \qquad \qquad \qquad \qquad-\frac{P_i}{(nQ_i)^4} -\frac{P_i}{2nQ_i}(1-e^{-nQ_i})\bigg)\nn\\
&=  \sum_{i\in I_2\cap(I_1'\cup I_2')} \bigg(-\frac{13P_i}{6(nQ_i)^2}-\frac{32P_i}{3(nQ_i)^3} \nn \\
&\qquad \qquad \qquad \qquad -\frac{P_i}{(nQ_i)^4} +\frac{P_i}{2nQ_i}e^{-nQ_i}\bigg)\nn \\
&\gtrsim  -\frac{kf(k)}{n\log k},
\end{flalign}
%We further bound the absolute value of the right hand side of \eqref{eq:163b} as follows:
%\begin{flalign}\label{eq:164a}
%  \left|\sum_{i\in I_2\cap(I_1'\cup I_2')} \Big(-\frac{13P_i}{6(nQ_i)^2}-\frac{32P_i}{3(nQ_i)^3}-\frac{P_i}{(nQ_i)^4} +\frac{P_i}{2nQ_i}e^{-nQ_i}\Big)\right|\lesssim \frac{kf(k)}{n\log k},
%\end{flalign}
where we use the facts that $\frac{P_i}{Q_i}\le f(k)$, $nQ_i>c_3\log k$, and $xe^{-x}$ is upper bounded by a constant for any value of $x>0$.

For the $\mathbb{E}\big[P_i\log\frac{N_i+1}{nQ_i}\mathds 1_{\{A_i^c\}}\big|I_2,I_1',I_2'\big]$ term, it can be shown that
\begin{flalign}\label{eq:162a}
  &\sum_{i\in I_2\cap(I_1'\cup I_2')}\bigg|\mathbb{E}\left[P_i\log\frac{N_i+1}{nQ_i}\mathds 1_{\{A_i^c\}}\bigg|I_2,I_1',I_2'\right] \bigg| \nn \\
  &\overset{(a)}{\le}\sum_{i\in I_2\cap(I_1'\cup I_2')}P_i\log (nQ_i) P(A_i^c)\nn\\
  &\overset{(b)}{\le}\sum_{i\in I_2\cap(I_1'\cup I_2')}\frac{P_i}{(nQ_i)^2}(nQ_i)^2\log (nQ_i) e^{-(1-\frac{\log(5e)}{5})nQ_i}\nn\\
  &\overset{(c)}{\lesssim}  \frac{kf(k)}{n\log k}.
\end{flalign}
where $(a)$ is due to the fact that $N_i+1\ge 1$, and the fact that $Q_i>\frac{c_3\log k}{n}$, hence $|\log\frac{N_i+1}{nQ_i}|\le \log (nQ_i)$ for large $k$; $(b)$ is due to the Chernoff bound, where$1-\frac{\log(5e)}{5}>0$; $(c)$ is due to the fact that $x^2\log x e^{-(1-\frac{\log(5e)}{5})x}$ is bounded by a constant for $x>1$, and the fact that $nQ_i>c_3\log k$. Thus, \eqref{eq:163b} and \eqref{eq:162a} yield
\begin{flalign}\label{eq:last}
  \mE\big[\mathcal E_2\big|I_2,I_1',I_2'\big]
  \gtrsim -\frac{kf(k)}{n\log k}.
\end{flalign}

Combining \eqref{eq:152a} and \eqref{eq:last}, we obtain,
\begin{flalign}
  \left|\mE\big[\mathcal E_2\big|I_2,I_1',I_2'\big]\right|
  \lesssim \frac{kf(k)}{n\log k}.
\end{flalign}

For the constant $c_0$, $c_1$, $c_2$ and $c_3,$, note that $\log m \le C\log k$ for some constant $C$, we can choose $c_1=50(C+1)$, $c_2=e^{-1}c_1$, $c_3 = e^{-1}c_2$, such that $c_1-c_2\log\frac{ec_1}{c_2}-1>C$, $c_3-c_2\log\frac{ec_3}{c_2}-1>C$ and $\frac{c_3(1-\log 2)}{2}+1-C>0$ hold simultaneously. Also, we can choose $c_0>0$ sufficiently small, satisfying condition $2c_0\log 8<\frac{1}{2}$ and $2(c_0\log 8 +\sqrt{c_0c_1}\log 2e)<\frac{1}{2}$. Thus, we show the existence of $c_0$, $c_1$ and $c_2$.

\bibliographystyle{IEEEbib}
\bibliography{minimax}

\begin{IEEEbiographynophoto}{Yuheng Bu}(S'16) received the M.S. degree in Electrical and Computer Engineering from University of Illinois at Urbana-Champaign, Champaign, IL, USA,
in 2016, and the B.S. degree (with honors) in Electrical Engineering from Tsinghua University, Beijing, China in 2014. He is currently working toward the Ph.D. degree in
the Coordinated Science Laboratory and the Department of Electrical and Computer Engineering, University of Illinois at Urbana-Champaign. His research
interests include detection and estimation theory, machine learning and information theory.
\end{IEEEbiographynophoto}

\begin{IEEEbiographynophoto}{Shaofeng Zou}(S'14-M'16) received the Ph.D. degree in Electrical and Computer Engineering from Syracuse University in 2016. He received the B.E.\ degree (with honors) from Shanghai Jiao Tong University, Shanghai, China, in 2011. Since July 2016, he has been a postdoctoral research associate at University of Illinois at Urbana-Champaign. Dr. Zou's research interests include information theory, machine learning, and statistical signal processing. He received the National Scholarship from the Ministry of Education of China in 2008. He also received the award of the outstanding graduate of Shanghai in 2011.
\end{IEEEbiographynophoto}

\begin{IEEEbiographynophoto}{Yingbin Liang} (S'01-M'05-SM'16) received the Ph.D. degree in Electrical Engineering from the University of Illinois at Urbana-Champaign in 2005. In 2005-2007, she was working as a postdoctoral research associate at Princeton University. In 2008-2009, she was an assistant professor at University of Hawaii. In 2010-2017, she was an assistant and then an associate professor at Syracuse University. Since August 2017, she has been an associate professor at the Department of Electrical and Computer Engineering at the Ohio State University. Dr. Liang's research interests include machine learning, statistical signal processing, optimization, information theory, and wireless communication and networks.

Dr. Liang was a Vodafone Fellow at the University of Illinois at Urbana-Champaign during 2003-2005, and received the Vodafone-U.S. Foundation Fellows Initiative Research Merit Award in 2005. She also received the M. E. Van Valkenburg Graduate Research Award from the ECE department, University of Illinois at Urbana-Champaign, in 2005. In 2009, she received the National Science Foundation CAREER Award, and the State of Hawaii Governor Innovation Award. In 2014, she received EURASIP Best Paper Award for the EURASIP Journal on Wireless Communications and Networking. She served as an Associate Editor for the Shannon Theory of the IEEE Transactions on Information Theory during 2013-2015.
\end{IEEEbiographynophoto}

\begin{IEEEbiographynophoto}{Venugopal V. Veeravalli} (M'92-SM'98-F'06) received the B.Tech. degree (Silver Medal Honors) from the Indian Institute of Technology, Bombay, India, in 1985, the M.S. degree from Carnegie Mellon University, Pittsburgh, PA, USA, in 1987, and the Ph.D. degree from the University of Illinois at Urbana-Champaign, Champaign, IL, USA, in 1992, all in electrical engineering.

He joined the University of Illinois at Urbana-Champaign in 2000, where he is currently the Henry Magnuski Professor in the Department of Electrical and Computer Engineering, and where he is also affiliated with the Department of Statistics, the Coordinated Science Laboratory, and the Information Trust Institute. From 2003 to 2005, he was a Program Director for communications
research at the U.S. National Science Foundation in Arlington, VA, USA. He has previously held academic positions at Harvard University, Rice University, and Cornell University, and has been on sabbatical at MIT, IISc Bangalore, and Qualcomm, Inc. His research interests include statistical signal processing, machine learning, detection and estimation theory, information theory, and stochastic control, with applications to sensor networks, cyberphysical systems, and wireless communications. A recent emphasis of his research has been on signal processing and machine learning for data science applications.

Prof. Veeravalli was a Distinguished Lecturer for the IEEE Signal Processing Society during 2010 and 2011. He has been on the Board of Governors of the IEEE Information Theory Society. He has been an Associate Editor for Detection and Estimation for the IEEE Transactions on Information Theory and for the IEEE Transactions on Wireless Communications. The awards
he received for research and teaching are the IEEE Browder J. Thompson Best Paper Award, the National Science Foundation CAREER Award, and the Presidential
Early Career Award for Scientists and Engineers, and the Wald Prize in Sequential Analysis.
\end{IEEEbiographynophoto}

\end{document}